\documentclass[acmtog]{acmart}

\usepackage{booktabs}

\usepackage{geometrycollective}
\usepackage{enumitem}
\usepackage{blkarray}
\usepackage{hyphenat}

\renewcommand{\vec}[1]{\mathbf{#1}}
\newcommand{\vn}{\vec{n}}
\renewcommand{\vv}{\vec{v}}

\newcommand{\ve}{\vec{e}}

\newcommand{\Tcal}{\mathcal{T}}
\newcommand{\Scal}{\mathcal{S}}

\setcopyright{cc}
\setcctype{by}
\acmJournal{TOG}
\acmYear{2026} \acmVolume{45} \acmNumber{4} \acmArticle{57}
\acmMonth{7} \acmDOI{10.1145/3811358}

\begin{teaserfigure}
   \centering
   \includegraphics{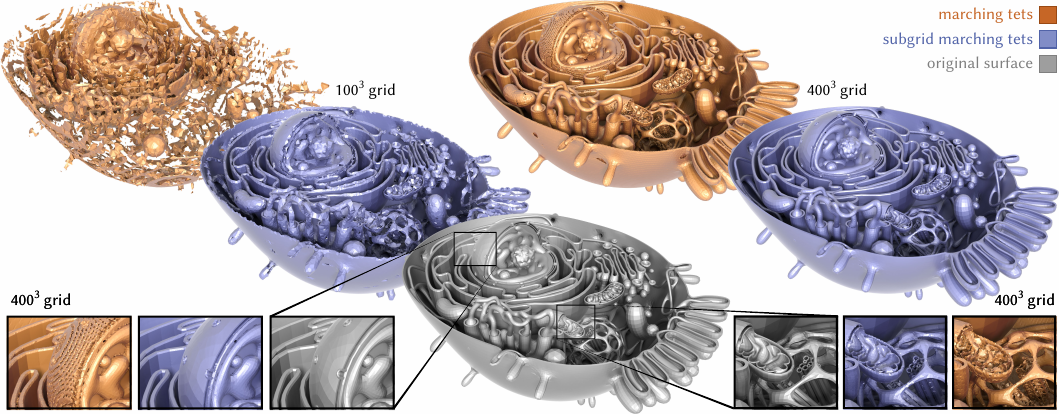}
   \caption{Just as dual contouring extends classic marching algorithms to capture sharp geometric features, our \emph{subgrid} extension enables marching algorithms to capture fine topological features below the grid resolution.  Here we reconstruct a cell model (gray) using standard marching tetrahedra (orange) versus our subgrid marching tetrahedra (blue), using dual variants of both algorithms.  Even on an extremely coarse grid \figloc{(top left)}, our subgrid method does a much better job of reconstructing thin sheets and fine features.  At higher resolution \figloc{(top right)}, we better resolve small details---without adaptive sampling or refinement.  \label{fig:teaser}}
\end{teaserfigure}

\begin{document}
\title{Subgrid Marching Tetrahedra}
\author{Hossein Baktash}
\email{hbaktash@andrew.cmu.edu}
\orcid{0000-0001-8571-6363}
\affiliation{
  \institution{Carnegie Mellon University}
  \streetaddress{5000 Forbes Ave}
  \city{Pittsburgh}
  \state{PA}
  \postcode{15213}
  \country{USA}
}
\author{Mark Gillespie}
\email{mark.gillespie81@gmail.com}
\orcid{0009-0000-5645-9636}
\affiliation{
  \institution{Inria}
  \city{Palaiseau}
  \country{France}
}
\affiliation{
  \institution{University of Utah}
  \city{Salt Lake City}
  \country{USA}
}
\author{Keenan Crane}
\email{keenanc@andrew.cmu.edu}
\orcid{0000-0003-2772-7034}
\affiliation{
  \institution{Carnegie Mellon University}
  \streetaddress{5000 Forbes Ave}
  \city{Pittsburgh}
  \state{PA}
  \postcode{15213}
  \country{USA}
}
\affiliation{
  \institution{Roblox}
  \city{San Mateo}
  \state{CA}
  \postcode{94403}
  \country{USA}
}

\renewcommand\shortauthors{Hossein Baktash, Mark Gillespie, and Keenan Crane}

\begin{abstract}
   We describe a method for recovering a manifold, intersection-free triangle mesh from the points where edges of a tetrahedral grid pierce a continuous surface.
  Unlike classic marching cubes or tets, our \emph{subgrid marching} scheme allows arbitrarily many surface patches within a single cell, capturing fine features and thin sheets.
  Moreover, it requires neither a well-defined inside/outside (allowing surfaces with boundary), nor consistently-oriented input geometry.
  Yet we retain the local, parallel nature of classic marching: reconstruction is performed independently per tet, yielding a conforming mesh across tet boundaries.
  Our key innovation is a generalization of \emph{normal coordinates} from geometric topology, which encode surface connectivity via arbitrary integer intersection counts along each grid edge.
  This encoding sidesteps the usual Nyquist--Shannon limit, putting no lower bound on the size of features that can be resolved on a fixed grid.
  In practice, for similar compute time and equal grid resolution---or even an equal number of output triangles---meshes produced by subgrid marching are far more accurate than those from classic marching.
  Beyond standard contouring, our method can be used to convert polygon soup into a manifold, intersection-free mesh.
\end{abstract}

\begin{CCSXML}
<ccs2012>
   <concept>
       <concept_id>10010147.10010371.10010396.10010398</concept_id>
       <concept_desc>Computing methodologies~Mesh geometry models</concept_desc>
       <concept_significance>500</concept_significance>
       </concept>
   <concept>
       <concept_id>10010147.10010178.10010224.10010240.10010242</concept_id>
       <concept_desc>Computing methodologies~Shape representations</concept_desc>
       <concept_significance>300</concept_significance>
       </concept>
   <concept>
       <concept_id>10010147.10010178.10010224.10010245.10010254</concept_id>
       <concept_desc>Computing methodologies~Reconstruction</concept_desc>
       <concept_significance>100</concept_significance>
       </concept>
 </ccs2012>
\end{CCSXML}

\ccsdesc[500]{Computing methodologies~Mesh geometry models}
\ccsdesc[300]{Computing methodologies~Shape representations}
\ccsdesc[100]{Computing methodologies~Reconstruction}

\keywords{Marching Tetrahedra, Topology, Iso Surfaces, Surface Reconstruction }

\maketitle

\section{Introduction and Related Work}
\label{sec:IntroductionAndRelatedWork}

\emph{Isosurface contouring} is the three-dimensional analog of root finding: given a continuous function \(f: \mathbb{R}^3 \to \mathbb{R}\), one seeks a polygonal approximation of the zero set
\begin{equation}
   \label{eq:ZeroSet}
   \Scal := \{ x \in \mathbb{R}^3 \mid f(x) = 0 \}.
\end{equation}
This operation is fundamental to geometry processing, enabling conversion from implicit to explicit surface representations---\eg{}, from a \emph{signed distance function (SDF)} to a polygonal mesh.

We generalize this setup considerably: we do not require an SDF or any other implicit function, and devise a general strategy for contouring a surface from any collection of grid-surface intersection points.  Our approach builds on classic \emph{marching algorithms}, such as marching cubes \cite{lorensen1987marching} and marching tetrahedra~\cite{doi1991efficient}, which take a ``divide and conquer'' approach: divide space into cells, and build a polygonal approximation within each cell.  Marching is popular in practice since it is amenable to parallelization, and guarantees closed manifold output.

However, classic marching methods exhibit two basic limitations.  First, they require a \emph{continuous} function \(f\), relying on the transition between negative and positive values to identify intersections.  Hence, they are suitable only for closed curves or surfaces (\figref{OpenCurves}).  Second, output frequency is limited by grid resolution, since the sign test used by these methods produces at most one vertex per grid edge (\figref{intro_scheme}).  Hence, when \(f\) has multiple zeros along an edge, one gets aliasing of fine geometric features, thin sheets, \etc{}\ One common remedy is to simply increase grid resolution---yielding over-tessellated output which must be aggressively decimated.  Alternatively, spatially adaptive methods lose the simplicity and efficiency of regular marching, and may still fail to identify surface-edge intersections~\cite{ju2002dual,schaefer2007manifold}.

\begin{figure}
   \includegraphics{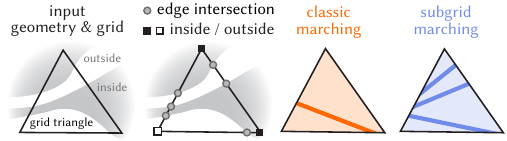}
   \caption{Our subgrid method reconstructs geometry from arbitrarily many intersection points along each edge. As a result, it can capture features that are missed entirely by traditional marching algorithms.}
   \label{fig:intro_scheme}
\end{figure}

Our approach addresses these limitations head-on, making two changes to the standard marching strategy that enable us to resolve features below the scale of the grid (hence the name ``subgrid''):
\begin{enumerate}[leftmargin={5mm}]
   \item We replace 0-dimensional sampling (evaluate \(f\) at each grid node), with 1-dimensional root finding (find all zeros of \(f\) along each grid edge).  As detailed in \secref{FindingIntersections}, this task can be carried out efficiently and exactly for many common implicit surface types (\eg{}, SDFs or neural implicits), or approximately for ``black box'' functions \(f\) (via regular sampling along edges)---providing a drop-in replacement for existing marching algorithms (\secref{Isosurfacing}).
   \item Rather than rely on a finite lookup table of output configurations, we develop a deterministic algorithm that reconstructs a local polygonal approximation given \emph{any} number of intersections along the edges of a tetrahedron (\figref{Taxonomy}).  This algorithm is \(O(n)\) in the number of edge intersections \(n\), and produces a globally conforming mesh even though it is executed independently on each tet.  The existence of such an algorithm is not obvious \apriori{}, and occupies the bulk of our exposition.
\end{enumerate}

\begin{figure}
   \includegraphics{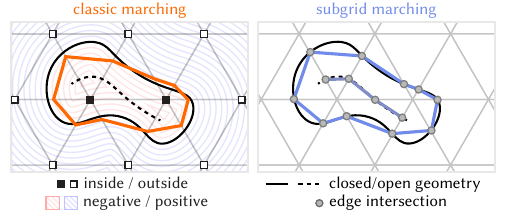}
   \caption{\figloc{Left:} Classic marching relies on an implicit function to classify grid nodes as inside/outside, making it impossible to capture open curves (2D) or surfaces with boundary (3D).  Moreover, using interpolated nodal values to estimate intersection locations leads to errors in the geometric approximation. \figloc{Right:} Subgrid marching avoids both of these issues by reconstructing geometry directly from edge-surface intersection points.}
   \label{fig:OpenCurves}
\end{figure}

\begin{figure*}
    \centering
    \includegraphics{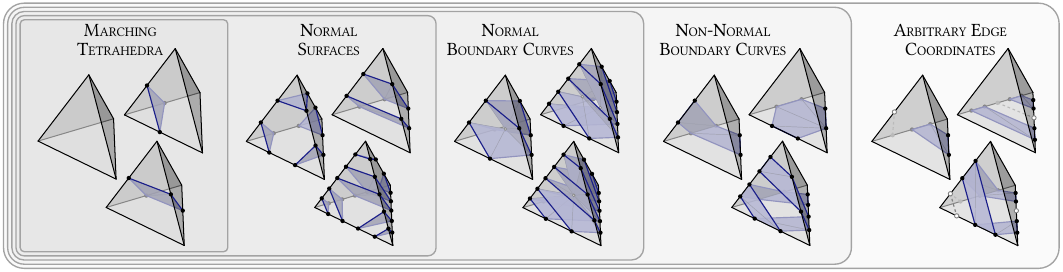}
    \caption{Examples of surfaces reconstructed from edge intersections.  Classic marching tetrahedra considers at most one intersection per edge \figloc{(far left)}, whereas we define connectivity for arbitrary patterns of edge intersections of increasing complexity \figloc{(left to right)}.  See supplement for more comprehensive tables.}
    \label{fig:Taxonomy}
\end{figure*}

Reconstruction proceeds inductively on dimension (\secref{ReconstructionAlgorithms}): we start with 0-dimensional intersection points along edges (\secref{FindingIntersections}), construct a collection of 1-dimensional curves on the boundary of each tetrahedron (\secref{BoundaryCurveReconstruction}), and then fill these tetrahedra with 2-dimensional polygons interpolating the boundary curve (\secref{PrimalSurfaceReconstruction} and \secref{DualSurfaceReconstruction}).  Since we need only intersection points along edges (and not an implicit function), the target surface need not have a well-defined inside and outside, nor a consistent orientation.  Likewise, a dual variant of our algorithm (\secref{DualSurfaceReconstruction}) uses only \emph{unoriented normals} (\ie, sign does not matter).  This flexibility enables our method to be applied to broader tasks than contouring.  For instance, by intersecting each edge with an inconsistently-oriented polygon soup, we can recover an oriented manifold mesh (\secref{EvaluationAndComparisons}).

\subsection{Normal Surfaces}
\label{sec:NormalSurfaces}

The starting point for our approach is \emph{normal surface theory}, which is a central tool in the algorithmic study of 3-manifolds~\cite{matveev2007algorithmic}.  This topic was initiated by \citet{kneser1929geschlossene} and \citet{haken1961theorie}, and later extended by others \cite{rubinstein1994polyhedral,bachman2003,bachman:2012:minimal:disks}.

The basic idea is to encode a surface \(\Scal\) by counting intersections with a tetrahedral grid \(\Tcal\) (\secref{Background}).  To make this encoding canonical, the surface must be \emph{normal}, intersecting each tetrahedron in a collection of topological disks that either separate one vertex from the other three (``corner cuts''), or split vertices into two sets of two (``diagonal cuts''), as seen in \figref{Taxonomy}, \figloc{left}.  A surface is hence reduced to a vector \(\mathbf{x} \in \mathbb{Z}_{\geq 0}^{7n}\) of integer \emph{normal coordinates}, since there are seven such patterns for each of \(n\) tets in the grid.
(Classic marching tetrahedra is even more restrictive, producing a vector \(\mathbf{x}\) where only \noindent one entry is \(1\) and the rest are \(0\).)

In mathematics, the restriction to a finite number of intersection types is intentional: it ensures the class of representable surfaces is rich enough to formulate topological questions, without introducing superfluous geometric details.  Informally, any unnecessary ``wrinkles'' have already been ``pulled tight.''  For isosurfacing, however, we want to capture as much detail as possible---including all the wrinkles.

We hence break with the standard philosophy of normal surfaces, adopting \emph{edge coordinates} as our primary representation.  Edge coordinates simply count how many times each grid edge pierces the surface; the only mild restriction is that intersections must occur at isolated points.  Traditionally, edge coordinates are viewed as auxiliary data derived from normal coordinates: each triangle or quad increments the count for three or four edges, \resp{}\ Unlike normal coordinates, edge coordinates do not describe a finite number of disk types (such as triangles and quads).  Yet as we show in this paper, one can still uniquely reconstruct the surface from edge coordinates (in some cases up to a small set of symmetries), via a fast algorithm local to each tet (\secref{ReconstructionAlgorithms}).

\setlength{\columnsep}{1em}
\setlength{\intextsep}{0em}
\begin{wrapfigure}{r}{69pt}
   \includegraphics{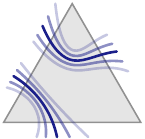}
\end{wrapfigure}
Moreover, intersection counts alone do not pin down the precise geometry of the surface, but rather an \emph{isotopy class} of normal surfaces---\ie{}, they tell us what the surface looks like only up to continuous deformations that do not cross grid vertices (see inset for a 2D example).  For isosurfacing, we also seek the best geometric approximation of the target surface \(\Scal\).  We hence enrich integer intersection counts with locations and (optionally) normal vectors for each intersection point.

To date, normal surface theory is little used in geometry processing and visualization.  The marching tetrahedra algorithm reinvented a small piece of this story~\cite{doi1991efficient}, but the isosurfacing literature makes no reference to the broader theory---apart from a brief mention by \citet{hass2020approximating}, whose \emph{GradNormal} algorithm still constructs only a single disk per tetrahedron.  More recently, normal coordinates were used to describe \emph{curves} (rather than surfaces) in the context of \emph{intrinsic triangulations}~\cite{Gillespie:2021:DCE,Gillespie:2021:ICI}.  Finally, in computational topology, normal coordinates have been used to reduce the asymptotic complexity of basic operations, relative to an explicit piecewise linear representation~\cite{erickson2012tracing,chambers2023algorithms,lackenby2024some}.  However, we know of no past effort to adapt this theory to general surface processing, as we do in this paper.

\subsection{Seifert Surfaces}
\label{sec:SeifertSurfaces}

\begin{figure}[b]
   \includegraphics{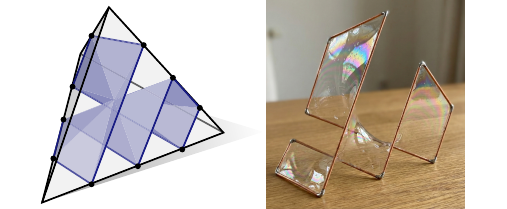}
   \caption{A core piece of our method is an algorithm for filling a curve on a tet boundary with an interpolating surface---analogous to a minimal-area soap bubble formed by a closed loop of wire.}
   \label{fig:SoapBubble}
\end{figure}

The final step of our reconstruction algorithm is effectively a method for computing a discrete \emph{spanning surface} or \emph{Seifert surface}, \ie{}, an embedded, oriented surface with a given boundary curve.  This problem is fairly well-explored in visualization and geometry processing~\cite{van2006visualization}, and often framed in terms of \emph{Plateau's problem}: find the surface of minimal area with prescribed boundary---akin to a soap film bound by a loop of wire (\figref{SoapBubble}).

Several methods minimize the area of a fine surface triangulation~\cite{renka1995minimal,pinkall1993computing}, but this Lagrangian approach does nothing to prevent intersections between mesh components---which would require an expensive collision potential or repulsive energy~\cite{yu2021repulsive}.  More recent, Eulerian methods use a fine regular grid~\cite{wang2021computing} or neural field~\cite{palmer2022deepcurrents} that precludes intersections by construction.  In our case, the topology is known \apriori{} (just a union of disks), but the aforementioned solutions are far too heavy in both compute time and output resolution.  Instead, our algorithm directly constructs a small spanning triangulation within each tet by carefully considering the taxonomy of possible boundary curves.  Rather than minimize area, we aim first and foremost to guarantee the intersection-free property, via simple deterministic rules about how to triangulate polygons and place additional \emph{Steiner points}.

\setlength{\columnsep}{1em}
\setlength{\intextsep}{0em}
\begin{wrapfigure}{r}{57pt}
   \includegraphics{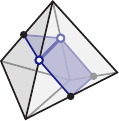}
\end{wrapfigure}
The ``soap bubble'' picture also provides some intuition for how marching algorithms (including ours) guarantee output that is globally manifold and intersection-free: as long as the triangulation for each cell is intersection-free, contained entirely within the cell, and interpolates a boundary curve shared with neighboring cells (inset), we are simply gluing together embedded disks.  \appref{ManifoldProperty} provides a more rigorous discussion.

\subsection{Isosurfacing}
\label{sec:Isosurfacing}

Beyond marching, isosurfacing methods build on Delaunay triangulation~\cite{gelas2009variational,binninger2025tetweave}, particle simulation~\cite{witkin1994using}, and active contours~\cite{cohen2002finite}; \citet{de2015survey} provide a survey.  Marching sits at one end of the spectrum: reconstruction is highly localized in space, with rigid assumptions about the input samples.  At the other end of the spectrum, one can reconstruct surfaces from arbitrary unstructured point sets~\cite{huang2024surface}.  Our subgrid method remains firmly in the domain of fixed-grid marching, while dramatically generalizing the sample patterns and output stencils in each cell.  As a result, we capture complex surface topology where needed, without over-resolving ordinary smooth regions.

Some past marching methods permit multiple intersections per edge \cite{Bloomenthal:1995:PNM,Wyvill:1996:PIS}, but strongly restrict the output (\eg{}, at most one additional vertex per cell), precluding robust resolution of features like thin sheets.  The method closest in spirit to ours is perhaps \citet{Bloomenthal:1995:PNM}, though its aim is different: contouring nonmanifold geometry---whereas our output is manifold by construction (\thmref{ManifoldConnectivity}).  Adaptive grid refinement helps resolve fine features~\cite{Bloomenthal:1988:PIS,schaefer2007manifold,shen2023flexible,shu1995adaptive}, but is notoriously difficult to implement without cracks between levels of the hierarchy, and breaks the fixed, regular layout needed for high-performance parallel evaluation.  Our subgrid strategy \emph{complements} adaptive methods, since it can be used without modification on a spatially adaptive grid.  In essence, subgrid marching helps recover challenging topology (such as thin sheets); grid refinement helps improve geometric detail (\eg{}, bumps and wrinkles on those sheets).

\begin{figure}
   \centering
   \includegraphics{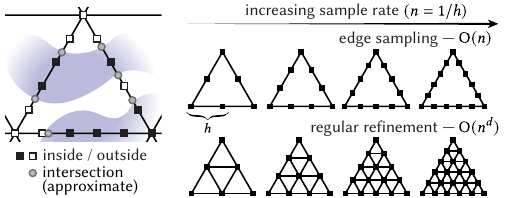}
   \caption{Our method is a drop-in replacement for standard marching: when exact intersections are unavailable, we can still perform 1D marching along grid edges \figloc{(left)}.  This approach retains the robustness of subgrid contouring (\eg{}, capturing thin sheets) while asymptotically reducing the number of samples relative to regular grid refinement in dimension \(d\) \figloc{(right)}.}
\label{fig:MarchingSegments}
\end{figure}

At first glance, one might think that a drawback of our method is the need to find all intersections along each edge---unlike classic marching, which simply evaluates \(f\) at each grid node.
In reality, however, the subgrid approach is a true drop-in replacement for existing marching algorithms, since \emph{root finding can always be approximated via point sampling}.
For instance, we can uniformly sample \(f\) along an edge, and use sign changes to identify intersection points---essentially a 1D ``marching segments'' approach.
While this approach can of course miss intersections, we are no worse off than classic marching (which can also miss intersections).
At the same time, we asymptotically reduce the number of samples needed to capture fine features relative to regular grid refinement, since we refine only along one dimension, rather than two or three (see \figref{MarchingSegments}).
Most importantly, our robust reconstruction procedure ensures valid output, even if 1D marching misses some intersections.

In general, as with classic marching, we do not guarantee the output mesh has the same global topology as the surface \(\Scal\). We do, however, ensure that it exhibits the same edge-surface intersections---a \emph{necessary} condition for correct topology, not provided by classic marching.  In practice the subgrid approach recovers surface topology far better than classic marching, as demonstrated in \secref{EvaluationAndComparisons}.

\section{Background}
\label{sec:Background}

We first establish some elements of normal surface theory.  Though the standard theory is based on \emph{normal coordinates}, a general-purpose contouring algorithm must assign an interpretation to arbitrary \emph{edge coordinates} (\secref{EdgeCoordinates}).  Patterns of edge coordinates arising in normal surfaces (\secref{NormalSurfaceCoordinates}) and \emph{almost normal} surfaces (\secref{AlmostNormalSurfaces}), namely, triangles, quads, and octagons, serve as important base cases for our more general reconstruction procedure.  Before considering normal surfaces, it is helpful to understand \emph{normal curves}, which play an important role in this procedure (\citet[Section 3.5.1]{Sharp:2021:GPI} provide a more thorough introduction).

\subsection{Normal Curves}
\label{sec:NormalCurves}

\setlength{\columnsep}{1em}
\setlength{\intextsep}{0em}
\begin{wrapfigure}{r}{82pt}
   \includegraphics{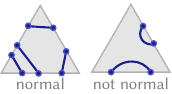}
\end{wrapfigure}
The basic idea of a normal curve or surface is that it is in some sense locally minimal: it cannot be ``pulled tighter'' to reduce the number of intersection points, without passing through vertices of the triangulation (\figref{PullTight}).  Hence, a closed curve in a triangulated surface is \emph{normal} if the arcs within each triangle are disjoint, simple, and separate the three vertices into two nonempty sets.  It can ``cut off corners'' of a triangle, but cannot ``scoop out edges'' (see inset).

\begin{figure}[b]
   \includegraphics{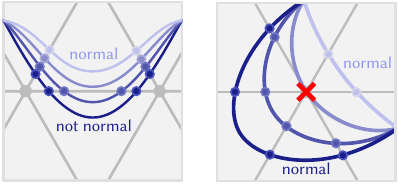}
   \caption{A curve or surface is normal if the number of intersections cannot be reduced without pulling it across a vertex of the grid.  On the left, for instance, the non-normal curve can be pulled tighter, yielding a normal curve with two rather than four intersections.  On the right, we can only pull the dark blue normal curve tighter by crossing the marked vertex.\label{fig:PullTight}}
\end{figure}

We can express normal curves via two kinds of coordinates.  \emph{Corner coordinates} \(c_0,c_1,c_2 \in \mathbb{Z}_{\geq 0}\) give the number of segments separating each vertex \(i\) from the other two vertices \(j,k\).  \emph{Edge coordinates} \(e_{01}, e_{12}, e_{20} \in \mathbb{Z}_{\geq 0}\) give the number of intersections with each edge.

We can convert from corner to edge coordinates by summing up the number of segments crossing the corners at endpoints:
\[
   e_{01} = c_0 + c_1, \qquad
   e_{12} = c_1 + c_2, \qquad
   e_{20} = c_2 + c_0.
\]
Inverting this map, we can recover corner coordinates via
\begin{equation}
   \label{eq:TriangleEdgeToCorner}
      c_0 = \tfrac{e_{01}+e_{20}-e_{12}}{2}, \quad
      c_1 = \tfrac{e_{12}+e_{01}-e_{20}}{2}, \quad
      c_2 = \tfrac{e_{20}+e_{12}-e_{01}}{2}.
\end{equation}
However, for arbitrary edge coordinates, we can get fractional or negative values.  To describe a valid collection of normal curves, edge coordinates must satisfy two conditions:
\begin{enumerate}
   \item \textbf{Even sum.} \(e_{01} + e_{12} + e_{20} \equiv 0 \mod 2\)
   \item \textbf{Triangle inequality.} \(e_{\ij} + e_{ki} \geq e_{jk}\) for all corners \(i\).
\end{enumerate}
The even sum condition captures the idea that ``what goes in, must come out.''  The triangle inequality captures the idea that arcs entering one edge must exit a different edge.

In our algorithm we decompose arbitrary curves on the boundary of each tet into normal and non-normal parts (\secref{BoundaryCurveReconstruction}), rather than restrict ourselves to normal curves (as in the standard theory).

\subsection{Edge Coordinates}
\label{sec:EdgeCoordinates}

\setlength{\columnsep}{1em}
\setlength{\intextsep}{0em}
\begin{wrapfigure}{r}{85pt}
   \includegraphics{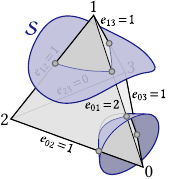}
\end{wrapfigure}
Moving to surfaces, we assume that \(\Scal\) meets each grid edge in a finite set of points, all interior to the edge.  For a tetrahedron with vertices \((0,1,2,3)\), edge coordinates \(e_{\ij} \in \mathbb{Z}_{\geq 0}\) count the number of intersections of the surface \(\Scal\) with each of the six edges \(\ij\).  We write the vector of edge coordinates as
\[
   \ve := ( e_{01}, e_{02}, e_{03}, e_{23}, e_{13}, e_{12} ) \in \mathbb{Z}^6,
\]
offsetting complementary edge pairs (\eg{}, \(01\) and \(23\)) by three entries.

In our algorithm, edge coordinates are augmented with locations and (optionally) normals at intersection points (\secref{FindingIntersections}), which are not part of classic normal surface theory.

\subsection{Normal Surface Coordinates}
\label{sec:NormalSurfaceCoordinates}

Similar to curves, a normal surface intersects a tetrahedron in topological disks that separate its four vertices into sets of size 1 and 3 (making four possible \emph{corner cuts}), or 2 and 2 (three \emph{diagonal cuts}).

For a piecewise linear normal surface, where every disk is a planar polygon, corner and diagonal cuts become triangles and quads, \resp{} (\figref{Taxonomy}, \figloc{center left}).  We use \(t_i\) to denote the number of triangles at corner \(i\), and \(q_{\ij}\) for the number of quads separating edge \(\ij\) from the complementary edge.  The \emph{normal surface coordinates} within a tet are then given by the vector
\[
   \vn := ( t_0, t_1, t_2, t_3, q_{01}, q_{02}, q_{03} ) \in \mathbb{Z}_{\geq 0}^7.
\]
Importantly, for this set of polygons to be intersection-free, there can be at most one type of diagonal cut, \ie{}, only one of the coordinates \(q_{\ij}\) can be nonzero.

We can convert normal coordinates to edge coordinates by incrementing edge counts for each type of triangle or quad.  \Eg{}, for each triangle at corner \(0\), we can increment edge coordinates \(e_{01}, e_{02}\), and \(e_{03}\), or equivalently, add the vector
\[
   \vv_0 := ( 1, 1, 1, 0, 0, 0 )
\]
to \(\ve\).  In general, we can write this relationship as
\begin{equation}
   \label{eq:IncidenceMap}
\begin{blockarray}{cccc|ccc}
\vv_0 & \vv_1 & \vv_2 & \vv_3 & \vv_{01} & \vv_{02} & \vv_{03} \\
\begin{block}{[cccc|ccc]}
1 & 1 & 0 & 0 & 0 & 1 & 1 \\
1 & 0 & 1 & 0 & 1 & 0 & 1 \\
1 & 0 & 0 & 1 & 1 & 1 & 0 \\
0 & 0 & 1 & 1 & 0 & 1 & 1 \\
0 & 1 & 0 & 1 & 1 & 0 & 1 \\
0 & 1 & 1 & 0 & 1 & 1 & 0 \\
\end{block}
\end{blockarray}
\left[
\begin{array}{c}
t_0 \\ t_1 \\ t_2 \\ t_3 \\
\hline
q_{01} \\ q_{02} \\ q_{03}
\end{array}
\right]
=
\left[
\begin{array}{c}
e_{01} \\ e_{02} \\ e_{03} \\ e_{23} \\ e_{13} \\ e_{12}
\end{array}
\right]
\end{equation}
where the matrix \(M \in \mathbb{Z}^{6 \times 7}\) on the left hand side is the \emph{incidence matrix}, whose columns encode the basic polygon types.

\subsection{Almost Normal Surfaces}
\label{sec:AlmostNormalSurfaces}

In contrast, we cannot always explain a given set of edge coordinates via intersections with normal triangles and quads: solutions to \eqref{IncidenceMap} may yield negative or fractional coordinates, or describe intersecting polygons.  Consider for example the edge coordinates \(\ve = (2,1,1,2,1,1)\) (\figref{HakenSum}, \figloc{right}).  Here the solution to \eqref{IncidenceMap} is $\vn = (0,0,0,0,0,1,1)$, decomposing \(\ve\) into \emph{intersecting} quads
\begin{equation}
   \label{eq:QuadSum}
   \ve = \vv_{02} + \vv_{03} = (1,0,1,1,0,1) + (1,1,0,1,1,0).
\end{equation}
To reconstruct surfaces from general edge coordinates, we must hence broaden our interpretation beyond the basic normal polygons.

\begin{figure}[t]
   \includegraphics{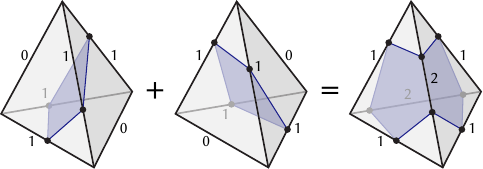}
   \caption{The theory of \emph{almost normal} surfaces slightly expands the set of edge coordinates that give rise to intersection-free surfaces.  In particular, two quads that intersect in the standard theory \figloc{(left)} become a single octagon free of intersections \figloc{(right)}.  We broaden this interpretation much further, recovering non-intersecting geometry from \emph{any} set of edge coordinates.}
   \label{fig:HakenSum}
\end{figure}

\emph{Almost normal surfaces}~\cite{rubinstein1994polyhedral} allow one additional non-normal piece: either an octagon, or a pair of normal disks joined by a tube~\cite{hass2012almost}.  In particular, the pattern of edge coordinates in \eqref{QuadSum} is now interpreted as a non-intersecting octagon (\figref{HakenSum}).  The original motivation behind this construction was not to add geometric detail, but rather to encode configurations needed for tasks like \emph{3-sphere recognition}~\cite{rubinstein1995algorithm}.

For isosurfacing, almost normal surfaces (and higher-index variants \ala~\citet{bachman2003,bachman:2012:minimal:disks}) remain limited: they still allow only edge intersection patterns \(\ve\) that are the image under \(M\) of vectors \(\vn \in \mathbb{Z}^7_{\geq 0}\).  For instance, edge coordinates \(\ve = (1,3,0,3,3,0)\) (pictured in \figref{SoapBubble}) cannot be encoded by any normal or almost normal surface.  However, all such surfaces meet the tetrahedron boundary along a collection of normal curves (\secref{NormalCurves}).  This observation inspires our more general reconstruction procedure from arbitrary normal boundary curves---as well as non-normal curves.

\section{Reconstruction Algorithms}
\label{sec:ReconstructionAlgorithms}

\begin{figure}
   \centering
   \includegraphics{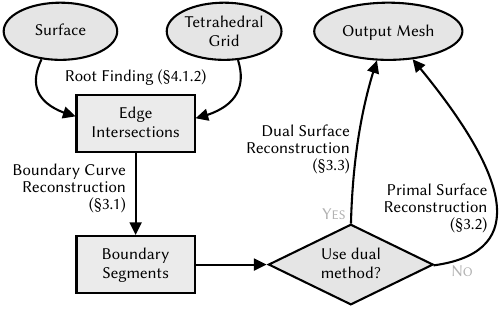}
   \caption{High-level stages of our algorithm.}
   \label{fig:OverallFlowchart}
\end{figure}

Our reconstruction algorithm takes edge coordinates, intersection points, and (optionally) normals as input, and produces a triangle mesh approximating the surface \(\Scal\).  We first build a curve on each tet boundary interpolating the given intersection points (\secref{BoundaryCurveReconstruction}).  We then construct triangles via one of two paths (\figref{OverallFlowchart}):
\begin{itemize}
   \item \textbf{Primal reconstruction} (\secref{PrimalSurfaceReconstruction}) --- suitable when a no-intersection guarantee is desired.
   \item \textbf{Dual reconstruction} (\secref{DualSurfaceReconstruction}) --- suitable for geometry with sharp features.  Requires normals, and lacks a no-intersection guarantee.
\end{itemize}

We put no conditions on input edge coordinates (apart from nonnegativity), and lift standard restrictions on the output: unlike marching tets, we can emit multiple polygons per tet (rather than just one); unlike normal and almost normal surfaces, polygons can have arbitrarily many sides (not just triangles, quads, and octagons).

\paragraph{Input.} More precisely, the input consists of:
\begin{itemize}
   \item An embedded tetrahedral mesh \(\Tcal = (V,E,F,T)\) with vertex positions \(v: V \to \mathbb{R}^3\), where \(V,E,F,T\) denote vertices, edges, triangular faces, and tetrahedra.
   \item For each edge \(ij \in E\) with \(i < j\):
      \begin{itemize}
         \item The number of intersections \(e_{\ij} \in \mathbb{Z}_{\geq 0}\).
         \item Intersection locations as 1D barycentric coordinates \\ \(0 < s_1 < \ldots < s_{e_{\ij}} < 1\), encoding points \((1-s)v_i + s v_j\).
         \item \emph{Dual method only:} Unit normals \(n_1, \ldots, n_{e_{\ij}} \in \mathbb{R}^3\) at each intersection point. The sign of each normal is ignored.
      \end{itemize}
\end{itemize}

In practice we do not store an explicit tetrahedral mesh, and instead implicitly split the cubes of a regular grid (\secref{TetrahedralGrid}).

\paragraph{Output.} The output is a manifold triangle mesh, possibly with boundary.  It is closed and orientable if edge coordinates (i) are zero for all edges in the boundary of \(\Tcal\), and (ii) exhibit an even sum \(e_{\ij} + e_{jk} + e_{ki} \equiv 0 \mod 2\) for each triangle \(ijk \in F\).  These conditions are automatically satisfied when edge coordinates come from a closed surface \(\Scal\) strictly contained in \(\Tcal\).  If the primal algorithm is used, the mesh will also be free of self-intersections.

\smallskip

Proofs of these properties are found in the appendix.

\subsection{Boundary Curve Reconstruction}
\label{sec:BoundaryCurveReconstruction}

\begin{figure}
   \includegraphics{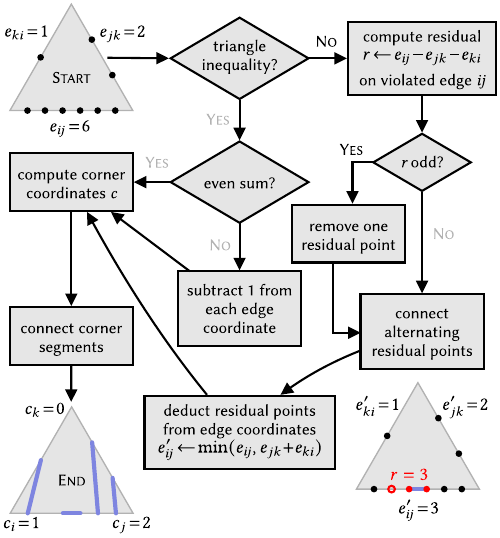}
   \caption{Procedure for reconstructing a curve (blue) that interpolates an arbitrary set of edge intersection points (black) on each face of a tetrahedron.}
   \label{fig:CurveReconstructionFlowchart}
\end{figure}

\begin{figure*}
   \includegraphics{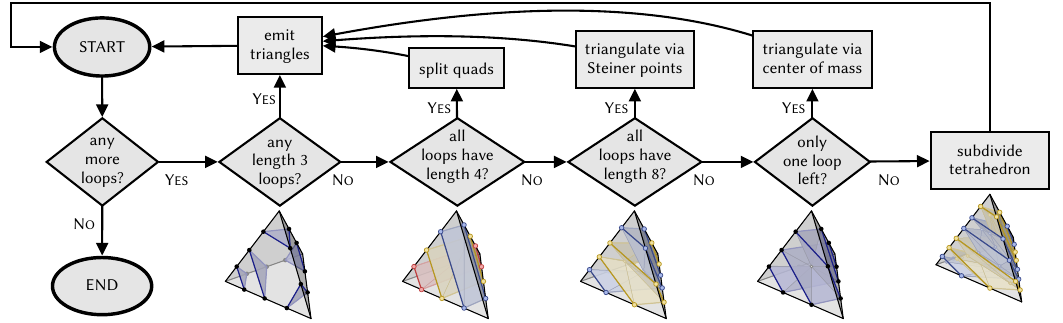}
   \caption{Given normal loops on the boundary of a tetrahedron, we build spanning disks by first emitting triangles at corners, then considering a small number of mutually exclusive basic cases.  Recursive subdivision occurs rarely, requiring a tetrahedron with at least 24 edge intersections (\(\vec{e} = (2,4,6,2,4,6)\)).}
   \label{fig:FlowchartNormalBoundaryLoops}
\end{figure*}

For each tetrahedron in \(\Tcal\), we first construct a collection of disjoint, simple curves \(\Gamma = \{\gamma_1, \ldots, \gamma_m\}\) on the tet boundary, by applying the procedure outlined in \figref{CurveReconstructionFlowchart} to each triangular face.  We carefully define this procedure to produce identical curves on triangles shared by neighboring tetrahedra.

Explicitly, for each triangle \(ijk\) of the tet, we do the following:

\smallskip

\setlength{\columnsep}{-20pt}
\setlength{\intextsep}{0em}
\begin{wrapfigure}{r}{100pt}
   \centering
   \includegraphics{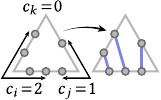}
\end{wrapfigure}
\begingroup
\leftskip=1.4em
\noindent\hspace*{-1.4em}\makebox[1.4em][l]{(1)}If the edge coordinates satisfy the even sum and triangle inequality conditions (\secref{NormalCurves}), we compute corner coordinates via \eqref{TriangleEdgeToCorner}.  For each corner \(i\) we then connect up the first \(c_i\) pairs of intersection points along oriented edges \(\ij\) and \(ik\) into segments.\par
\endgroup

\setlength{\columnsep}{-20pt}
\setlength{\intextsep}{0em}
\begin{wrapfigure}{r}{100pt}
   \centering
   \includegraphics{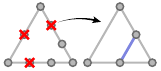}
\end{wrapfigure}
\begingroup
\leftskip=1.4em
\noindent\hspace*{-1.4em}\makebox[1.4em][l]{(2)}If the triangle inequality is satisfied, but the even sum condition is violated, then we subtract 1 from each edge coordinate (\ie{}, \(\ve \gets \ve - (1,1,1)\)) and return to Step (1).  Doing so ensures that the sum is now even, effectively creating three open endpoints.\par
\endgroup

\setlength{\columnsep}{-20pt}
\setlength{\intextsep}{0em}
\begin{wrapfigure}{r}{100pt}
   \centering
   \includegraphics{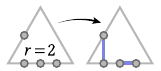}
\end{wrapfigure}
\begingroup
\leftskip=1.4em
\noindent\hspace*{-1.4em}\makebox[1.4em][l]{(3)}Finally, if the triangle inequality is violated, there will be exactly one edge \(\ij\) with \(r := e_{\ij} - e_{jk} - e_{ki}\) \emph{residual points}, \ie{}, with more ``incoming'' segments than can be absorbed by the other two edges \(jk,ki\).  In this case, we first construct as many corner cuts as we can, by applying Step (1) to adjusted edge coordinates \(e_{\ij}^\prime := \min(e_{\ij}, e_{jk} + e_{ki})\).  Then, to handle residual points:\par
\endgroup

\begin{enumerate}[label=(\alph*), leftmargin=2.5em]
   \item If \(r\) is even, we connect consecutive pairs of points along residual edge \(ij\) into segments.  These segments correspond to ``scoops'' (\ala{} \secref{NormalCurves}) of least geometric length.
   \item If \(r\) is odd, we skip the first residual point along oriented edge \(\ij\) (assuming a canonical orientation \(i < j)\), and connect the remaining consecutive pairs as in the even case.
\end{enumerate}

Segments from each triangle meet at edge points, forming a collection of connected piecewise linear curves \(\Gamma = \{\gamma_1, \ldots, \gamma_m\}\) on the tetrahedron boundary.  We classify these curves into three types:

\begin{itemize}[leftmargin={5mm}]
   \item \textbf{Open curves.} If \(\gamma\) has a degree-1 vertex, it forms an open curve.  Such curves are discarded, but their segments may still appear in neighboring tets as part of the mesh boundary.
   \item \textbf{Normal curves.} If all segments of \(\gamma\) connect two distinct edges, then \(\gamma\) is a normal curve, and will ultimately bound a disk on the tet interior (\figref{Taxonomy}, \figloc{center}).
   \item \textbf{Non-normal curves.} Otherwise, if any segment of \(\gamma\) runs along a tet edge, it is a \emph{non-normal} curve, and its spanning disk can include pieces of the tet boundary (\figref{Taxonomy}, \figloc{center right}).
\end{itemize}

\subsection{Primal Surface Reconstruction}
\label{sec:PrimalSurfaceReconstruction}

For each tet in \(\Tcal\), we now have a collection \(\Gamma\) of closed loops, and must triangulate them such that the output mesh is intersection-free.

\subsubsection{Normal Boundary Curves}
\label{sec:NormalBoundaryCurves}

We first consider the subset of normal curves \(\Gamma_{\text{normal}} \subset \Gamma\), reducing the general case to four cases we can triangulate directly: multiple triangles, parallel quads or octagons, or a single closed loop (\figref{FlowchartNormalBoundaryLoops}).

\begin{figure}[b]
   \centering
   \includegraphics{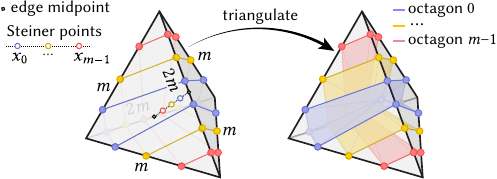}
   \caption{To triangulate a collection of octagonal loops, we place \(m\) evenly-spaced Steiner points along the segment between midpoints of the two edges with \(2m\) intersections.}
   \label{fig:OctagonTriangulation}
\end{figure}

We first handle corner cuts, \ie{}, we emit a triangle for each loop \(\gamma \in \Gamma_{\text{normal}}\) of length \(\ell=3\).  We can then establish two properties:
\begin{enumerate}[label=\Roman*.,leftmargin={6mm}]
   \item All remaining loops have the same length \(\ell > 3\) (\thmref{min:length:8}).
   \item These loops have edge coordinates \(d_1\), \(d_2\), and \(d_1+d_2\) on opposite pairs of edges, for some integers \(0 \leq d_2 \leq d_1\) (\thmref{NormalEdgePattern}).
\end{enumerate}
In other words, the remaining pattern of edge points is equivalent to one arising from a sum of normal quadrilaterals of at most two distinct types.  But when \(d_2 \ne 0\), we must give a different interpretation to this pattern to recover a nonintersecting surface.  In particular:

\begin{itemize}[leftmargin={5mm}]
   \item \textbf{Quads.} If \(\ell = 4\), \ie{}, if the remaining loops are quads, then we split all quads along the same, arbitrary diagonal, producing two triangles per loop.
   \item If \(\ell > 4\), then the remaining \(m\) loops must all in fact have length \(\ell \geq 8\), since the length of any non-triangular normal loop on a tet is a multiple of \(4\) (Property I, \thmref{min:length:8}).  There are hence three possibilities to consider:
      \smallskip
      \begin{enumerate}
         \item \textbf{Octagons.} For \(\ell=8\) (octagons), some pair of opposite edges \(e,e^\prime\) has \(2m\) intersections, and the remaining edges each have \(m\) (\eqref{QuadSum}).  To triangulate the octagons, we place \(m\) Steiner points \(x_0, \ldots, x_{m-1}\)  uniformly along the oriented segment from the midpoint of \(e\) to the midpoint of \(e^\prime\).  Pairs of intersections on \(e\) are then connected to consecutive Steiner points, from the innermost to outermost pair (\figref{OctagonTriangulation}).  \Ie{}, if we enumerate intersections along \(e\) as \(p_0, \ldots, p_{2m-1}\) (with either orientation), then all points on the loop passing through \(p_{m+i}\) are connected to Steiner point \(x_i\). \smallskip
         \item \textbf{Single loop.} If we have just \(m=1\) loop, we insert an additional \emph{Steiner point} \(x\) at any point in the convex hull of the loop vertices, and triangulate the loop by connecting each of its edges to \(x\).  In practice we let \(x\) be the center of mass. \smallskip
         \item \textbf{Subdivision.} Otherwise, \(m>1\) and \(\ell > 8\).  Noting Property (II) above, let \(i,j,k,l\) be vertices of the tetrahedron such that \(e_{\ij} = e_{kl} = d_1\), \(e_{ik} = e_{jl} = d_2\), and \(e_{il} = e_{jk} = d_1 + d_2\).  (Due to the symmetry of edge coordinates, the vertex labeling is not unique; we use lexicographic ordering to break ties.) We subdivide the tet into four by connecting its vertices to the center of mass \(a\) (\figref{EdgeCoordinateSubdivision}), and assign edge coordinates
            \begin{equation}
               e_{ai}=2d_2,\ e_{aj}=d_1,\ e_{ak}=d_2,\ e_{al}=d_1-d_2.
               \label{eq:Subdivision}
            \end{equation}
            We then recursively process each of the four new tets.  The new coordinates describe the simplest normal curves interpolating the observed intersection points, \ie{}, the smallest coordinates satisfying curve normality conditions from \secref{NormalCurves}.
      \end{enumerate}
\end{itemize}

\begin{figure}
   \centering
   \includegraphics{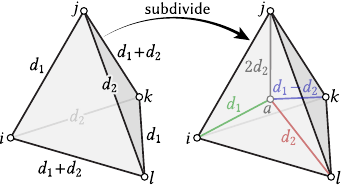}
   \caption{We reduce arbitrary normal loops to a few simple base cases via subdivision of the edge coordinates, using the stencil shown above.}
   \label{fig:EdgeCoordinateSubdivision}
\end{figure}

Subdivision is needed since, in general, there may be no intersection\hyp{}free triangulation using a single Steiner point.  However, such configurations do not occur below 24 total edge  intersections (\figref{FlowchartNormalBoundaryLoops}).  Moreover, the number of subdivisions is finite, bounded by the sum of initial edge coordinates (\thmref{AlgorithmTermination}), and far fewer in practice.  We do not maintain an explicit tet mesh data structure, but rather just recursively invoke reconstruction for each new tet.

\subsubsection{Non-Normal Boundary Curves}
\label{sec:NonNormalBoundaryCurves}

\paragraph{Loop type.} We next consider the subset of non-normal loops \(\Gamma_{\text{nonnormal}} \subset \Gamma\).  Viewing a tetrahedron as a topological sphere punctured at its four vertices, each curve \(\gamma \in \Gamma_{\text{nonnormal}}\) has one of three types (no matter how much it ``spirals'' around the tet):
\begin{itemize}[leftmargin={5mm}]
   \item \textbf{Corner type.} Separates one vertex from the other three.
   \item \textbf{Diagonal type.} Separates two vertices from the other two.
   \item \textbf{Contractible.} Does not separate any vertices.
\end{itemize}
\setlength{\columnsep}{1em}
\setlength{\intextsep}{0em}
\begin{wrapfigure}{r}{80pt}
   \includegraphics{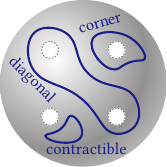}
\end{wrapfigure}

We evaluate the loop type using a parity bit \(b_{\ij} := \hspace{-2mm}\mod\!(e^\gamma_{\ij},2)\) for each edge \(\ij\), where \(e_{\ij}^\gamma\) are edge coordinates for \(\gamma\) alone. This value is odd if \(\gamma\) separates vertices \(i\) and \(j\), and even if they belong to the same connected component of the tet boundary.  Letting \(p := b_{01} + b_{02} + b_{03}\), \(\gamma\) is then contractible if \(p=0\), is of diagonal type if \(p=2\), and is of corner type if \(p=1\) or \(p=3\).

\paragraph{Vertex labels.}  Since \(\gamma\) is a closed simple curve, it partitions the tet boundary into two pieces.  To define spanning disks, we must classify these pieces as ``inside'' or ``outside'' \(\gamma\).  When \(\gamma\) is of corner type, we mark the distinguished vertex as inside and all other vertices as outside.  When \(\gamma\) is contractible, all vertices are ``outside,'' and when \(\gamma\) is diagonal we do not require an inside/outside distinction.

\noindent\paragraph{Spanning disks.} The final spanning disk depends on loop type:
\begin{itemize}[leftmargin={5mm}]
   \item \textbf{Diagonal.} If \(\gamma\) is of diagonal type, we triangulate it by connecting each of its segments to its center of mass \(a\).
   \item \textbf{Corner and contractible.} Otherwise, we build a piecewise linear disk contained mostly in the tet boundary, rather than its interior.  We first split each triangle \(\sigma\) of the tet boundary along the segments of \(\gamma\), yielding a collection of planar polygons \(\sigma \setminus \gamma\).  Along each edge \(\ij\) of \(\sigma\), the first segment inherits the label of vertex \(i\); subsequent segments alternate between ``inside'' and ``outside'' labels.  Finally, we emit any polygon \(P\) bound by ``inside'' segments and segments of \(\gamma\).  If one of \(P\)'s vertices coincides with the ``inside'' vertex of a corner loop, we omit this vertex from \(P\).  For corner-type loops we also emit a triangle at the corner.
\end{itemize}

\begin{figure}[t]
   \centering
   \includegraphics{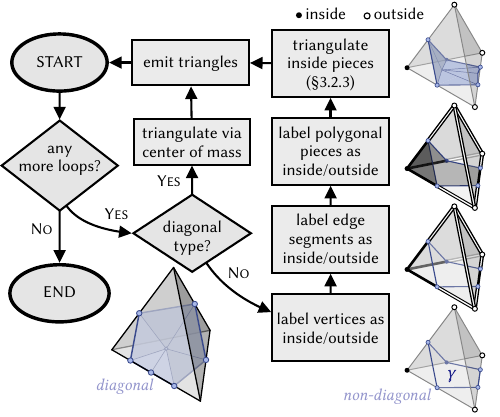}
   \caption{For non-normal loops, we either insert a single additional point, or dice up the tet boundary along loop segments.}
   \label{fig:FlowchartNonNormal}
\end{figure}

\subsubsection{Simplicial Embedding}
\label{sec:SimplicialEmbedding}

At this point, our mesh has manifold connectivity, but may be an immersed \(\Delta\)-complex \cite[Section 2.1]{Hatcher:2002:AT} rather than an embedded simplicial complex.  In particular, we can get two oppositely-oriented copies of polygons bound by non-normal loops \(\gamma\), associated with two adjacent tetrahedra sharing a face \(f\) (\figref{DeltaSplitting}, \figloc{left}).  There are three polygon types: a quad, a hexagon, and a pentagon made at a corner.  Topologically, these pairs form a manifold tube (quad, pentagon) or punctured sphere (hexagon), but for downstream applications one may require a standard, intersection-free triangle mesh.

We hence ``push'' polygons into the tet interior: we insert the midpoints of all polygon edges contained in any edge of \(f\), and move them a small distance in the inward normal direction.  For pentagons, we also insert the midpoint of the edge opposite the distinguished corner.  The triangulation patterns in \figref{DeltaSplitting}, \figloc{right} ensure that inset regions stay within the tetrahedron.

This procedure is needed only when taking the union of the two polygons would yield a nonmanifold edge.  If manifold connectivity is not required, one can also just discard one of the polygons.

\subsection{Dual Surface Reconstruction}
\label{sec:DualSurfaceReconstruction}

\begin{figure}[t]
   \includegraphics{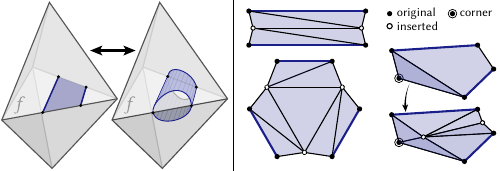}
   \caption{\figloc{Left:} Pairs of identical polygons generated on triangles \(f\) shared by two tets define manifold connectivity, but degenerate geometry. \figloc{Right:} we obtain an intersection-free result by inserting a few points into each polygon, and pushing them slightly inside each tet.}
   \label{fig:DeltaSplitting}
\end{figure}

Inspired by \emph{dual contouring}~\cite{ju2002dual} and \emph{dual marching}~\cite{schaefer2004dual}, our dual subgrid method improves reconstruction accuracy (for equal grid resolution) by allowing output vertices to be placed freely, rather than restricting them to grid edges.  Dual contouring associates a single vertex with each 3-dimensional cell of the volumetric grid, whereas dual marching associates a vertex with each 2-dimensional polygon of the primal surface mesh  (\figref{ContouringVsMarching}).  Traditionally, these two strategies coincide in the tetrahedral case, since classic marching tets produces at most a single polygon per cell (\figref{Taxonomy}, \figloc{far left}).  For subgrid marching, where we can have many polygons per tet, dual \emph{marching} is the appropriate choice, so that we preserve topologically distinct features.

The greater accuracy of dual methods comes at the risk of self-intersection, since vertex placement is far less restricted.  In our subgrid method, however, dual connectivity is much easier to build than in the primal version, since we no longer need a triangulation that supports an intersection-free embedding.  Moreover, in situations where an intersection-free guarantee is required, one could still adopt conservative placement strategies, or revert to primal reconstruction for cells with mesh self-intersections (though we do not pursue such strategies here).

Our dual method proceeds in two steps: first construct the dual mesh connectivity \(M^\prime\) (\secref{DualConnectivity}), then compute vertex coordinates for this dual mesh (\secref{DualGeometry}).

\subsubsection{Dual Connectivity}
\label{sec:DualConnectivity}

\begin{figure}[t]
   \includegraphics{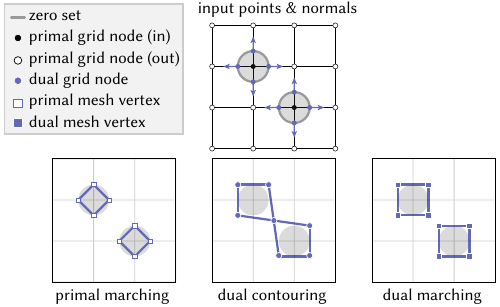}
   \caption{Dual contouring produces a single vertex per grid cell, whereas dual marching yields a vertex for each component of the primal mesh within each cell---here shown in the context of 2D \emph{marching squares}.  Subgrid meshes can have many components per cell, making dual marching a more natural fit.}
   \label{fig:ContouringVsMarching}
\end{figure}

To obtain mesh connectivity for dual reconstruction, we first construct closed boundary loops \(\Gamma\) for each tet, exactly as in \secref{BoundaryCurveReconstruction} (skipping the loop triangulation step from Section \ref{sec:PrimalSurfaceReconstruction}).  These loops, viewed as (possibly nonplanar) polygonal faces, already define a primal mesh \(M\).

\setlength{\columnsep}{1em}
\setlength{\intextsep}{0em}
\begin{wrapfigure}{r}{93pt}
   \includegraphics{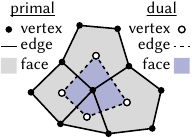}
\end{wrapfigure}
To get the dual connectivity \(M^\prime\), we build the usual \emph{Poincar\'{e} dual}, replacing each primal polygon with a dual vertex, each interior primal edge with a dual edge connecting the dual vertices from adjacent polygons, and each interior primal vertex with a dual polygon.  We then triangulate the polygons of \(M^\prime\), without inserting any new vertices.

\subsubsection{Dual Geometry}
\label{sec:DualGeometry}

\setlength{\columnsep}{1em}
\setlength{\intextsep}{0em}
\begin{wrapfigure}{r}{93pt}
   \includegraphics{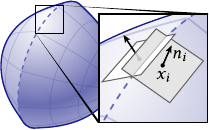}
\end{wrapfigure}
The only remaining question is where to place dual vertices.  The basic observation of dual reconstruction methods is that points on sharp features sit (nearly) on the tangent planes of nearby sample points.  \Eg{}, a vertex of a cube sits at the intersection of three adjacent faces, and even a curved crease (inset, dashed line) looks more and more like the intersection of two nearby tangent planes as we zoom in.

\paragraph{Quadratic Error Function.} More explicitly, let \(x_1, \ldots, x_k \in \mathbb{R}^3\) be the edge intersection points along \(\gamma\), and let \(N_i\) be a unit normal at \(x_i\).  Ideally, we want a point \(p \in \mathbb{R}^3\) at the intersection of the planes \(P_i\) passing through \(x_i\) and orthogonal to \(N_i\).  However, since this intersection may be empty, a standard approach is to find the best-fit point by minimizing the \emph{quadratic error function (QEF)} (or \emph{quadric error metric} \cite{garland1997surface}):
\begin{equation}
   \tag{QEF}
   \mathcal{Q}(p) := \frac{1}{k} \sum_{i=1}^k \langle N_i, p - x_i \rangle^2,
\end{equation}
where term \(i\) equals zero if and only if \(p\) sits on plane \(P_i\).

\paragraph{Regularization.} When the planes \(P_i\) are nearly coplanar, the minimizer of \(\mathcal{Q}\) can be far from the loop \(\gamma\), and far outside the tetrahedron.  Several corrective devices are explored in the isocontouring literature---we opt for a simple regularized energy
\[
   \mathcal{Q}_\lambda := \mathcal{Q}(p) + \lambda \|p - \bar{x}\|^2, \qquad \bar{x} := \frac{1}{k} \sum_{i=1}^k x_i,
\]
which pulls the minimizer closer to the mean intersection point \(\bar{x}\).  We use \(\lambda=0.1\) for all examples in this paper.

Notice that input normals need not be consistently oriented, since this energy is invariant with respect to sign flips on the normals \(n_i\).  Also note that, in our setting, a loop \(\gamma\) always has at least three vertices \(x_i\); degenerate cases \(k=1\) or \(k=2\) never arise.  Moreover, the regularizer ensures a unique minimum even when \(\mathcal{Q}\) is degenerate (\eg{}, due to parallel normals).

\section{Evaluation and Comparisons}
\label{sec:EvaluationAndComparisons}

To evaluate our method, we compared classic and subgrid marching tetrahedra algorithms on two tasks:
\begin{itemize}
   \item \textbf{Isocontouring.} Given a closed-form SDF \(f: \mathbb{R}^3 \to \mathbb{R}\), extract a mesh approximating the zero set \(\Scal\) (\eqref{ZeroSet}).
   \item \textbf{Volumetric mesh repair.} Given a polygon soup \(M_0\), produce a manifold approximation \(M\) at a target sampling rate (determined by the grid spacing).
\end{itemize}
Sample results reconstructed from SDFs can be seen in \figref{SDF:examples}, and reconstructions from meshes can be seen in Figures~\ref{fig:simple:mesh:examples}, \ref{fig:jet:engine}, and \ref{fig:coral:reef}. As expected, even at equal grid resolution, subgrid marching captures small details and thin features lost by classic marching tets.

\subsection{Data and Evaluation Metrics}
\label{sec:DataAndEvaluationMetrics}

\subsubsection{Datasets}
\label{sec:Datasets}

For isocontouring we used SDFs from \citet{Takikawa:2022:SDF}, ported to C++~\cite{Baktash:2026:CSD}.  For mesh repair we used the 3200 model DORA benchmark \cite{Chen:2025:Dora}, which contains data from Objaverse \cite{deitke:2023:objaverse}, ABO \cite{collins:2022:abo}, GSO \cite{downs:2022:google:gso}, and Meta \cite{Dong:2025:CVPR:meta}.

\begin{figure}
\includegraphics{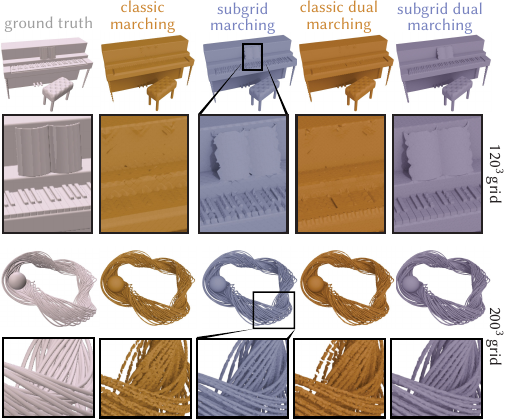}
   \caption{Comparison of isosurfacing methods on two closed-form SDFs, rendered directly at \emph{far left} via sphere tracing. Our subgrid method is able to preserve thin features like the sheet music and piano keys on the top row, or the woven cables on the bottom row, which are lost or broken by classic marching methods (both primal and dual).}
   \label{fig:SDF:examples}
\end{figure}

\subsubsection{Evaluation Metric}
\label{sec:EvaluationMetric}

We compute error in the reconstructed mesh \(M\) relative to the ground truth surface \(\Scal\) using the \emph{symmetric chamfer distance}
\[
   d(M,\Scal) := \frac{1}{2}
   \left( \frac{1}{|M|}\int_{M}\min_{y\in \Scal}\|x-y\|_2\,\mathrm{d}x
   + \frac{1}{|\Scal|}\int_{\Scal}\min_{x\in M}\|y-x\|_2\,\mathrm{d}y \right),
\]
where \(|\cdot|\) denotes surface area.  In practice, we approximate each integral by sampling 10k points uniformly at random (with respect to surface area), and perform a closest point query for each sample~\cite{sawhney2021fcpw}.

\subsection{Comparisons}
\label{sec:Comparisons}

We compared our primal and dual subgrid methods to classic marching tetrahedra~\cite{doi1991efficient}, including a dual variant using the regularized QEF from \secref{DualGeometry}.  As expected, both classic and subgrid methods converge to zero error with increasing grid resolution (\figref{distance:plots}, \figloc{left}).

\paragraph{Geometric accuracy.} For equal grid resolution and comparable compute time, we achieve lower reconstruction error than classic methods across diverse inputs (\figref{distance:plots}).  However, better accuracy does not result from merely taking more samples of the input surface \(\Scal\): for example, in \figref{castle:stats} we achieve the same accuracy as classic marching with far fewer samples (71M vs. 125M).  Simultaneously, we get a dramatically smaller output mesh, with \(\sim\)7x fewer triangles.  In general, subgrid marching is often more sample efficient than classic marching, and never significantly worse.  Improved per-triangle accuracy is also consistent across the large DORA dataset, as shown in \figref{dora:triangle:count}.  Statistics for examples from individual figures are provided in Section 2 of the supplement.

\paragraph{Topology preservation.} Chamfer distance alone does not tell the full story: even at very low resolutions (\eg{}, \(100^3\)) where neither method captures much \emph{geometric} detail, the subgrid method does a far better job of preserving the \emph{topology} of thin features---yielding a mesh that is still usable, rather than one with unacceptable holes.  See for instance \figref{simple:mesh:examples}, \figloc{top}, \figref{jet:engine}, \figloc{top}, and \figref{coral:reef}, \figloc{top}.

\begin{figure}
    \centering
    \includegraphics{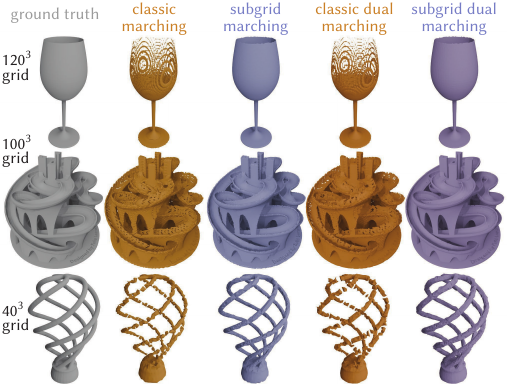}
    \caption{Even at very low resolutions, our subgrid approach better preserves surface topology---which can have a profound impact on visual fidelity.}
    \label{fig:simple:mesh:examples}
\end{figure}

\paragraph{Exact intersections are not enough.} \appref{mod2:subgrid} describes an ablation where we compare subgrid and classic marching to an intermediate method that (like ours) places vertices at exact edge-surface intersection points, but that (like classic marching) emits at most one polygon per tetrahedron. This approach improves accuracy slightly compared to classic marching (by around 10--20\%), but the inability to produce subgrid-resolution output means it still suffers from the same large-scale topological artifacts---emphasizing the importance of using \emph{all} intersections in the output mesh.

\begin{figure}
   \includegraphics{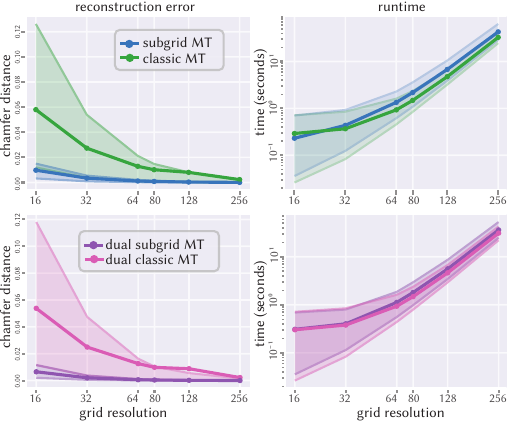}
   \caption{Subgrid marching yields significantly lower error than classic marching \figloc{(left)} for nearly identical compute time \figloc{(right)}.  Here the solid line shows the mean and the shaded region shows the 10th--90th percentile, across the 3200 models from the DORA dataset.}
   \label{fig:distance:plots}
\end{figure}

\subsection{Implementation}
\label{sec:Implementation}

We implemented both our method and classic marching in a common C++ codebase, using identical data structures and code for all methods, apart from the core contouring routines.  We used a single-threaded implementation, though the algorithm is straightforward to parallelize over grid cells using standard patterns for marching cubes/tets---we leave such acceleration to future work.  All timings were measured on a Mac Studio with an M1 Ultra CPU and 128 GB of memory.

In the supplemental material, we include a standalone HTML/JS implementation and visualizer of our reconstruction algorithm for a single tet. This routine is the core of our method, analogous to the usual table of 256 stencils for marching cubes ~\cite{lorensen1987marching}; all other steps (e.g., iterating over grid cells) are analogous to those from standard marching implementations.

\subsubsection{Tetrahedral Grid}
\label{sec:TetrahedralGrid}

\setlength{\columnsep}{1em}
\setlength{\intextsep}{0em}
\begin{wrapfigure}{r}{86pt}
   \includegraphics{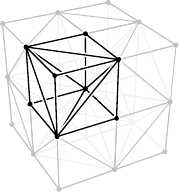}
\end{wrapfigure}\phantom{}\\
An attractive feature of simplicial marching algorithms, including our subgrid scheme, is that they apply to any simplicial mesh---whether regular or unstructured.  For simplicity and efficiency, however, we split each cube of a regular grid into five tetrahedra (see inset), avoiding the need to build or store an explicit tetrahedral mesh.  To obtain a conforming triangulation, the tessellation of each cube is reflected across faces shared by two cubes (variously known as an \emph{alternating} or \emph{checkerboard 5-tet cubic-grid}). Throughout we use \(N^3\) to denote a \(N \times N \times N\) grid.

\subsubsection{Finding Intersections}
\label{sec:FindingIntersections}

For many classes of implicit surfaces, there are already algorithms that reliably find all intersections along a given ray \(r(t)\) (developed for visualization via ray tracing):

\smallskip

\noindent\begin{tabular}{@{}r|l}
   algebraic surfaces & polynomial root finding \cite{hanrahan1983ray} \\
   SDF / Lipschitz & sphere tracing \cite{hart1996sphere} \\
   neural implicits & interval analysis~\cite{Sharp:2022:spelunking} \\
   harmonic functions & Harnack tracing~\cite{gillespie2024ray} \\
\end{tabular}

\smallskip

We can apply these same algorithms to find intersections between the surface \(\Scal\) and each edge \(\ij\), using the ray
\[
   r(t) := v_i + t(v_j-v_i),
\]
and keeping only those intersections found strictly on the edge interior \(0 < t < 1\).

For dual marching, we must also evaluate unit normals at each intersection point.  When isocontouring an implicit function \(f\) we estimate normals using a finite difference approximation of the gradient \(\nabla f\); for volumetric mesh repair we simply evaluate the normals of the input polygonal mesh \(M_0\).

\setlength{\columnsep}{1em}
\setlength{\intextsep}{0em}
\begin{wrapfigure}{r}{94pt}
   \includegraphics{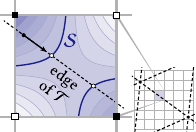}
\end{wrapfigure}
An important special case are grid-based functions \(f\) (such as those from CT scans or level set based simulation~\cite{OsherFedkiw}), which define low-order piecewise polynomials per regular grid cell (\eg{}, via multilinear interpolation).  Here, roots are easily computed in closed form, making it possible to directly contour high-res data at a lower output resolution---without losing important topological features.

\begin{figure}
\includegraphics{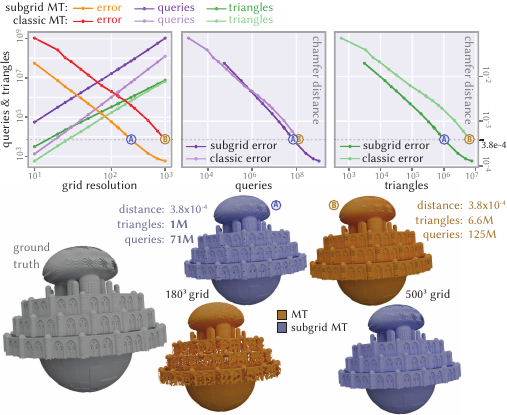}
   \caption{Subgrid marching outperforms classic marching, independent of how we quantify the difference.  Here for instance we achieve better approximation of the ground truth surface at the same grid resolution \figloc{(top left)}, the same number of input queries \figloc{(top center),} and the same number of output triangles \figloc{(top right)}.  Error is measured via chamfer distance, relative to ground truth. \figloc{Bottom}: The output of subgrid MT on a $180^3$ grid has, in this case, comparable error to classic MT on a $500^3$ grid.}
   \label{fig:castle:stats}
\end{figure}

\begin{figure}
      \includegraphics{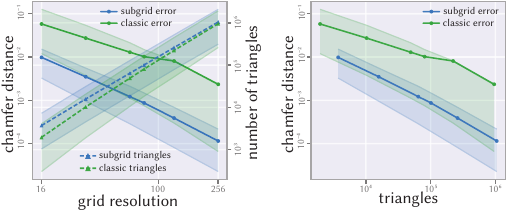}
      \caption{ Our subgrid method generally produces fewer polygons than classic marching for equal reconstruction error.  Here we plot the relationship between error and triangle count on the DORA dataset.  Center lines show mean performance; shaded regions indicate the 10th--90th percentile.}
      \label{fig:dora:triangle:count}
\end{figure}

\subsection{Subgrid Marching Triangles}
\label{sec:SubgridMarchingTriangles}

Our procedure for reconstructing a curve within a single triangle also gives us a subgrid version of the \emph{marching triangles} algorithm for contouring functions in two dimensions: we simply apply this subroutine to all triangles in a 2D triangular grid, given intersection counts \(e_{i\!j}\) on grid edges.  Since we no longer need to worry about compatibility between neighboring tets, we can make one small modification: in the odd sum case, rather than subtracting 1 from all 3 edge coordinates in a face, we instead subtract 1 from the largest edge coordinate, breaking ties arbitrarily.
For the dual method, we use the same regularized QEF as in \secref{DualGeometry}, but in two dimensions.
\figref{MarchingTriangles} shows a comparison between this subgrid method and classic marching triangles.
Note that, as in 3D, our reconstruction procedure makes no assumptions about the structure of the triangle grid, and can hence be applied to arbitrary unstructured triangulations (including surface meshes).

\begin{figure}[t]
    \centering
    \includegraphics{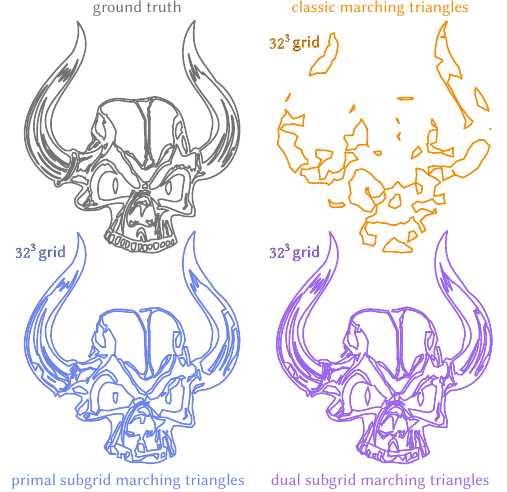}
    \caption{The subgrid approach can also be applied on a triangular grid in 2D, using the same per-triangle procedure used to construct curve segments on a tetrahedron boundary.  As in the 3D case, subgrid accuracy resolves complex geometry missed by ordinary marching triangles.}
    \label{fig:MarchingTriangles}
\end{figure}

\begin{figure*}
    \centering
    \includegraphics{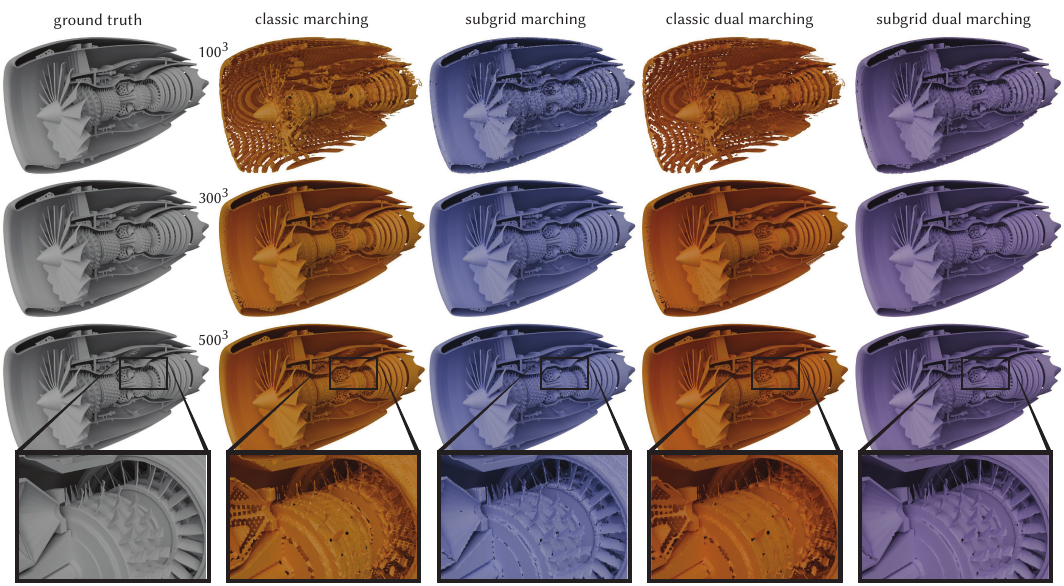}
    \caption{In this complex jet engine model, even a resolution of \(500^3\) is insufficient for classic marching tets to capture all the small topological features.  In contrast, the subgrid methods immediately capture most small details (at \(100^3\) resolution), and largely just improve their geometry as resolution increases.}
    \label{fig:jet:engine}
\end{figure*}

\begin{figure*}
    \centering
    \includegraphics{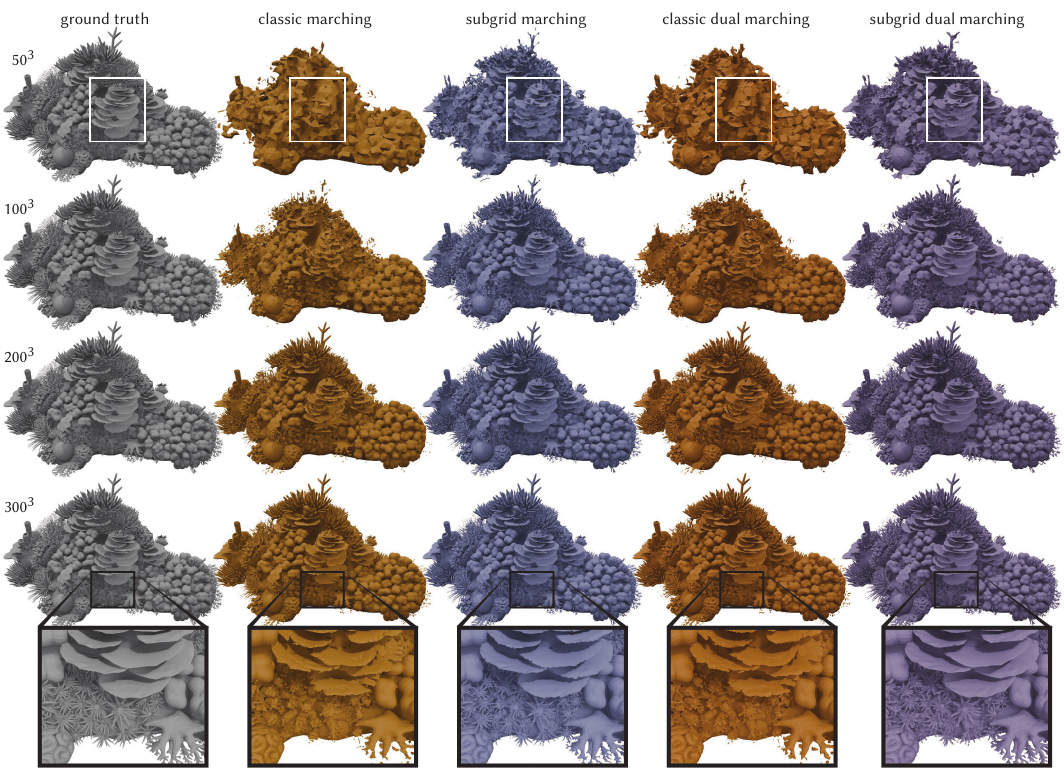}
    \caption{Beyond geometric accuracy, resolving fine topological features is essential for capturing object semantics.  Here, for instance, even at the coarsest resolution of \(50^3\), our subgrid method makes it clear that this coral reef contains leafy \emph{foliose coral} (outlined in white), whereas classic marching tets conveys only the bulk distribution of this reef.}
    \label{fig:coral:reef}
\end{figure*}

\section{Limitations and Future Work}
\label{sec:LimitationsAndFutureWork}

\paragraph{Speed Improvements.} The \emph{per-tetrahedron} cost of our algorithm can of course be greater than that of marching tets---since in general we may produce more polygons per tet.  However, when considering the amortized cost \emph{per output triangle}, the speeds are quite comparable: the main overhead is the additional logic of our local reconstruction procedure.  In cases where speed (or thread synchronization) is critical, it may be beneficial to precompute a table of common stencils (say, for tets with edge coordinate sum below some fixed number), rather than computing them on the fly.  We did not attempt these kinds of optimizations in our implementation.

\setlength{\columnsep}{1em}
\setlength{\intextsep}{0em}
\begin{wrapfigure}{r}{74pt}
   \includegraphics{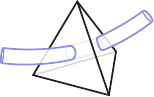}
\end{wrapfigure}
Just as classic marching can easily miss thin 2-dimensional sheets not sampled by any node, our subgrid approach can easily miss thin 1-dimensional curve- or tube-like features not sampled by any edge (see inset).  Further generalizing reconstruction to manifolds of codimension \(m > 1\) is an interesting question for future work.

Although the mesh we construct in the primal case is formally manifold, we must introduce additional elements to support a simplicial embedding (\secref{SimplicialEmbedding}).  Another good question for future work is whether we can instead transform the edge coordinates such that they describe simpler spanning disks (\eg{}, those bound by normal curves).

At the moment we do not carefully treat the question of how to split quads into triangles---as explored by \citet{shen2023flexible} in the context of the \emph{FlexiCubes} algorithm, additional splitting weights might help our method to better capture sharp features.  More generally, since our approach does not fundamentally change the overall structure of standard marching algorithms, it should be possible in the future to incorporate recent extensions to classic marching---such as differentiable optimization of grid nodes~\cite{shen2021deep}.

\begin{acks}
This work was funded by the National Science Foundation under awards 2212290 and 2504890.
The authors wish to thank Dave Bachman, Saul Schleimer, and Eric Sedgwick for helpful conversations.
The soap film image in \figref{SoapBubble} was generated by Nano Banana Pro \cite{NanoBanana}, conditioned on a render of a discrete minimal surface.
The marble track model in \figref{simple:mesh:examples} is designed by Tulio Laanen, and the vase model in \figref{mod2:ablation} by Hiroaki Nishimura.
\end{acks}

\bibliographystyle{ACM-Reference-Format}
\bibliography{SubgridTets}

\appendix

\section{Manifold Property}
\label{app:ManifoldProperty}

In the even-sum case---when the edge coordinates for all faces of the grid obey the even sum condition of \secref{NormalCurves}---both the primal and dual versions of our algorithm are guaranteed to produce a manifold output. If the even-sum condition is violated, the primal polygon mesh generated in \secref{PrimalSurfaceReconstruction} is still edge-manifold everywhere, and vertex-manifold at interior vertices, but it may contain nonmanifold
\setlength{\columnsep}{1em}
\setlength{\intextsep}{0em}
\begin{wrapfigure}{r}{47pt}
   \includegraphics{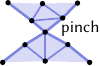}
\end{wrapfigure}
\noindent{}``pinches'' at the boundary. Note, though, that such pinches are easily split into several manifold vertices, allowing our primal and dual algorithms to produce manifold outputs even when the even-sum condition does not hold.

\begin{theorem}
   \label{thm:ManifoldConnectivity}
   In the even sum case, both the primal and dual algorithms yield manifold connectivity, \ie{}, an abstract simplicial 2-complex homeomorphic to a manifold without boundary.
   In the odd sum case, the primal algorithm yields an edge-manifold simplicial complex, which only has non-manifold vertices on the boundary.
\end{theorem}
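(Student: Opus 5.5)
The plan is to prove the statement \emph{locally}, by showing that the link of every simplex in the output is what it should be in a surface. There are two local pieces to establish. First, within a single tet, each closed loop \(\gamma \in \Gamma\) is spanned by a triangulated closed disk \(D_\gamma\) whose boundary is exactly \(\gamma\). Distinct loops get vertex-disjoint disks, except at the shared corner in the corner-type non-normal case, which I would treat separately. Second, the loops on a triangle \(f\) shared by two tets are literally identical, since the face procedure of \secref{BoundaryCurveReconstruction} depends only on the data on \(f\) and on the canonical orientation \(i<j\). Granting these, the output is obtained by gluing disks along boundary segments, and manifoldness reduces to a count of incidences.

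\textbf{Per-tet disks.} I would verify that each case of \secref{NormalBoundaryCurves} and \secref{NonNormalBoundaryCurves} triangulates \(\gamma\) as a disk. Corner triangles and split quads are immediate. Coning a loop (single loop, diagonal type) to one Steiner point gives a disk. For octagons, each loop is coned to its own Steiner point \(x_i\). The subdivision case needs induction on the edge-coordinate sum, using \thmref{AlgorithmTermination} for well-foundedness. By the inductive hypothesis each of the four sub-tets yields disks bounded by its normal loops. The internal faces of the subdivision carry identical curves on both sides, by the same determinism argument, so the gluing argument below applied inside the big tet shows that the union is a surface bounded by the original loops. I would then check it is a union of disks via Euler characteristic: the pieces are disks glued along arcs, and no closed component appears because each internal-face curve has endpoints on the original loops. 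This step is where I expect the main obstacle, together with the non-normal case. In the non-normal case the emitted boundary polygons must form a single disk, which is the complementary region of a simple closed curve on the punctured sphere. They may initially form only a \(\Delta\)-complex, so I would invoke the push step of \secref{SimplicialEmbedding} to obtain an honest simplicial complex. The vertex omitted at the distinguished corner and the separately emitted corner triangle must be shown to close up the disk.

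\textbf{Edges.} An output edge is either interior to some \(D_\gamma\), in which case it lies in exactly two triangles of that disk, or it is a segment of \(\gamma\) lying in a tet face \(f\). In the latter case it belongs to exactly one loop in each of the (at most two) tets containing \(f\), hence to exactly two triangles when \(f\) is interior. Boundary faces of \(\Tcal\) carry no intersections in the closed case. This gives edge-manifoldness in both the even and odd cases; in the odd case, segments of discarded open curves simply become boundary edges.

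\textbf{Vertices.} Steiner vertices have a disk-interior link, which is a cycle. For an intersection point \(p\) on an interior grid edge \(ij\), the key parity fact is that the residual count \(r\) has the same parity as the face sum. Hence in the even-sum case no point is skipped in step (3b), and step (2) never fires, so \(p\) is an endpoint of exactly one segment in each face around \(ij\). Going cyclically around \(ij\), faces and tets alternate. In each tet, \(p\) lies on one disk boundary between its two face-segments, contributing a fan (an arc of the link), and consecutive fans share the segment on the common face. So the link of \(p\) is a single cycle, and the even-sum complex is a closed surface.

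\textbf{Dual and odd cases.} For the dual algorithm I would note that the primal polygonal mesh \(M\) is then a closed polygonal surface. Its Poincar\'e dual is again a closed surface, because each primal vertex link is a cycle and so yields a polygon. Triangulating these polygons without new vertices preserves this, possibly after checking that no dual polygon repeats a vertex. In the odd-sum case, step (2) or the skipped residual point leaves some \(p\) with degree one in some faces. The fans around \(p\) then form a disjoint union of arcs rather than a cycle. This is exactly a boundary vertex that may be non-manifold (a pinch), while every interior vertex still has a cycle link by the argument above.
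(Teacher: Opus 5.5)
Your proposal is essentially the paper's proof. It gets edge-manifoldness by charging each loop segment to the grid face it lies in (shared by at most two tets), vertex-manifoldness by walking around the ring of tets at the grid edge through $p$ so the per-tet fans close up into one cycle, and the dual case via the Poincar\'e dual of the loop mesh $M$; your observation that the residual count $r$ has the parity of the face sum usefully justifies the paper's bare assertion that, in the even-sum case, every tet contributes exactly one polygon at $p$.

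One side claim is wrong but harmless: in the subdivision case, internal-face curves need not end on the original loops (face $aij$ carries $d_2$ corner arcs at $a$ joining edges $ai$ and $aj$). You never need the pieces to be disks, though. It suffices that their union is a surface bounded by the original loops, and your own edge/vertex argument applied inside the subdivided tet already gives this, since by \eqref{Subdivision} every internal face has even sum.
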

\begin{proof}
The main difficulty lies in showing that the cell complex formed before \secref{SimplicialEmbedding} is manifold or edge-manifold as appropriate, which is done in \lemref{ManifoldCellComplex}. Triangulating each polygon \ala{} \secref{SimplicialEmbedding} turns the mesh into a simplicial complex (since each polygon's triangulation is itself a simplicial complex) homeomorphic to the original cell complex. Thus, the primal output is manifold or edge-manifold as appropriate. For the dual algorithm, note that triangulating the dual of a manifold simplicial complex (possibly with boundary) yields another manifold simplicial complex, which has boundary if and only if the original had boundary.
\end{proof}

\begin{lemma}
   \label{lem:ManifoldCellComplex}
   The primal polygons constructed before \secref{SimplicialEmbedding} form an edge-manifold cell complex. In the even-sum case, the cell complex is also guaranteed to be vertex manifold, and in either case interior vertices are always manifold.
\end{lemma}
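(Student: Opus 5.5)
The plan is to argue locally, classifying cells of the complex by where they lie relative to the grid $\Tcal$. The polygons are built independently in each tet, and their edges fall into three kinds. \emph{Interior edges} lie strictly inside a tet (Steiner edges, octagon or subdivision edges). \emph{Face edges} are the boundary-curve segments of \secref{BoundaryCurveReconstruction} lying in a grid triangle $f$. \emph{Edge-segments} lie along a grid edge and come from scoops or from the ``inside'' pieces of non-normal spanning disks. Likewise, vertices are either Steiner points interior to a tet or intersection points on grid edges. The key input is that the boundary curve on each face $f$ is determined by the edge coordinates and locations on $f$ alone. It is therefore identical as seen from the two tets sharing $f$, or from the one tet if $f$ lies on $\partial\Tcal$.

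First I would show that within a single tet, the polygons spanning the loops $\Gamma$ (or their recursive subdivision) form a disjoint union of disks. I would argue this case by case through \figref{FlowchartNormalBoundaryLoops} and \figref{FlowchartNonNormal}: corner triangles, split quads, octagon fans, single-Steiner-point fans, and recursion, which terminates by \thmref{AlgorithmTermination}. For recursion, the subdivided edge coordinates \eqref{Subdivision} agree on shared internal faces, so the pieces glue into disks by the same argument one level down. For non-normal loops, each polygon is a region of $\partial T\setminus\gamma$ whose inside/outside labels are consistent, so the union is a disk with boundary $\gamma$. Two things follow. Each interior edge lies in exactly two polygons, and each loop edge lies in exactly one polygon of that tet. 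The link of any Steiner or interior vertex is a circle.

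Next I would glue across faces. A face edge on a face $f$ belongs to exactly one loop in each tet incident to $f$, and hence lies in at most two polygons overall. An edge-segment on a grid edge $ij$ is used by disks in the tets around $ij$. Here I would check that the inside/outside alternation along $ij$ makes each such segment bound polygons in exactly two of the cyclically ordered tets around $ij$. This is the source of the doubled polygons mentioned in \secref{SimplicialEmbedding}. Together these give edge-manifoldness. For a vertex $p$ on grid edge $ij$, its link is assembled from arcs, one per polygon containing $p$, going around the tets of the star of $ij$. Consecutive tets share a face, and on that face $p$ has exactly one incident segment, so the arcs chain together. If $p$ is incident to a segment in every face around $ij$, the link closes into a single circle, and a single circle rather than several is forced because $p$ has degree 2 in each face curve. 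Otherwise the arcs end, which is possible only at open-curve endpoints or at the boundary of $\Tcal$.

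The main obstacle is the vertex analysis, especially showing where arcs can fail to close. In the even-sum case, every face satisfies Steps (1) and (3) without the subtraction of Step (2). Every intersection point then has degree exactly 2 in each incident face's curve, with the odd-$r$ skip not arising, so no open curves are discarded and every link is a single circle; the boundary of $\Tcal$ contributes honest manifold boundary. In the odd-sum case, Step (2) or the skipped residual point creates degree-1 endpoints. The corresponding open curves are discarded, so a vertex's link may split into several arcs, giving a ``pinch''. Such a vertex is necessarily on the mesh boundary. Conversely, an interior vertex, meaning one whose link has no free ends, has its link closed up by the same chaining argument into one circle. I would verify carefully that the chaining never revisits a tet, which would produce a figure-eight link. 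This holds because the face segments at $p$ are unique per face.
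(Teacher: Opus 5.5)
\textbf{Where the plan fails: edges along grid edges.} Your vertex analysis is essentially the paper's argument. You take one spanning disk per tet through an intersection point $p$ on grid edge $ij$, and chain these disks across the shared face segments around the cycle (or fan) of tets containing $ij$. You also note that Step (2) and the odd-$r$ skip never fire in the even-sum case. The gap is in your edge count. You group scoops together with the ``inside'' pieces of smeared disks as edges living on the grid edge $ij$. You then plan to show, from the inside/outside alternation, that each such edge bounds polygons in exactly two of the tets around $ij$. That claim is false. Let $ij$ carry exactly two intersections $p,q$, with zero coordinates on all nearby edges. By Step 3(a), every face around $ij$ gets a scoop from $p$ to $q$. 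Each tet around $ij$ then has a contractible two-segment loop made of its two scoops, and each tet labels the piece $[p,q]$ of $ij$ as inside. So all $m$ tets emit smeared polygons bounded by $[p,q]$. The labels are computed per loop and per tet, and nothing couples them across tets. If coincident segments along $ij$ are treated as one edge, that edge lies in polygons from all $m$ tets, and the complex is not edge-manifold.

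\textbf{The missing idea and how to repair it.} The paper uses a combinatorial convention: no edge is attributed to a grid edge. A scoop is a segment built by the curve construction on one particular face. After an infinitesimal perturbation into that face, it is counted like any other face segment. It therefore lies only in polygons from the (at most two) tets sharing that face. This is why the paper remarks that several distinct scoop edges can join the same pair of vertices. The inside pieces are interior to a single tet's spanning disk, which the proof treats as one polygon, so they are never shared across tets. Every polygon-boundary edge is then attached to a unique grid face. Edge-manifoldness follows at once, because a face lies in at most two tets and the loops within one tet are disjoint. In the example above, the output becomes a sphere of $m$ panels around $[p,q]$. Replace your alternation check with this convention, and the rest of your plan goes through. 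Two smaller slips. First, $p$ is an endpoint of exactly one segment per face, so it has degree two per tet, not per face. Second, you do not justify that the subdivision pieces glue into disks. You only need them to form a manifold, and the same face-chaining argument, applied inside the subdivided tet, gives that.
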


\textsc{Proof.}
At a high level, the cell complex inherits its manifold properties from the background tetrahedral grid. First, we show that the cell complex is edge manifold,
meaning that no edge belongs
\setlength{\columnsep}{1em}
\setlength{\intextsep}{0em}
\begin{wrapfigure}{r}{57pt}
   \includegraphics{images/manifold-edge-inset.pdf}
\end{wrapfigure}
\noindent{}to more than two polygons. The main idea is that we can associate each segment along a polygon's boundary to a face in the background tetrahedral grid that it runs through. Since each face of our background grid is incident to one or two tetrahedra, such an edge can only belong to one or two polygons.
Note that this reasoning applies even to ``scoop'' edges

\setlength{\columnsep}{1em}
\setlength{\intextsep}{0em}
\begin{wrapfigure}{l}{87pt}
   \includegraphics{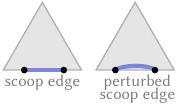}
\end{wrapfigure}
\noindent{}(\secref{NormalCurves}) which geometrically run along the edges of the tetrahedral grid. These scoop edges are constructed in \secref{NormalCurves} by constructing segments from their edge coordinates in some face, and even an infinitesimal perturbation is enough to push them into their associated face. Thus, when defining the mesh connectivity these scoop edges are associated to at most two polygons, and are thus manifold edges in the cell complex. But note that as depicted in \figref{DeltaSplitting} (\figloc{left}) multiple scoop edges may connect the same pair of vertices.

\setlength{\columnsep}{1em}
\setlength{\intextsep}{0em}
\begin{wrapfigure}{r}{50pt}
   \includegraphics{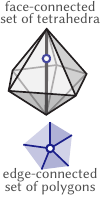}
\end{wrapfigure}
To conclude, we show that in the even-sum case the cell complex is vertex manifold. Since every edge is manifold, it suffices to check that the faces neighboring each vertex \(i\) form a single edge-connected component. Note that every edge in the tetrahedral grid is contained in a single face-connected set of tetrahedra. In the even-sum case, each tetrahedron produces one polygon containing the vertex, and two such polygons share an edge if the corresponding tetrahedra share a face, so the neighboring polygons do indeed form a single edge-connected component.

If the even-sum condition does not hold, some neighboring tetrahedra may not produce a polygon incident on \(i\). But if \(i\) is an interior vertex then each of its neighboring tetrahedra must have produced an incident polygon, and \(i\) must be vertex-manifold.
\qed

\section{Tetrahedral Normal Curves}
\label{app:TetrahedralNormalCurves}

Here we establish some basic properties of normal curves on a tetrahedron. Note that we assume only that the \emph{curves} obey the normality conditions from \secref{NormalCurves} on the boundary faces---they are not required to bound normal \emph{surfaces} inside of the tetrahedron.

\subsection{Normal Coordinate Conversion}
\label{app:NormalCoordinateConversion}

We label vertices of our tetrahedron with $i,j,k,l$, and denote the edge coordinates with $e_{\ij}$, $e_{ik}$, $e_{il}$, $e_{jk}$, $e_{jl}$, $e_{kl}$, and label the normal coordinates by $t_i$, $t_j$, $t_k$, $t_l$,  $q_{\ij}$, $q_{ik}$, and $q_{il}$. Note that $q_{\ij}$ is the number of diagonal cuts that separates vertices $i,j$ from $k,l$ and so we have $q_{\ij} = q_{kl}$; and they both represent the same quantity. We use them interchangeably for ease of notation at various places.

First we show that \eqref{IncidenceMap} has an integer solution if and only if edge coordinates obey the normality conditions (triangle inequality and even sum).
We then use this fact to decompose the edge coordinates into corner and diagonal coordinates ($t_i$ and $q_{\ij}$).

\begin{theorem}
    \label{thm:int:solution}
    \eqref{IncidenceMap} has a nonnegative integer solution if and only if normality conditions are satisfied.
    In this case, there is a unique nonnegative solution where one of the three $q_{\ij}$ values is zero.
\end{theorem}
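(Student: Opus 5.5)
The plan is to read \eqref{IncidenceMap} one face at a time, so that the normality conditions are exactly the per-face corner coordinates of \eqref{TriangleEdgeToCorner} being nonnegative integers. Then I write down an explicit solution and control uniqueness through the kernel of \(M\). Throughout, ``normality conditions'' means that the restriction of \(\ve\) to each of the four faces satisfies the even sum and triangle inequality conditions of \secref{NormalCurves}. I use the vertex labels \(i,j,k,l\) of this appendix.

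\textbf{Necessity.} Restrict any column of \(M\) to a face \(ijk\):
\begin{itemize}
\item the triangle \(\vv_i\) contributes \(1\) to \(e_{\ij}\) and \(e_{ik}\), i.e.\ one corner arc at \(i\);
\item \(\vv_l\) contributes nothing to this face;
\item each quad contributes exactly one corner arc. For example, \(\vv_{il}\) (separating \(il\) from \(jk\)) meets \(ij\) and \(ik\), giving a corner arc at \(i\).
\end{itemize}
Hence a nonnegative integer \(\vn\) gives face corner coordinates
\[
c^{ijk}_i = t_i + q_{il},
\]
and similarly at the other corners. These are nonnegative integers, and converting them to edge coordinates yields the even sum and the triangle inequalities on that face.

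\textbf{Sufficiency (construction).} Group the edges into the three opposite pairs and set
\[
S_{ij} = e_{\ij}+e_{kl}, \qquad S_{ik} = e_{ik}+e_{jl}, \qquad S_{il} = e_{il}+e_{jk}, \qquad C := \max\{S_{ij},S_{ik},S_{il}\}.
\]
Define
\[
q_{ab} := \tfrac{1}{2}\,(C - S_{ab}), \qquad t_i := \tfrac{1}{2}\,(e_{\ij}+e_{ik}+e_{il} - C),
\]
and analogously \(t_j,t_k,t_l\). The checks are as follows.
\begin{itemize}
\item \emph{Integrality.} Every pairwise difference of the \(S\)'s, and every \(e_{\ij}+e_{ik}+e_{il}-S_{ab}\), is a sum of face perimeters modulo \(2\). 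For example, \(S_{ik}-S_{il} \equiv (e_{ik}+e_{il}+e_{kl}) + (e_{jl}+e_{jk}+e_{kl})\). So the even sum conditions make all entries integers.
\item \emph{Nonnegativity of the \(q\)'s.} This is immediate from the choice of \(C\), and at least one \(q\) is zero.
\item \emph{Nonnegativity of the \(t\)'s.} We need \(e_{\ij}+e_{ik}+e_{il} \ge S_{ab}\) for each pair. For instance, \(e_{\ij}+e_{ik}+e_{il}\ge e_{\ij}+e_{kl}\) is exactly the triangle inequality \(e_{ik}+e_{il}\ge e_{kl}\) on face \(ikl\). The remaining cases are likewise single face triangle inequalities.
\item \emph{Solving \eqref{IncidenceMap}.} Check one row, say \(e_{\ij} = t_i + t_j + q_{ik} + q_{il}\). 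Substituting, the \(C\)'s cancel and the right side reduces to \(e_{\ij}\). The other rows follow by symmetry.
\end{itemize}
I expect finding the right closed form to be the main obstacle. The guiding observation is that \(c^{ijk}_i - c^{ijl}_i = (S_{ik}-S_{il})/2\) does not depend on \(i\), which forces each \(q_{ab}\) to equal \(-S_{ab}/2\) up to a common constant.

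\textbf{Uniqueness.} Compute the kernel of \(M\). First, \(\vv_0+\vv_1+\vv_2+\vv_3 = \vv_{01}+\vv_{02}+\vv_{03} = (2,\dots,2)\), so \(\kappa := (1,1,1,1,-1,-1,-1)\) lies in the kernel. Next, \(M\) has rank \(6\): the derivation above shows that the solution is determined by \(\ve\) once \(C\) is fixed. So the kernel is spanned by \(\kappa\).

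Consequently, any two integer solutions differ by \(s\kappa\) for some \(s\), which shifts all three \(q\)'s by the same amount \(-s\). Requiring all \(q_{ab}\ge 0\) with at least one equal to zero pins down \(\min_{ab} q_{ab}=0\), and hence fixes \(s\). The unique such solution is the one constructed above.
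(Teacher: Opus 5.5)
Your proof is correct, and it reaches the result by a somewhat different route than the paper. The paper row-reduces \(M\) with \(q_{03}\) as the free variable. Its pivot rows say that each \(t_a + q_{03}\) equals a face corner coordinate, and that \(q_{01}-q_{03}\) and \(q_{02}-q_{03}\) equal half-differences of opposite-edge sums. It then asserts that a nonnegative integer solution exists exactly when the normality conditions hold, and reads the kernel vector \(\kappa\) off the free column.

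You derive the same relations conceptually: each column of \(M\) restricted to a face is a single corner arc, so \(c^{ijk}_i = t_i + q_{il}\). This gives necessity at once. You then write down the symmetric particular solution \(q_{ab} = (C - S_{ab})/2\) with \(C = \max_{ab} S_{ab}\). What this buys is a complete sufficiency argument. Since \(e_{ij}+e_{ik}+e_{il}-S_{il} = 2c^{ijk}_i\) (and likewise for the other two faces at \(i\)), your \(t_i\) is exactly the minimum, over the three faces at \(i\), of the corner coordinate at \(i\). So \(t_i \ge 0\) is precisely the triangle inequalities. The paper's RREF argument leaves this step implicit, and it is also the content of the paper's later lemma counting corner triangles.

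The one place to tighten is your rank claim. The construction alone exhibits only a single solution, so state explicitly that the face relation \(c^{ijk}_i = t_i + q_{il}\) holds for every real solution of \(M\vn = \ve\). Applied with \(\ve = 0\), it forces all three \(q_{ab}\) to be equal and each \(t_i = -q_{ab}\), giving \(\ker M = \mathrm{span}(\kappa)\); the paper gets this directly from its RREF. After that, your uniqueness argument via the shift \(s\kappa\) is the same as the paper's.
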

\begin{proof}
    The RREF format of the linear system in \eqref{IncidenceMap} is:
    \begin{align}
    \left[\begin{array}{ccccccc|c}
    1&0&0&0&0&0& 1 & \frac{e_{01}+e_{02}-e_{12}}{2}\\
    0&1&0&0&0&0& 1 & \frac{e_{01}-e_{03}+e_{13}}{2}\\
    0&0&1&0&0&0& 1 & \frac{e_{02}-e_{03}+e_{23}}{2}\\
    0&0&0&1&0&0& 1 & \frac{-e_{12}+e_{13}+e_{23}}{2}\\
    0&0&0&0&1&0&-1 & \frac{-e_{01}+e_{03}+e_{12}-e_{23}}{2}\\
    0&0&0&0&0&1&-1 & \frac{-e_{02}+e_{03}+e_{12}-e_{13}}{2}
    \end{array}\right],
    \end{align}
    so the solutions are non-negative integers if and only if even-sum condition and triangle inequality is met on all faces $\ijk$.
    We also see that the kernel of the constraint matrix is generated by the vector which adds 1 to each each corner coordinate \(t_i\) and subtracts 1 from each diagonal coordinate \(q_{\ij}\). So given any nonnegative solution, there exists a unique shift which sets the smallest \(q_{\ij}\) to zero.
\end{proof}

\begin{lemma}
\label{thm:corner:triangles}
    Let \(\Gamma\) be a collection of normal curves on the boundary of a tetrahedron. And let $\vec{n} = (t_i,t_j,t_k,t_l, q_{\ij}, q_{ik}, q_{il})$ be the corresponding normal coordinates defined in \thmref{int:solution}. Then at each vertex \(i\) we have exactly $t_i$ triangles (size $3$ polygons).
\end{lemma}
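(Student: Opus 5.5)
We will show that the number of length-3 loops around vertex \(i\) equals the number of nested corner arcs near \(i\) that every face incident to \(i\) can accommodate, and that this common value is \(t_i\) as given by \thmref{int:solution}. Throughout we use the kernel description from \thmref{int:solution}: \(\vn\) is the unique nonnegative solution with \(\min(q_{\ij},q_{ik},q_{il})=0\).

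\textbf{Step 1: outermost arcs close up.} On each face \(\ijk\) containing \(i\), the boundary-curve construction of \secref{BoundaryCurveReconstruction} connects the first \(c_i^{\ijk}\) points along the oriented edges \(\ij\) and \(ik\), counted from \(i\). Hence the arcs cutting corner \(i\) in that face are nested around \(i\). Let \(m_i := \min(c_i^{\ijk}, c_i^{\ijl}, c_i^{ikl})\). For \(r \le m_i\), the \(r\)-th point from \(i\) on \(\ij\) is joined in face \(\ijk\) to the \(r\)-th point on \(ik\). In face \(ikl\) that point is joined to the \(r\)-th point on \(il\), and in face \(\ijl\) that point is joined back to the \(r\)-th point on \(\ij\). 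This closes a loop of length \(3\) around \(i\), so there are at least \(m_i\) triangles at \(i\).

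Conversely, any length-3 normal loop linking only the edges \(\ij,ik,il\) must consist of three corner-\(i\) arcs, one in each face. By nestedness, the innermost such loops are exactly those indexed \(r\le m_i\). A loop at index \(r>m_i\) would require a corner-\(i\) arc at that depth in every face, which is impossible. So the number of triangles at \(i\) is exactly \(m_i\).

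\textbf{Step 2: identify \(m_i\) with \(t_i\).} For a tetrahedron, the corner coordinate on face \(\ijk\) is \(c_i^{\ijk}=(e_{\ij}+e_{ik}-e_{jk})/2\). Substituting the incidence relation \eqref{IncidenceMap} gives
\[c_i^{\ijk} = t_i + q_{\ij}\text{-type terms}.\]
More precisely, the quads passing corner \(i\) within face \(\ijk\) are those separating \(i\) from both \(j\) and \(k\), namely the quads of type \(q_{il}\). This yields
\[c_i^{\ijk}=t_i+q_{il},\qquad c_i^{\ijl}=t_i+q_{ik},\qquad c_i^{ikl}=t_i+q_{\ij}.\]
I would verify this small linear-algebra identity directly from the columns of \(M\). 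Taking the minimum gives \(m_i = t_i + \min(q_{\ij},q_{ik},q_{il}) = t_i\), since the normalization in \thmref{int:solution} forces that minimum to be zero.

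\textbf{Main obstacle.} The delicate step is the converse in Step 1. We must show that no additional length-3 loop appears deeper in, for example one assembled from arcs that are not the innermost, or a loop that is a triangle at a different vertex being miscounted. The first case is handled by nestedness and the consistent "first \(c_i\) pairs" rule on shared edges, which also guarantees that neighboring faces pick the same points on each edge. The second case is excluded because a length-3 normal loop meets exactly the three edges at one vertex, so its vertex is determined uniquely. I would also briefly handle ties, where two faces both attain the minimum, to confirm that the loops at depth \(m_i+1\) are genuinely broken rather than closing into a triangle.
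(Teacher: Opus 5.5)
Your proposal is correct and follows essentially the same route as the paper: you write each corner count at $i$ as $t_i$ plus one quad coordinate (your $c_i^{ijk}=t_i+q_{il}$ is the paper's $t_i+q_{jk}$, since $q_{il}=q_{jk}$), take the number of triangles at $i$ to be the minimum of these counts over the three incident faces, and conclude from $\min(q_{ij},q_{ik},q_{il})=0$. Your depth-matching argument in Step~1 (a length-3 loop must use corner-$i$ arcs at the same depth $r$ in all three faces, forcing $r\le m_i$) justifies the ``triangles $=$ minimum'' step that the paper only asserts, and it already covers the tie case you were worried about, so that case needs no separate treatment.
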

\textsc{Proof.}
Since normality is satisfied, we and the integer solution $\vec{n}$ to \eqref{IncidenceMap}, then for every edge $\ij$ we can write:
\begin{equation}
    e_{\ij} = t_i + t_j + q_{il} + q_{ik}
\end{equation}
Consider a face $\ijk$ of the tetrahedron. The number of normal segments at corner $i$, denoted by $c_{jk}^i$ is given by:
\begin{align}
    c_{jk}^i &= \frac{e_{\ij} + e_{ik} - e_{jk}}{2}\\
    &= t_i + q_{jk}
\end{align}

\setlength{\columnsep}{1em}
\setlength{\intextsep}{0em}
\begin{wrapfigure}{r}{57pt}
\includegraphics{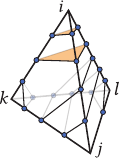}
\end{wrapfigure}
\noindent{}Thus we have at least $t_i$ normal segments at every face incident to vertex $i$. These segments circling vertex $i$ get connected up to triangles. So we have at exactly $\min\{t_i + q_{jk}, t_j + q_{kl}, t_i + q_{lj}\}$ triangles around vertex $i$; for example, the inset shows a vertex \(i\) with $t_i=2$, and $q_{il}=q_{\ij} = 0$.
Since at least one of $q_{jk}$, $q_{jl}$, $q_{kl}$ is zero (by \thmref{int:solution}), we have exactly $t_i$ triangles around vertex $i$.
\qed

\begin{theorem}
   \label{thm:NormalEdgePattern}
   Let \(\Gamma\) be a triangle-free collection of normal curves on the boundary of a tetrahedron, \ie{} a set of normal curves with no curve of length 3. Then there are pairs of edges with edge coordinates \(d_1\), \(d_2\), and \(d_1+d_2\) for integers \(0 \leq d_2 \leq d_1\).
\end{theorem}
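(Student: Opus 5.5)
We work with the normal coordinates of $\Gamma$, as provided by \thmref{int:solution}. Since every curve of $\Gamma$ is normal, the normality conditions hold on every face. So \eqref{IncidenceMap} has a unique nonnegative integer solution $\vec{n} = (t_i,t_j,t_k,t_l,q_{\ij},q_{ik},q_{il})$ in which at least one of the three diagonal coordinates is zero. By \lemref{thm:corner:triangles}, vertex $i$ carries exactly $t_i$ curves of length $3$. The hypothesis that $\Gamma$ is triangle-free therefore forces $t_i=t_j=t_k=t_l=0$.

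With all corner coordinates zero, the edge coordinates are determined by the diagonal coordinates alone. The relation from the proof of \lemref{thm:corner:triangles} reduces to
\[
e_{\ij} = q_{il} + q_{ik}.
\]
Each complementary pair of edges, such as $\ij$ and $kl$, is then assigned the same value, namely the sum of the two diagonal coordinates other than $q_{\ij}=q_{kl}$. Explicitly,
\[
e_{\ij}=e_{kl}=q_{ik}+q_{il},\qquad e_{ik}=e_{jl}=q_{\ij}+q_{il},\qquad e_{il}=e_{jk}=q_{\ij}+q_{ik}.
\]

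Next we use the normalization from \thmref{int:solution}: one diagonal coordinate vanishes. Relabel the vertices so that $q_{il}=0$, and set $d_1:=\max(q_{\ij},q_{ik})$ and $d_2:=\min(q_{\ij},q_{ik})$. Then the three complementary pairs $\{\ij,kl\}$, $\{ik,jl\}$, $\{il,jk\}$ carry the values $q_{ik}$, $q_{\ij}$, and $q_{\ij}+q_{ik}$. As multisets these are $\{d_1,d_2,d_1+d_2\}$ with $0\le d_2\le d_1$, which is the claim. If needed, a further relabeling swapping $j$ and $k$ produces exactly the labeling $e_{\ij}=e_{kl}=d_1$, $e_{ik}=e_{jl}=d_2$, $e_{il}=e_{jk}=d_1+d_2$ used in the subdivision step.

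The only real subtlety is the first step. We must check that \lemref{thm:corner:triangles} correctly identifies length-$3$ curves with the corner coordinates $t_i$ of the normalized solution, and not with some other solution in the one-parameter kernel family. That lemma was stated for exactly the normalized solution of \thmref{int:solution}, so we invoke it directly and do not re-derive it. Everything after that step is linear algebra.
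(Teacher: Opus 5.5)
Your proof is correct and follows the paper's argument. You use the corner-triangle lemma to eliminate the corner coordinates: the paper phrases this as extracting the $t_i$ triangles at each vertex, while you observe that triangle-freeness forces $t_i=0$. You then write each edge coordinate as the sum of the two diagonal coordinates $q$ whose quads cross it, and use the normalization that one $q$ vanishes to read off $d_1$, $d_2$, $d_1+d_2$ on the three complementary edge pairs; your explicit assignment of values to complementary pairs is only slightly more detailed than the paper's.
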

\begin{proof}
    This is a direct consequence of theorem \ref{thm:corner:triangles}. After extracting and removing $t_i$ triangles from every vertex $i$, the new edge coordinates of a face $\ijk$ can be written as:
    \begin{align}
        e_{\ij} &= q_{ik} + q_{il} \\
        e_{jk} &= q_{ik} + q_{\ij} \\
        e_{ki} &= q_{\ij} + q_{il}
    \end{align}
    At least one of the $q$ values is zero (due to \ref{thm:int:solution}). Denoting the other two by $d_1$ and $d_2$, we can see that the edge coordinates above are exactly the values $d_1$, $d_2$, and $d_1+d_2$ in some order.
\end{proof}

From here on we call this configuration of curves the $(d_1,d_2)$ pattern. See \figref{tet:flatten} \figloc{(top left)} for an example.

\subsection{Connected Components of Normal Curves}
\label{sec:normal:components}

Here we analyze the connected components a triangle-free set of normal curves on a tetrahedron. By \thmref{NormalEdgePattern}, we know that the curves must form a \((d_1, d_2)\) pattern.

\begin{theorem}
\label{thm:gcd:comps}
   The number of connected components of a \((d_1, d_2)\) pattern is exactly $\gcd(d_1,d_2)$.
\end{theorem}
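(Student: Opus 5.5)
We reduce the problem to counting the orbits of a map on edge points. Up to relabeling, take the $(d_1,d_2)$ pattern with $q_{ij}=d_1$, $q_{ik}=d_2$ and $q_{il}=0$. Then $e_{ij}=e_{kl}=d_2$, $e_{ik}=e_{jl}=d_1$, and $e_{il}=e_{jk}=d_1+d_2$. Because there are no corner triangles, every boundary face carries only normal arcs. The number of arcs at each corner is fixed by \eqref{TriangleEdgeToCorner}, and since the arcs in a face are disjoint they are determined up to isotopy. So the curve system is completely determined by $(d_1,d_2)$, and it suffices to count the components of this one explicit arrangement.

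Cut the tetrahedron boundary along the two edges with coordinate $d_1+d_2$, namely $il$ and $jk$. Every curve crosses these edges, and the curves crossing $jk$ are the same as those crossing $il$. Take the $d_1+d_2$ points on edge $jk$, ordered along the edge and labeled $0,\dots,d_1+d_2-1$. Start from a point on $jk$ and follow its curve through the two faces adjacent to $jk$ until it reaches $il$. Then continue through the other two faces back to $jk$. This defines a first-return permutation $\phi$ on $\mathbb{Z}/(d_1+d_2)$, and the number of components equals the number of cycles of $\phi$.

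The key computation is to show that $\phi$ is a rotation, $\phi(x)=x+d_1 \pmod{d_1+d_2}$ (or $x+d_2$, depending on orientation). Here is the reasoning. In face $ijk$ the corner arcs at $j$ and $k$ send the lower block of points on $jk$ to one edge and the upper block to the other, each preserving order. Face $jkl$ sends the points on $jk$ to edges $jl$ and $kl$ in the same way. So the two faces adjacent to $jk$ together form an annulus-like strip, and in that strip the points of $jk$ are carried across to $il$ with a cut-and-swap of two blocks of sizes $d_1$ and $d_2$. Passing through the opposite pair of faces, $ijl$ and $ikl$, produces a second such block exchange. Composing two interval exchanges of two blocks each gives a cyclic shift.

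The number of cycles of the rotation $x\mapsto x+d_1$ on $\mathbb{Z}/(d_1+d_2)$ is $\gcd(d_1,d_1+d_2)=\gcd(d_1,d_2)$. This completes the count, and it also handles the degenerate case $d_2=0$: there we get $\gcd(d_1,0)=d_1$ parallel quads, which is consistent.

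The main obstacle is the bookkeeping in the previous paragraph. One has to fix orientations on edges $jk$ and $il$ and check that the two block exchanges compose to a pure rotation rather than a reflection. If they composed to a reflection, the cycles would have length at most $2$, which would contradict, for example, the single octagon obtained when $d_1=d_2=1$. I would carry this out with the flattened-tetrahedron picture (cf.\ \figref{tet:flatten}), laying the four faces out as a parallelogram strip and checking the orientation on the octagon case $d_1=d_2=1$ and on $(2,1)$. As an alternative, one can pass to the flat torus double cover: the normal curves of a $(d_1,d_2)$ pattern lift to straight lines of slope $d_2/d_1$, and a standard fact is that these lines have $\gcd(d_1,d_2)$ components.
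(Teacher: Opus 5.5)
\textbf{Gap.} Your whole argument rests on the claim that the first-return map on the \(d_1+d_2\) points of \(jk\), labeled in order along the edge, is the rotation \(x\mapsto x+d_1\). That claim is false. Take \((d_1,d_2)=(3,1)\), so \(e_{ij}=e_{kl}=1\), \(e_{ik}=e_{jl}=3\), \(e_{il}=e_{jk}=4\). The corner counts are \((c_j,c_k)=(1,3)\) on face \(ijk\), \((3,1)\) on \(jkl\), \((c_i,c_l)=(1,3)\) on \(ijl\), and \((3,1)\) on \(ikl\). Label the points of \(jk\) as \(x_1,\dots,x_4\) starting from \(j\). The single curve then runs
\[
x_1\to ij\to il\to ik\to x_2\to jl\to il\to ik\to x_4\to kl\to il\to jl\to x_3\to ik\to il\to jl\to x_1 .
\]
So it meets \(jk\) in the cyclic order \(x_1,x_2,x_4,x_3\), with steps \(+1,+2,-1,-2\). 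This is not a rotation of \(\mathbb{Z}/4\) in either direction. Your planned checks on \((1,1)\) and \((2,1)\) cannot detect this, because every cyclic permutation of two or three cyclically ordered points is a rotation.

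The error comes from the ``two block exchanges.'' A simple closed curve crosses the arc \(jk\) with alternating orientation; here \(x_1,x_3\) are crossed one way and \(x_2,x_4\) the other. So following curves from \(jk\) in a fixed direction does not even define a permutation of the unoriented points: always leaving into face \(jkl\) sends both \(x_2\) and \(x_3\) to \(x_4\). Also, a path from \(jk\) to \(il\) passes through one face at \(jk\) and one face at \(il\), not through the two faces at \(jk\). The resulting transfer is not a block exchange. Via faces \(ijk,ijl\), the point \(x_m\) with \(m\le d_2\) goes to the \((d_2+1-m)\)-th point of \(il\) from \(i\), which reverses order. Via \(ijk,ikl\), the points \(x_m\) with \(m>d_2\) go to the first \(d_1\) points of \(il\), overlapping the first block's image. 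These maps become bijections only after each edge point is doubled into its two sides.

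That doubling is the repair. Slit the sphere along \(jk\) and \(il\) to get an annulus whose two boundary circles each carry \(2(d_1+d_2)\) points. Every face at \(jk\) (resp.\ \(il\)) has no arcs at the vertex opposite that edge, so each arc crosses the annulus from one circle to the other. The arcs therefore induce a rotation between the circles, while re-gluing each slit is a reflection of its circle. The oriented return map on \(\mathbb{Z}/2(d_1+d_2)\) is thus a product of two reflections, hence a rotation, and a one-step computation gives the shift \(\pm 2d_2\). Its \(\gcd(2d_2,2d_1+2d_2)=2\gcd(d_1,d_2)\) cycles are the oriented curves, two per curve. With this correction your route becomes a direct, non-inductive alternative to the paper's argument, which flattens the tetrahedron onto a segment and runs the Euclidean algorithm via the shortening move \((d_1,d_2)\to(d_1-d_2,d_2)\).
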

\begin{proof}
    We show this by constructing an operation that shortens the curves without breaking or merging them, turning the \((d_1, d_2)\) pattern into a \((d_1-d_2, d_2)\) pattern.

    To better illustrate this process, we imagine cutting the tetrahedron open and collapsing it onto a line segment, as shown in \figref{tet:flatten}. The flattening does not change the connectivity of our curves, and the connectivity reduces to the following pattern:
    \begin{itemize}[leftmargin={5mm}]
        \item We have $2(d_1+d_2)$ points on a line segment.
        \item The first (leftmost) $2d_2$ points are connected along the top of the segment in a radial pattern, \ie{}, the first one to the last one and the two middle ones to each other.
        \item The next $2d_1$ points are connected similarly along the top.
        \item The first $d_1+d_2$ points are connected to the last $d_1+d_2$ points along the bottom of the segment in the same radial fashion.
    \end{itemize}
     If two points are connected from the top of the segment, we call the top-pairs and likewise we call pairs of points connected on the bottom of the segment bottom-pairs.

    Our shortening operation is as follows. Take the first $2d_2$ points and move them towards their corresponding bottom-pairs, from the bottom of the segment until they merge.

    This operation removes the leftmost $2d_2$ points and yields a new connectivity for the $2d_1$ remaining points:
    \begin{itemize}[leftmargin={5mm}]
        \item All the $2d_1$ points are connected above the segment.
        \item The rightmost $2d_2$ points are connected below the segment.
        \item The leftmost $2(d_1-d_2)$ points are connected below the segment.
    \end{itemize}
    All connections are radial as before.
    Hence the new connectivity is identical to before, with new values $(d_1-d_2, d_2)$ instead of $(d_1,d_2)$.

    Repeating this process performs the Euclidean algorithm on the pair \((d_1, d_2)\), eventually terminating at a pair \((d, 0)\), where \(d = \gcd(d_1, d_2)\). Since the diagram for the pair \((d, 0)\) is a set of \(d\) concentric circles, there are exactly $d$ connected components. And since we never changed the topology of the curves, the initial curves also must have had $d$ components.
\end{proof}

\begin{figure}
   \includegraphics{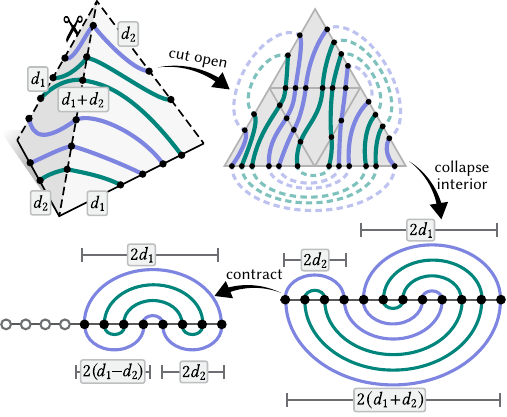}
   \caption{Here we illustrate our proof of \thmref{gcd:comps}, that the number of curves on the boundary of a tetrahedron with two non-zero diagonal cuts \(d_1, d_2\) is exactly \(\gcd(d_1, d_2)\). We begin by cutting the tetrahedron open and collapsing the resulting triangle to a line segment, creating a topologically-equivalent set of curves which cross the line segment at \(2(d_1+d_2)\) points. Contracting the leftmost \(d_2\) loops yields the diagram for the pair \((d_1-d_2, d_2)\). Iterating this procedure reveals a set of \(\gcd(d_1, d_2)\) connected curves.}
   \label{fig:tet:flatten}
\end{figure}

\begin{theorem}
\label{thm:NormalSameLength}
   All loops in a \((d_1, d_2)\) pattern have the same number of segments, which we call the length \(\ell\).
\end{theorem}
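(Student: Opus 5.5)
We prove \thmref{NormalSameLength} by reusing the shortening operation from the proof of \thmref{gcd:comps}, and by tracking how many segments each loop loses at each step rather than only the number of components. The claim to prove by induction on \(d_1+d_2\) (equivalently, on the number of Euclidean steps) is stronger: in a \((d_1,d_2)\) pattern, every loop crosses the pair of edges with coordinate \(d_1\) the same number of times, the pair with \(d_2\) the same number of times, and the pair with \(d_1+d_2\) the same number of times. The length \(\ell\) of a loop is its total number of edge crossings, so equal crossing counts imply equal lengths.

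The base case is \(d_2=0\). Here the pattern consists of \(d=d_1\) parallel quads, each with the same edge-crossing profile \((1,1,0)\) per edge pair. In particular each loop has length \(4\), so the claim holds.

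For the inductive step, we use the flattened line-segment picture of \figref{tet:flatten}.
\begin{itemize}
\item The \(2(d_1+d_2)\) points on the segment are the crossings of the two edges with coordinate \(d_1+d_2\).
\item The top arcs pair points across the remaining edge pairs: \(2d_2\) points are paired through the \(d_2\)-edges and \(2d_1\) points through the \(d_1\)-edges.
\end{itemize}
The shortening move contracts each leftmost top-pair into its bottom-pairs. This deletes one \(d_2\)-type crossing and two \((d_1+d_2)\)-type crossings from whichever loop passes through it, and it does not alter connectivity. The new configuration is the \((d_1-d_2,d_2)\) pattern, so by induction all of its loops have a common crossing profile. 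The task is then to show that the removed crossings are distributed equally among the \(\gcd(d_1,d_2)\) loops. The key structural observation is that the pattern is invariant under a cyclic "rotation" symmetry: shifting the labels of the points along the segment by one position within each radial block maps the arc system to itself, and this shift permutes the loops transitively. I would try to realize it as the map on the universal abelian cover, or on the torus double cover of the four-punctured sphere, where a \((d_1,d_2)\) pattern lifts to straight lines of slope \(d_2/d_1\) and translation permutes them transitively.

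The main obstacle is making this symmetry argument rigorous, and I would consider two routes.
\begin{itemize}
\item \emph{Torus route.} The tet boundary with its four vertices is a pillowcase, the quotient of the torus \(\mathbb{R}^2/(2\mathbb{Z})^2\) by the involution \(x\mapsto -x\). The \((d_1,d_2)\) normal curves pull back to parallel closed geodesics on this torus, and these geodesics are interchanged by translations commuting with the involution. Each loop is the image of such geodesics, and each geodesic meets each lifted edge the same number of times. This gives equal crossing counts directly and bypasses the induction altogether, so I would try this route first.
\item \emph{Fallback.} If the torus identification is awkward to justify from the paper's combinatorial definitions, I would carry out the induction. Each new loop of the \((d_1-d_2,d_2)\) pattern corresponds to exactly one old loop, with extra segments reinserted at the contracted top-pairs. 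Since these top-pairs are exactly the \(d_2\)-edge crossings, the inductive hypothesis (equal \(d_2\)-crossing counts per loop) implies that the reinserted crossings are equal per loop.
\end{itemize}
The fallback works because the loops that get extended are precisely those passing through the contracted \(d_2\)-pairs, so uniformity is inherited.
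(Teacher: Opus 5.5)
The overall plan is sound, but the proposal has a genuine gap: the step that should make the loops indistinguishable rests on a symmetry that does not exist. Shifting labels by one inside a radial block sends the arc $\{a,2m+1-a\}$ to $\{a+1,2m+2-a\}$, which is not an arc. On $\mathbb{R}^2/(2\mathbb{Z})^2$, only the four translations by $\mathbb{Z}^2/(2\mathbb{Z})^2$ commute with $\iota\colon x\mapsto -x$, while the pattern has $2\gcd(d_1,d_2)$ lifted curves; so these translations cannot act transitively once $\gcd(d_1,d_2)>2$.

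The torus route can still be completed, for a different reason. Put $i=(0,0)$, $j=(1,0)$, $k=(0,1)$, $l=(1,1)$. Then the full preimage of each edge is a closed geodesic: $ij$ is $y=0$, $kl$ is $y=1$, $ik$ is $x=0$, $jl$ is $x=1$, $il$ is $y=x$, and $jk$ is $y=x+1$. These six circles cut the torus into eight triangles, which $\iota$ identifies in pairs. Two transverse closed geodesics on a flat torus meet exactly $|\det(a,b)|$ times, where $a,b$ are their classes, so parallel lifts meet each edge preimage equally often. You must count against the whole preimage: one lifted segment can be hit $\lceil m/2\rceil$ times by one lift and $\lfloor m/2\rfloor$ times by the other.

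What you still need is that the pattern really is such a projection, which follows from three facts:
straight lines project to normal curves;
an $\iota$-invariant family of $2\gcd(d_1,d_2)$ lines of direction $(d_2,-d_1)$ avoiding $\mathbb{Z}^2$ has edge coordinates $d_1,d_2,d_1+d_2$;
normal curves are determined by edge coordinates up to normal isotopy, which preserves per-loop crossing counts.
With these added, the torus argument is a complete proof, genuinely different from the paper's combinatorial induction. It also gives \thmref{gcd:comps} and the length $4(d_1+d_2)/\gcd(d_1,d_2)$ at once.

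Your fallback is the paper's route. The paper colors the components in the terminal diagram and asserts that reversing each shortening keeps color differences fixed; a strengthened hypothesis is exactly what that assertion needs. However, your bookkeeping does not deliver it.

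The diagram of \thmref{gcd:comps} is realized by flattening along the cycle $ij,jl,lk,ki$ of $d_1$- and $d_2$-edges. Its points are crossings of those four edges, and each arc contains one crossing of $il$ or $jk$. Your labeling already fails for $(1,1)$: between crossings of $il\cup jk$ the loop meets $ij,jl,kl,ik$ in that cyclic order. So the two top arcs are both of $d_1$-type or both of $d_2$-type, not one of each as you claim. Also, each contraction removes four crossings, not three, since the two bottom arcs at a contracted pair merge into one.

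More seriously, the innermost arc of the contracted rainbow has endpoints on different edges of the cycle, so it surrounds a vertex. Shortening is therefore a homotopy across that vertex, not an isotopy of the tet boundary, and surviving crossings of one old edge become crossings of two different new edges. So the crossing profile of the $(d_1-d_2,d_2)$ pattern cannot simply be added to the removed crossings. Likewise, ``these top-pairs are exactly the $d_2$-edge crossings'' conflates the old and new patterns.

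The repair is to induct on the diagram itself. Claim that every component contains the same numbers $\#R_{d_1},\#R_{d_2}$ of arcs from the two top rainbows. Shortening maps the old $d_1$-rainbow onto the new single rainbow. It maps each contracted $d_2$-arc, together with its two bottom arcs, onto one arc of the new $d_2$-rainbow. Hence, per component, $\#R_{d_2}=\#R'_{d_2}$ and $\#R_{d_1}=\#R'_{d_1-d_2}+\#R'_{d_2}$. The base case $(d,0)$ consists of $d$ two-point loops. Finally, a loop with $n$ points has length $2n=4(\#R_{d_1}+\#R_{d_2})$.
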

\begin{proof}
    Consider the terminal state of the process outlined in the proof of \thmref{gcd:comps}, and color the points with $d = \gcd(d_1,d_2)$ colors. The color of every point determines its connected component. When reversing the operation, the difference between the number of points of different colors remains fixed. There is an equal number of points of each color in the terminal state, so in the initial flattened stage, each connected component has an equal contribution on the flattened edge. The choice of which edge to collapse the triangle on was arbitrary in the proof of \thmref{gcd:comps}, so the every component has equal contribution at every edge of the tetrahedron. Hence all components must have equal length.
\end{proof}
\begin{corollary}
\label{thm:min:length:8}
    The length of every component in a \((d_1, d_2)\) pattern is exactly $4\frac{d_1+d_2}{\gcd(d_1,d_2)}$. Consequently, when $d_1,d_2 \geq 1$, the number of segments is at least $8$.
\end{corollary}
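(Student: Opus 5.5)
The plan is a counting argument: total segments divided by the number of components, using the equal-length property. Each segment of a normal curve lies in one face of the tetrahedron and has two endpoints on edges. Each edge intersection point is shared by exactly two segments, one in each of the two faces containing that edge. Hence the total number of segments over all curves equals the total number of edge intersection points. So the plan is to total the edge coordinates of the \((d_1,d_2)\) pattern, divide by the number of components from \thmref{gcd:comps}, and invoke \thmref{NormalSameLength} so that this average is the common length.

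\textbf{Step 1: total number of segments.} By \thmref{NormalEdgePattern}, the edge coordinates of a \((d_1,d_2)\) pattern take the values \(d_1\), \(d_2\), \(d_1+d_2\) on the three pairs of opposite edges. The total number of intersection points is therefore \(2(d_1 + d_2 + d_1 + d_2) = 4(d_1+d_2)\). Each point is an endpoint of exactly two segments, and each segment has exactly two endpoints, so there are \(4(d_1+d_2)\) segments in total. Equivalently, one can sum the corner coordinates \eqref{TriangleEdgeToCorner} over the four faces: each face has \((e_{ab}+e_{bc}+e_{ca})/2\) segments, and summing over faces counts each edge twice.

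\textbf{Step 2: divide.} By \thmref{gcd:comps} there are exactly \(\gcd(d_1,d_2)\) components. By \thmref{NormalSameLength} they all have the same length \(\ell\). Hence \(\ell\gcd(d_1,d_2) = 4(d_1+d_2)\), giving \(\ell = 4\frac{d_1+d_2}{\gcd(d_1,d_2)}\). If \(d_2=0\), then \(\gcd = d_1\) and \(\ell = 4\), which gives the quads; this is consistent, though it assumes \(d_1>0\), i.e.\ the pattern is nonempty.

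\textbf{Step 3: the bound.} Write \(g=\gcd(d_1,d_2)\), \(d_1 = g a\) and \(d_2 = g b\) with integers \(a,b\geq 1\) when \(d_1,d_2\geq 1\). Then \(\ell = 4(a+b)\geq 8\), with equality exactly when \(d_1=d_2\), which is the octagon case. There is no real obstacle here. The only care point is Step 1's claim that each intersection point has degree exactly two. This holds because the curves are closed normal loops (no open endpoints), and because the normality conditions make every point on a face edge be matched by one segment in each adjacent face.
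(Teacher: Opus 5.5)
Your proposal is correct and follows essentially the same route as the paper: you count \(4(d_1+d_2)\) segments in total (the paper counts \(d_1+d_2\) per face, which you also give as an equivalent check) and divide by the \(\gcd(d_1,d_2)\) equal-length components from \thmref{gcd:comps} and \thmref{NormalSameLength}. Your explicit step \(\ell = 4(a+b) \geq 8\), with equality exactly when \(d_1 = d_2\), fills in the ``consequently'' that the paper leaves implicit.
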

\begin{proof}
   Each of the tetrahedron's four faces contains exactly \(d_1 + d_2\) segments. Since these segments make up \(\gcd(d_1, d_2)\) curves (\thmref{gcd:comps}) and each curve has the same length (\thmref{NormalSameLength}), we conclude that each curve has length \(\smash{4\frac{d_1+d_2}{\gcd(d_1,d_2)}}\).
\end{proof}

\section{Algorithm Termination}
\label{app:AlgorithmTermination}

\begin{theorem}
   \label{thm:AlgorithmTermination}
   For a tetrahedron with \(\sum_{\ij} e_{\ij} = n\), the primal reconstruction algorithm in \secref{NormalBoundaryCurves} terminates after \(O(n)\) subdivisions.
\end{theorem}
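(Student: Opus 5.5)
The plan is to bound the size of the recursion tree generated by the subdivision rule \eqref{Subdivision} using a potential that strictly decreases, in the spirit of the Euclidean-algorithm argument of \thmref{gcd:comps}. We only subdivide once corner triangles have been stripped, the remaining loops form a $(d_1,d_2)$ pattern (\thmref{NormalEdgePattern}), and $m=\gcd(d_1,d_2)>1$ with $\ell>8$, which forces $d_1>d_2\ge 1$. For such a tet I will use the potential $\Phi := d_1+d_2$. This is at most $n$, since every triangle-free edge coordinate is bounded by the original one, and in fact $\Phi=\tfrac12\sum$ of the triangle-free edge coordinates on any face. The target is the claim $f(\Phi)\le \Phi$, where $f$ is the number of subdivisions performed in a recursion rooted at a tet with potential $\Phi$. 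This claim yields the theorem.

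The first step is to compute explicitly the edge coordinates of the four child tets $(a,j,k,l)$, $(i,a,k,l)$, $(i,j,a,l)$, $(i,j,k,a)$. Each child keeps the three edges of one original face and acquires the three new edges with values $2d_2,\ d_1,\ d_2,\ d_1-d_2$. For every face of every child I will verify the even-sum condition and the triangle inequality; for example, face $ajk$ has coordinates $d_1,\,d_2,\,d_1+d_2$, which is tight. This confirms that the child boundary curves are normal, as the algorithm assumes. Then, via \thmref{int:solution} and \lemref{corner:triangles}, I will compute each child's normal coordinates $(t,q)$. From these I read off the child's corner triangles and its triangle-free pattern $(d_1',d_2')$.

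The second step is to classify the four children. I expect that in most of them either the triangle-free part vanishes (only corner triangles remain), or it is a parallel family of quads ($d_2'=0$), or it is an octagon or single-loop base case. I also expect that the remaining children have patterns built from $(d_1-d_2,\,d_2)$ or $(d_2,\,d_1-d_2)$, mirroring one step of the flattening/shortening operation in the proof of \thmref{gcd:comps}. The key inequality to establish is
\[
   \sum_{\text{children } c \text{ that subdivide again}} \Phi(c) \;\le\; \Phi - 1 .
\]
Given this inequality, induction gives $f(\Phi)\le 1+\sum_c f(\Phi(c))\le 1+\sum_c\Phi(c)\le \Phi$. Hence the total number of subdivisions is at most $d_1+d_2\le n$, which is $O(n)$.

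The main obstacle is the case analysis in the first two steps. It requires carefully solving \eqref{IncidenceMap} for each child, checking which $q$ vanishes, and, in particular, confirming that the new edges with coordinates $d_1$ and $d_1-d_2$ do not reintroduce a pattern with potential as large as the parent's. If the clean inequality above fails, for instance because two children both inherit $\Phi(c)=d_1$, my fallback is a lexicographic or Euclidean-step potential such as $(\max(d_1,d_2),\,\min(d_1,d_2))$. With that potential I would argue that along every root-to-leaf path the pair follows subtractive Euclidean steps, so depth is at most $d_1+d_2$. Because at most a constant number of children per node are non-base, this bounds the total subdivisions linearly once one shows that only one child per node can remain non-base.
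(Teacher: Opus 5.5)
Your proposal is correct, and the child-by-child computation you defer comes out exactly as you hope. Solve the incidence system for the four children of a $(d_1,d_2)$ tet with $d_1>d_2\ge 1$:
\begin{itemize}
\item $(a,j,k,l)$ consists of $d_2$ corner triangles at each of $j$ and $k$, plus $d_1-d_2$ parallel quads.
\item $(i,a,k,l)$ consists of $d_2$ triangles at $i$ and $d_1-d_2$ at $l$, plus $d_2$ parallel quads.
\item $(i,j,k,a)$ consists of $d_1-d_2$ triangles at $j$, plus a $(d_2,d_2)$ octagon pattern.
\item $(i,j,a,l)$ is the only child that can recurse. It has $d_2$ triangles at $i$ and triangle-free pattern $\{d_1-d_2,\,d_2\}$, whose gcd is unchanged, so it recurses unless $d_1=2d_2$.
\end{itemize}
So your key inequality holds in the strong form $\sum_c\Phi(c)\le d_1=\Phi-d_2$. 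The recursion tree is a single chain, performing one subtractive Euclid step per subdivision, and the fallback potential is unnecessary. You should phrase the induction as strong induction on $\Phi$, with $f(\Phi)$ defined as the maximum over all tets of potential $\Phi$.

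The paper argues differently. It looks only at the new vertex $a$, whose incident coordinates $d_1,d_2,2d_2,d_1-d_2$ are all below $d_1+d_2$. From this it infers that ``the maximum edge coordinate of at least one vertex'' decreases, and bounds the recursion \emph{depth} by $d_1+d_2$. That argument is shorter, but it is only a depth bound: with four children per node it does not by itself bound the total number of subdivisions. It is also not a per-child measure on raw coordinates, because the recursing child $(i,j,a,l)$ still contains edge $il$ with coordinate $d_1+d_2$. The decrease becomes visible only after stripping the corner triangles at $i$, which is exactly what your triangle-free potential does.

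Your route costs an explicit solve for each child, but it buys a genuine bound on the total count: at most $(d_1+d_2)/\gcd(d_1,d_2)-2$ subdivisions, hence fewer than $n/4$. It also makes explicit the link to the shortening step in the proof that a $(d_1,d_2)$ pattern has $\gcd(d_1,d_2)$ components.
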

\begin{proof}
    First, note that due to our choice for the interior edge coordinates, normality conditions are satisfied for all the new tetrahedra, allowing the splitting procedure defined in \eqref{Subdivision} to continue recursively until termination.

    Now we bound the recursion depth. Before splitting the tetrahedron, the edge coordinates adjacent to each vertex are $d_1$, $d_2$, and $d_1+d_2$ (assume $d_1 > d_2$). Splitting creates a new vertex with edge coordinates \(d_1, d_2, 2d_2\), and \(d_1-d_2\) on its incident edges. So long as \(d_1 \neq d_2\), the new edge coordinates are all strictly less than \(d_1 + d_2\) (since \(d_2 \leq d_1\)), and so the maximum edge coordinate of at least one vertex must decrease. And if $d_1 = d_2$, we are in the octagon case which is handled explicitly without requiring further subdivision.

    Thus, after at most $d_1+d_2$ subdivision steps all interior tetrahedra either have a $0$ edge coordinate, or they are at the base case, where $d_1=d_2$. Hence the process terminates in linear time.
\end{proof}

\section{No Intersection Property}
\label{sec:NoIntersectionProperty}

Our main theorem about the geometry of the primal reconstruction algorithm is the following:

\begin{theorem}
   \label{thm:NoIntersection}
   The piecewise linear surface produced via the primal reconstruction algorithm in \secref{PrimalSurfaceReconstruction} is globally embedded in \(\mathbb{R}^3\), \ie{}, it exhibits no self-intersections.
\end{theorem}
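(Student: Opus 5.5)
Since every emitted triangle lies in a single closed tetrahedron of \(\Tcal\), I would localize the argument. First I would show that two triangles from distinct tets can only meet on a shared face \(f\), and that there they meet only along the common boundary curve segments. By construction (\secref{BoundaryCurveReconstruction}) the curve on \(f\) is identical for both tets. Before \secref{SimplicialEmbedding}, the only polygons with positive-area contact with a face are the duplicated non-normal polygons. After the inward push they lie strictly inside their respective tets except along \(\gamma\), so the inter-tet contact is exactly the shared curve. Lower-dimensional contacts at grid edges and vertices are handled the same way. On a grid edge the only surface points are the intersection points \(s_1<\dots<s_{e_{\ij}}\), plus scoop segments, which are pushed off as above. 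The only vertex ever used is the inside vertex of a corner loop, and that vertex is omitted from \(P\). Hence the global statement reduces to showing that each tet's output is embedded and meets \(\partial T\) exactly in \(\Gamma\).

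Inside one tet I would handle the cases in the order the algorithm uses them, proving that the pieces are pairwise disjoint. Corner triangles are planar and cut off nested corner regions, taken in order along the edges. They separate the corner from the rest, so every other piece lies in the complementary convex region, which is the tet with corners truncated. For the remaining \((d_1,d_2)\) loops I would go through the subcases. Parallel quads split along a consistent diagonal are disjoint because consecutive quads interleave monotonically along all four edges. A single loop coned to an interior hull point is embedded because the loop is a normal curve on the convex body, so the cone from an interior point is injective; I would argue this by radial projection from \(x\) onto \(\partial T\). For octagons with Steiner points on the midpoint segment, the nested ordering \(p_{m+i}\leftrightarrow x_i\) gives a layered family. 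I would show disjointness by exhibiting a family of separating planes, or by checking that consecutive octagon cones are ordered along the edge \(e\) and along the segment \(x_0\dots x_{m-1}\).

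For subdivision, I would rely on \thmref{AlgorithmTermination} and induction on recursion depth. Each subtet is convex, and its output is embedded by induction. Adjacent subtets share interior faces whose curves are built by the same deterministic face rule from the edge coordinates \eqref{Subdivision}, so the first step applies to the subdivision verbatim. I also need the new interior points on edges \(ai\) and so on to be placed consistently, with the same positions used by both adjacent subtets. Non-normal loops would be treated next. Diagonal loops coned to their centroid, and corner or contractible disks lying on \(\partial T\) and then pushed inward by a small \(\epsilon\), must avoid the normal disks and each other. Here I would use the fact that the boundary curves are disjoint and nested on the sphere \(\partial T\). Choosing \(\epsilon\) smaller than the distance to every other piece makes the pushed polygons stay in a thin collar of \(\partial T\) bounded by their own loop.

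I expect the main obstacle to be the coexistence of several different kinds of pieces in one tet: cones of diagonal non-normal loops and the single-loop Steiner cone, together with corner triangles and other loops. Convexity of the hull alone does not rule out a cone crossing a nested disk. My first attempt would be to show that disjoint curves on \(\partial T\) are nested with respect to the vertex partition. Then each cone point can be chosen inside the region bounded between neighbouring disks, and the convex region cut out by the adjacent planar pieces provides separation. If the centroid placement fails, I would fall back on the freedom that the algorithm states (``any point in the convex hull'') to justify the choice.
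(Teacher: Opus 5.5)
Your localization step and your treatment of corner triangles match the paper, and your radial-projection argument for a single cone is sharper than the paper's one-line claim. The gap is in the non-normal pieces and in the coexistence problem you flag as the main obstacle. Your plan never uses the fact that the boundary curves come from the \emph{as-normal-as-possible} face rule of \secref{BoundaryCurveReconstruction}, and the proof needs it.

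\textbf{The \(\varepsilon\)-collar.} Your collar argument for contractible and corner-type loops presupposes that no other curve lies in the region of \(\partial T\) covered by \(\gamma\)'s smeared polygons. If some \(\gamma'\) were nested there, the distance \(\varepsilon\) is supposed to beat would be zero. Knowing that disjoint curves on a sphere are nested does not exclude this; it is exactly the bad case. The collar also does not cover the corner triangle emitted for a corner-type loop, which is not on \(\partial T\); the paper handles it by contracting \(\gamma\) to a normal corner curve.

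The missing ingredient is \lemref{scoop:opposite:corner}. A scoop on edge \(ij\) of face \(ijk\) is drawn only when \(e_{ij} \geq e_{jk}+e_{ki}\), so that face has no corner segment at \(k\). Hence a point just inside the scoop reaches \(k\) while crossing exactly one curve. That parity fact rules out a contractible or corner-type non-normal loop lying inside another one (\lemref{contract:no:contain}).

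\textbf{Coexistence of diagonal loops.} The paper does not place cone points cleverly. It proves the bad configurations never arise: a non-normal diagonal-type loop is the \emph{only} diagonal-type loop in its tet (\thmref{diagonal:type:co:exist}). The argument runs as follows. Two disjoint loops of the same diagonal type bound a puncture-free middle region \(A\). By the scoop lemma, every scoop interior must lie in \(A\). Yet contracting both loops yields a \((2d_1,2d_2)\) normal pattern, and reversing any contraction back to an as-normal-as-possible configuration necessarily creates a scoop outside \(A\) (\lemref{reverse:contraction:region}).

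Once this is known, the centroid cone of a diagonal loop, or the cone of the single normal loop, only has to avoid corner triangles and smeared pieces, which are already separated.

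Your fallback fails as stated. The algorithm prescribes the centroid for non-normal diagonal loops; the ``any point in the convex hull'' freedom applies only to a single normal loop. Choosing another point would prove a statement about a different algorithm. And without the coexistence theorem, nothing guarantees that any point of the hull of \(\gamma\) lies between neighbouring disks.

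Your octagon step is in the right spirit, but it needs the concrete per-face criterion the paper uses. Of two segments at corner \(i\) on a face, the one closer to \(i\) gets the Steiner point farther from that face. This suffices for their fan triangles not to cross.
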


We will show that these surfaces have no intersections \textit{locally} inside each tetrahedron. Since the surfaces are always contained in their tetrahedra by construction (up to the \(\varepsilon\)-perturbation in \secref{SimplicialEmbedding}), the global surface will also be intersection free.

To prove this theorem, we must first establish several key lemmas about our primal reconstruction algorithm in \secref{PrimalSurfaceReconstruction}.

\begin{lemma}
    Every primal disc we construct, whether bounded by a normal or non-normal boundary curve $\gamma$, lies in the convex hull of $\gamma$.
\end{lemma}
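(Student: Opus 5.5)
The strategy is to check that every vertex of every emitted triangle is a convex combination of vertices of \(\gamma\). A triangle is the convex hull of its three vertices, so the whole disc then lies in \(\operatorname{conv}(\gamma)\). Since every construction in \secref{PrimalSurfaceReconstruction} emits only triangles whose vertices are either loop vertices or explicitly placed Steiner points, the proof reduces to a case-by-case check that each Steiner point lies in the convex hull of the loop it triangulates. I will follow the case split of the algorithm.

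\textbf{Normal loops, no subdivision.} For corner triangles and split quads, no new vertices appear, so there is nothing to check. For the single-loop case, the Steiner point \(x\) is chosen in the convex hull of the loop vertices (the center of mass in practice). The diagonal non-normal case uses the center of mass \(a\) of \(\gamma\) and is handled identically.

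\textbf{Octagons.} Let \(e,e'\) be the edges carrying \(2m\) points, enumerated \(p_0,\dots,p_{2m-1}\) along \(e\) and similarly along \(e'\). The octagon through \(p_{m+i}\) also passes through \(p_{m-1-i}\) (the radial pairing), so it contains a symmetric pair of points on \(e\) and the corresponding pair on \(e'\). The key step is to show that \(x_i\) lies on the segment joining a point of \(\operatorname{conv}\{p_{m-1-i},p_{m+i}\}\subset e\) to a point of the analogous segment on \(e'\). The innermost pair brackets the midpoint of \(e\), and outer pairs bracket progressively larger intervals around it. Parametrizing \(x_i=(1-\tau_i)\,\mathrm{mid}(e)+\tau_i\,\mathrm{mid}(e')\), I would argue that each octagon's two intervals on \(e\) and \(e'\) contain the respective midpoints. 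Then \(\operatorname{conv}\) of these four points contains the whole midpoint-to-midpoint segment, and \(x_i\) lies on it.

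This interval argument needs the points to be symmetrically spaced about the midpoints, which they need not be. I expect this to be the main obstacle. If it fails, I would fall back to a weaker but sufficient statement: the Steiner point must lie in the hull of its octagon's four vertices on \(e\cup e'\). Two of these bracket \(\mathrm{mid}(e)\) only if the point distribution straddles the midpoint, so some care or a mild assumption about the placement may be needed here.

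\textbf{Subdivision, non-normal loops, and the embedding step.}
\begin{itemize}
\item \textbf{Subdivision.} Induct on the recursion depth, which is finite by \thmref{AlgorithmTermination}. Interior points created on the new edges \(ai\) are themselves placed by the algorithm, so I must show these new edge points lie in the hull of the parent loop. This is the second delicate point, and I would handle it by the same midpoint-bracketing idea.
\item \textbf{Corner and contractible non-normal loops.} The emitted polygons are pieces of tet faces cut by \(\gamma\), bounded by \(\gamma\)-segments and ``inside'' edge segments. Their vertices are loop points, with the corner vertex omitted, so hull containment is immediate up to the corner triangle, whose vertices are loop vertices.
\item \textbf{Inset points of \secref{SimplicialEmbedding}.} These are midpoints, which lie in the hull, pushed by \(\varepsilon\). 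I would state the lemma up to this perturbation, matching the caveat given after \thmref{NoIntersection}.
\end{itemize}
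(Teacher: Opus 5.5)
\textbf{Assessment: there is a genuine gap.} The octagon case is left unresolved, and the subdivision step aims at a claim that is false.

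Your strategy matches the paper's: check that every vertex of every emitted triangle is a convex combination of loop vertices. Your handling of corner triangles, split quads, the single-loop and diagonal non-normal fans, the smeared polygons with the corner vertex omitted, and the $\varepsilon$-caveat also agrees with it.

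\textbf{Octagons.} With geometric midpoints the claim fails. Call the two edges carrying $2m$ points $ij$ and $kl$. Put all $2m$ points of $ij$ within parameter $\epsilon$ of $v_i$, and the points of $jk$, $jl$ within $\epsilon$ of $v_k$, $v_l$. Then every octagon vertex, and hence its whole hull, has barycentric weight below $\epsilon$ on $v_j$. But $(1-\tau)\,\mathrm{mid}(ij)+\tau\,\mathrm{mid}(kl)$ has weight $(1-\tau)/2$ on $v_j$, so every Steiner point with $\tau<1-2\epsilon$ lies outside its octagon's hull.

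The missing idea is to anchor the Steiner segment in the middle \emph{gap} rather than at the midpoints. Take a point $A$ strictly between $p_{m-1}$ and $p_m$ on $e$, and the analogous point $B$ on $e'$. This is the choice made in the paper's appendix octagon argument (a point between the $d$-th and $(d+1)$-th intersections). It is also what makes the paper's one-line assertion, that each Steiner point lies in the hull, valid for octagons. The radial pairing $p_s\leftrightarrow p_{2m-1-s}$ you noted holds on $e'$ as well. So every octagon meets each of $e,e'$ in a pair straddling the middle gap. Hence $A$, $B$, the whole segment $AB$, and in particular every $x_i$ lie in the hull of every octagon, with no symmetry assumption on the point positions.

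\textbf{Subdivision.} Your plan would fail as stated. You want the new points on the interior edges $av_i$ to lie in the hull of the parent loop. The algorithm does not constrain where these points sit, and for natural choices (for instance, evenly spaced) the ones near $v_i$ have barycentric weight on $v_i$ close to $1$. A parent loop whose points sit near edge midpoints has hull weight at most about $1/2$ on $v_i$, so no bracketing argument can help.

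The lemma has to be read locally, as the paper's proof implicitly does. Each disc is emitted inside some (sub)tetrahedron and is bounded by a loop on that (sub)tetrahedron's boundary. Subdivision emits no triangles itself: it reruns the algorithm on four tetrahedra whose boundary data are again normal. By the termination theorem, every branch ends in one of the base constructions: corner triangles, quads, a single-loop fan, or octagons. So subdivision needs no separate case. This is why the paper's proof checks only triangulated triangles and quads, Steiner-point fans, and smeared polygons.
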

\begin{proof}
  If \(\gamma\) is a triangle or quad, we triangulate \(\gamma\), which always produces a disk lying in the convex hull.
  For the surfaces that we construct via inserting a Steiner point, the Steiner point is contained in the convex hull of $\gamma$ and so the surface also is.
  The remaining surfaces are constructed on the boundary faces of the tetrahedron (smeared against the tet faces); these regions are enclosed between segments of $\gamma$ and are thus contained within its convex hull.
\end{proof}

Next we go through the different types of surfaces that we construct and prove that they cannot intersect. Note that after analyzing a class of surfaces, we can disregard that class during later analyses.

\begin{theorem}
\label{thm:triangle:no:isect}
    Triangles do not intersect any of our surfaces.
\end{theorem}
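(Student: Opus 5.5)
The approach is to separate the corner region cut off by each triangle from everything else in the tet, using a plane. Fix a tetrahedron with vertices \(i,j,k,l\). Consider the triangles emitted at corner \(i\), i.e.\ the loops of length 3 in \(\Gamma_{\text{normal}}\) around \(i\), plus the extra corner triangles emitted for corner-type non-normal loops. Each such triangle has its vertices on the three edges \(ij,ik,il\). By the boundary reconstruction of \secref{BoundaryCurveReconstruction}, the corner segments at \(i\) on each face are built from the \emph{first} intersection points along each edge, counted from \(i\). So the \(t_i\) triangles are nested: the \(r\)-th triangle uses the \(r\)-th point on each of the three edges. Two nested triangles with vertices at strictly increasing barycentric parameters on the three edges from \(i\) are disjoint, since each cuts the tet into a smaller convex corner region containing the previous one. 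The first step is therefore to show that triangles at the same corner do not intersect one another.

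The second step is to show that the outermost triangle \(T\) at corner \(i\) separates every other surface from \(i\). Let \(C_i\) be the closed convex region of the tet cut off by the plane of \(T\) and containing \(i\). All corner-\(i\) triangles lie in \(C_i\). I will argue that every other boundary loop \(\gamma\) has all of its vertices outside \(C_i\), or on its boundary only at points shared with no triangle. Along edge \(ij\), the points used by \(T\) and the inner triangles are exactly the first \(t_i\) points from \(i\). Every other loop uses only later points, which lie strictly beyond the plane of \(T\) on that edge. Points of \(\gamma\) on edges not incident to \(i\) (\(jk,jl,kl\)) also lie strictly on the far side. Hence every vertex of \(\gamma\) lies in the open half-space opposite \(i\), and so does its convex hull. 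By the preceding lemma, the disk spanning \(\gamma\) lies in that convex hull, so it is disjoint from \(C_i\) and in particular from every corner-\(i\) triangle. Triangles at different corners \(i\ne j\) are handled the same way, since their convex hulls lie on opposite sides of these planes.

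The main obstacle is the non-normal loops and the subdivision case. For non-normal loops, the disks are smeared along the tet boundary and may contain tet vertices, in particular the vertex \(i\) itself. I must check that a corner-type non-normal loop around \(i\) does not coexist with normal corner triangles at \(i\) in a conflicting order. The plan is to use the labeling rule: the first segment along each edge inherits vertex \(i\)'s label. This should give one of two outcomes. Either the triangle emitted for a corner-type loop is itself the innermost object at \(i\) and is separated by the same plane argument, or any region labeled ``inside'' that touches \(i\) is bounded by points earlier than those of the normal triangles, which would contradict how the boundary curves were built. This requires a small case check on the ordering of points produced by Steps (1)--(3) of the face procedure.

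For subdivision, the triangles are emitted before recursion, and the new tets lie in the parent tet. The recursively produced pieces are spanned by loops whose vertices are parent-edge points beyond the first \(t_i\), together with points on interior edges \(ai\). I expect the plane argument to extend if I show that the subdivided pieces lie in the convex hull of the remaining (non-triangle) loop vertices together with \(a\), and that \(a\) (the center of mass) lies beyond the plane of \(T\). That last fact is the point I would check first, and I expect it to be the delicate one: if \(T\)'s points sit near the far end of the edges, the centroid could fall inside \(C_i\). If so, I would instead restrict the separation claim to the subdivided tet containing \(i\) and recurse on edge coordinates there.
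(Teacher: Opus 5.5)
Your first two steps are the paper's proof made explicit. The paper only says that the triangles $T_1,\dots,T_k$ at $i$ are nested and lie in the convex hull of $T_k$ and $i$, and concludes that no other surface can reach them. Your separating plane plus the convex-hull lemma is the justification that sentence leaves implicit.

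Your worry about non-normal loops is unnecessary, because no smeared polygon contains a tet vertex:
for contractible loops every vertex is labeled outside, and the edge segments adjacent to a vertex inherit its label;
for corner-type loops the inside vertex is explicitly dropped from $P$.
So every polygon of such a loop, its cap triangle included, has only points of $\gamma$ as vertices, and none of these is among the first $t_i$ points on $ij$, $ik$, $il$. The convex-hull lemma you already invoked settles this case with no ordering check. (The cap triangles need not use equal-index points on the three edges, so they do not belong in your nested family; the paper treats them separately in Theorem~\ref{thm:corner:type:no:isect}.)

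The genuine gap is the subdivision case. Your fallback cannot close it, because with $a$ the tet's center of mass, as described, the claim actually fails there.

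Write $v$ for the corner of $T$, and let $T$ meet the three edges at $v$ at parameters $s_1,s_2,s_3$ measured from $v$. Then $a$ lies strictly on $v$'s side of the plane of $T$ exactly when $1/s_1+1/s_2+1/s_3<4$ (e.g.\ all $s_r>3/4$). In that case the whole segment $av$ lies in $C_v$.

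Subdivision requires $d_1>d_2\geq 1$, so all four interior coordinates $2d_2,\ d_1,\ d_2,\ d_1-d_2$ are positive, and $av$ carries intersection points. In the pattern labeling of the subdivision step, the corner-$a$ count is zero on faces $ajk$ and $ail$ (e.g.\ $(e_{aj}+e_{ak}-e_{jk})/2=0$). Every sub-tet contains one of these faces, so no sub-tet has a corner loop at $a$. Hence every loop through a point of $av$ also passes through a remaining point of an original edge, which lies strictly beyond the plane of $T$.

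That loop is part of the output surface and stays inside the tet. It has points strictly on both sides of a plane that meets the tet exactly in $T$, so it crosses $T$. A concrete instance is one corner triangle at $v$ plus a $(4,2)$ pattern, with every point on the three edges at $v$ placed beyond parameter $0.8$.

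So no recursion inside the sub-tets containing $v$ can rescue the argument. The statement needs either an extra hypothesis or a modified subdivision rule: put the subdivision point and interior-edge points beyond every corner-triangle plane, then prove all recursive pieces stay in the truncated region. The paper's two-line proof skips this case without comment, so your instinct to check it first was correct.
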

\begin{proof}
    Consider a vertex $i$ of the tetrahedron. The triangles $T_1, \dots, T_k$ at \(i\) do not intersect each other. And they lie in the convex hull of the final triangle $T_k$ and the vertex $i$ so they cannot intersect any other surfaces either.
\end{proof}
Now we are free to assume that no corner triangles exist in our later intersection-free arguments.

\begin{theorem}
\label{thm:contract:no:isect}
    Surfaces constructed for contractible type non-normal curves do not intersect any of our surfaces.
\end{theorem}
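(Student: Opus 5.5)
The surface built for a contractible non-normal loop \(\gamma\) lies entirely in the tet boundary \(\partial T\). It is a union of polygons \(P \subset \sigma\setminus\gamma\), one family per boundary triangle \(\sigma\). I would therefore split the claim into two parts: intersections with other boundary-smeared surfaces, and intersections with surfaces that pass through the interior of \(T\).

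\textbf{Step 1: describe the region.} Since \(\gamma\) is contractible, it separates \(\partial T\) (viewed as a sphere punctured at the four vertices) into a disk \(D_\gamma\) containing no tet vertex and a complementary piece containing all four vertices. All vertices are labelled ``outside,'' and labels alternate along each edge as \(\gamma\) crosses it. So the emitted ``inside'' polygons are exactly the pieces of \(D_\gamma\). The surface is \(D_\gamma\) itself, an embedded disk in \(\partial T\) whose boundary is \(\gamma\).

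\textbf{Step 2: other boundary pieces.} Every curve of \(\Gamma\) is disjoint from \(\gamma\) by construction. Any other loop \(\gamma'\) meeting \(D_\gamma\) lies entirely inside it, and is then itself contractible; it cannot be of corner or diagonal type, because \(D_\gamma\) contains no vertex. Loops are therefore nested within \(D_\gamma\). Their disks, both contractible ones and the boundary portions of corner-type disks, are either nested or disjoint regions of \(\partial T\). The key point is that the disk of a nested contractible loop is by definition a strict subset of the sphere region.

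\textbf{Main obstacle: nesting.} Two nested contractible disks in \(\partial T\) do overlap as subsets. I would handle this by appeal to \secref{SimplicialEmbedding}: duplicated or overlapping boundary polygons are pushed an \(\varepsilon\) amount into the interior, with nesting depth ordering the push distances. Equivalently, I would argue that the ``scoop'' construction in Step (3) of \secref{BoundaryCurveReconstruction} only connects consecutive residual points. This forces each contractible loop to bound a thin strip around a single edge segment containing no other curve points, so nesting cannot actually occur. I expect this second route is the intended one and would try it first. Concretely, I would show that the region between consecutive paired residual points contains no other intersection point, so \(D_\gamma\) meets no other loop.

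\textbf{Step 3: interior surfaces.} The remaining surfaces are normal-loop disks, handled via Steiner points, octagons or subdivision, and diagonal-type disks. Each lies in the convex hull of its own loop by the preceding lemma. Any intersection with \(D_\gamma \subset \partial T\) would have to occur on \(\partial T\) itself, and the only points of such hulls on \(\partial T\) lie on the face-convex pieces spanned by their own boundary points. These cannot cross \(D_\gamma\), because the loops are disjoint and \(D_\gamma\) contains no vertex of those loops. The delicate part is faces where a hull touches \(\partial T\) along a whole face region. I would restrict to showing that the interior triangles (boundary segment plus an interior Steiner point \(a\)) meet \(\partial T\) only in their boundary segment, which holds because \(a\) lies in the open interior. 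The sole exception is loops whose vertices are all coplanar on one face, and such loops are exactly contractible scoop loops, already excluded.
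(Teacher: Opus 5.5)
There is a genuine gap at the step you correctly flag as the crux: showing that no loop is nested in the vertex-free disk \(D_\gamma\). Neither of your routes establishes it.

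\textbf{Route 1.} This misreads \secref{SimplicialEmbedding}. That step only separates the two identical copies of a polygon produced by the two tets sharing a face. It does not push nested smeared polygons apart by depth, and changing the construction would not prove a statement about it anyway.

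\textbf{Route 2.} This rests on a false structural claim. Take \(e_{ij}=e_{il}=e_{kl}=2\) and all other edge coordinates zero. Faces \(ijk\) and \(jkl\) get scoops on \(ij\) and \(kl\). Face \(ijl\) gets two corner-\(i\) segments and face \(ikl\) gets two corner-\(l\) segments. Together these form one six-segment contractible loop. Its disk is a strip crossing faces \(ijl\) and \(ikl\), and it contains the interval between the two points on \(il\). That interval is not a scoop interval in either adjacent face. So ``consecutive residual points have nothing between them'' does not imply that \(D_\gamma\) meets no other loop. In general \(D_\gamma\) is a union of strips and face pieces, and you would have to control it globally.

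\textbf{Missed case.} You only consider loops inside \(D_\gamma\). Overlap also arises if \(D_\gamma\) lies on the vertex side of a corner-type loop, whose smeared polygons would then cover it. This is why \lemref{contract:no:contain} allows the outer curve to be either contractible or corner type.

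\textbf{Missing idea.} Argue about the \emph{inner} curve, not the outer region. A loop lying in \(D_\gamma\) is contractible, hence non-normal, hence has a scoop, say on edge \(ij\) of face \(ijk\). A scoop there means the triangle inequality fails on \(ij\). So face \(ijk\) has no corner-\(k\) segments, and the corner segments at \(i\) and \(j\) use only the points of \(ij\) lying nearer to \(i\) and \(j\) than the residual ones. Hence a point \(x\) just inside the scoop can be joined to vertex \(k\) by a path crossing only that scoop (\lemref{scoop:opposite:corner}). That single crossing puts \(x\) in the inner loop's vertex-free disk, hence in \(D_\gamma\). Since \(k\notin D_\gamma\), the path must also cross \(\gamma\), which is a contradiction.

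With this lemma in hand, the paper's proof is essentially your Steps 1 and 3. The smeared polygons lie strictly in \(D_\gamma\), whose curve \(\gamma\) neither contains nor is contained in another such curve. The remaining surfaces meet \(\partial T\) only along their own boundary loops.
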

\begin{proof}
    A non-normal curve $\gamma$ of contractible type is not contained in another curve of the same kind (\lemref{contract:no:contain}). All the surface polygons emitted for this type of curve are constructed on the tet boundary. We call these the \textit{smeared} polygons. These smeared polygons are strictly within $\gamma$'s region on the boundary of the tet and since $\gamma$ is not contained nor contains another curve, these polygons do not intersect other surfaces. Hence the constructed surface does not intersect itself or other surfaces.
\end{proof}
Now we are free to assume that no such contractible non-normal curves exist in our later intersection-free arguments.

Before considering other non-normal curves, we define a \textbf{contraction} operation that can be applied to such curves. We use this operation in several proofs throughout this section.
\begin{definition}
   \label{def:contraction}
   For a non-normal boundary curve $\gamma$, the contraction process pulls $\gamma$ tight at non-normal segments until it becomes normal. The discrete version of this operation is as follows: on an edge of the tetrahedron where $\gamma$ has a scoop (segment running along an edge), reduce the edge coordinate by 2; this terminates when $\gamma$ is normal.
\end{definition}
\setlength{\columnsep}{1em}
\setlength{\intextsep}{0em}
\begin{wrapfigure}{r}{80pt}
\includegraphics{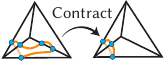}
\end{wrapfigure}
\noindent{}
Note that during the contraction process the smeared polygons remain smeared and or are removed entirely, and hence no new intersections arise, and no existing intersections are removed. And so if the surface for $\gamma$ does not intersect other surfaces, then it remains intersection-free at every stage of the contraction process.

\begin{theorem}
\label{thm:corner:type:no:isect}
    Surfaces constructed for corner-type non-normal curves do not intersect any of our surfaces.
\end{theorem}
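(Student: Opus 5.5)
We prove this by contraction, following the pattern of \thmref{contract:no:isect}. By \thmref{triangle:no:isect} and \thmref{contract:no:isect} we may assume there are no corner triangles and no contractible non-normal loops. Let \(\gamma\) be a non-normal loop of corner type with distinguished vertex \(i\).

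\textbf{Step 1: describe the surface.} The surface for \(\gamma\) consists of smeared polygons, all lying on the tet boundary inside the region \(R_\gamma\) that \(\gamma\) bounds on the same side as \(i\), together with one corner triangle at \(i\). We apply the contraction of \defref{contraction} to \(\gamma\), removing scoops two intersections at a time. Each step only shrinks or deletes smeared polygons, and by the remark after \defref{contraction} it neither creates nor destroys intersections. Since \(\gamma\) separates \(i\) from the other three vertices, the parities \(b_{ij},b_{ik},b_{il}\) are odd and the others even. The contraction therefore ends at a normal curve of length \(3\): a corner triangle around \(i\), possibly after also shrinking along edges not incident to \(i\). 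At that point the surface for \(\gamma\) is exactly a corner triangle, and \thmref{triangle:no:isect} shows it meets nothing. It follows that the original surface meets nothing either.

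\textbf{Step 2: justify the contraction argument.} The remark's "no new intersections" claim must hold with respect to \emph{all other} surfaces, not just \(\gamma\)'s own. This is the main obstacle. The other surfaces are of three kinds: normal-loop disks (quads, octagons, Steiner-point disks, subdivided disks), diagonal-type non-normal disks, and other corner-type loops. Boundary curves are disjoint and simple, so another loop \(\gamma'\) either lies inside \(R_\gamma\) or outside it.

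\emph{Case \(\gamma'\) outside \(R_\gamma\).} Its disk lies in the convex hull of \(\gamma'\) (by the convex-hull lemma above). The smeared part of \(\gamma\) lies on the boundary inside \(R_\gamma\), so the two can only meet on the boundary, where they are disjoint. The corner triangle at \(i\) can be placed at the innermost position near \(i\), and then it is separated from \(\gamma'\) by a plane cutting off \(i\).

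\emph{Case \(\gamma'\) inside \(R_\gamma\).} Any loop inside \(R_\gamma\) is nested around \(i\), so it must itself be a corner loop at \(i\), whether normal (excluded, since no corner triangles remain) or non-normal. We then argue by induction on nesting depth, contracting the innermost loop first. Its smeared polygons lie in a subregion of \(R_\gamma\) and are bounded by ``inside'' segments that alternate consistently, so smeared regions of nested corner loops at \(i\) are disjoint on the boundary. Their corner triangles are nested in order toward \(i\).

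\textbf{Step 3: close the argument.} Nested corner loops contract in order, so each final corner triangle sits at a distinct depth near \(i\). \thmref{triangle:no:isect} then applies to the whole nested family simultaneously.

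The delicate point I expect is the claim that contracting a scoop never drags a smeared polygon across a disk of another loop. I would handle it via the alternating inside/outside labeling. A scoop removed from \(\gamma\) borders only a region labeled inside for \(\gamma\), and that region contains no part of any other loop, since curves are disjoint and the region is adjacent to a tet edge segment between two consecutive intersection points of \(\gamma\).
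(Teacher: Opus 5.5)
\textbf{Gap: nested non-normal corner loops.} Your Step 1 matches the paper's contraction idea, but the proof breaks down in the nested case, which you try to handle by induction instead of ruling it out. In your ``inside'' case you allow a second non-normal corner loop $\gamma'$ at $i$ with $\gamma'\subset R_\gamma$. The same configuration also appears in your ``outside'' case, when $\gamma'$ is a corner loop at $i$ that encloses $\gamma$. There the convex-hull lemma does not help: $\gamma'$'s smeared polygons lie on the boundary inside $R_{\gamma'}\supset R_\gamma$, so ``disjoint on the boundary'' is exactly what still needs proof.

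Your claim that nested smeared regions are disjoint because the inside/outside labels ``alternate consistently'' does not follow. The labels are computed from one curve alone. Each face is split only along the segments of $\gamma$, and every piece on $\gamma$'s side is emitted, except the small triangle cut off at $i$ by $\gamma$'s first points on the edges at $i$. So $\gamma$'s smeared polygons fill essentially all of $R_\gamma$, and nothing prevents them from overlapping $\gamma'$'s coplanar smeared polygons. Contraction cannot fix this, and your Step 3 is circular here. The remark that contraction ``removes no existing intersections'' holds only once you already know that smeared polygons meet nothing else. Contraction deletes smeared polygons, so it would delete exactly such an overlap.

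\textbf{Missing ingredient.} What you need is \lemref{contract:no:contain}: under the as-normal-as-possible boundary construction, a contractible or corner-type non-normal loop can never lie inside another one. Its proof uses \lemref{scoop:opposite:corner}. A point just inside a scoop of the inner loop on edge $ij$ of face $ijk$ is separated from $k$ by that single scoop segment, which contradicts the parity count for two nested loops. This is why the paper handles the smeared polygons first, using the non-nesting argument from \thmref{contract:no:isect}, and only then uses contraction for the corner triangle.

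\textbf{What survives once non-nesting is in place.} Your case analysis reduces to the ``outside'' case. There your plane-separation argument is sound and gives a cleaner treatment of the corner triangle than contraction: every point of $\gamma'$ on an edge at $i$ lies beyond $\gamma$'s first point, so $\mathrm{conv}(\gamma')$ lies strictly beyond the plane of the corner triangle. For the smeared part, the convex-hull appeal is not quite enough, because $\mathrm{conv}(\gamma')$ can cover two-dimensional regions of a face. Use instead the fact that quad, octagon and Steiner-fan disks meet the tet boundary only along their own boundary curves.
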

\textsc{Proof.}
    A corner-type curve $\gamma$ also contains some smeared polygons similar to those in contractible type curves (\thmref{contract:no:isect}).
    With the same argument as before, we see that these polygons do not intersect themselves or others.
    Now consider the vertex $i$ of the tetrahedron that $\gamma$ loops around. The final polygon used to fill in \(\gamma\) is a corner triangle at vertex $i$. To show that this triangle also cannot intersect any other surfaces, we contract (see \ref{def:contraction}) the non-normal curve until it becomes a normal corner curve at vertex $i$.
    Once the contraction is done, we only have a corner triangle at vertex $i$. This triangle does not intersect other surfaces due to \thmref{corner:triangles}, and so our original surface must also not have intersected any other surfaces in the contraction process.
\qed

Before proceeding to the next class of surfaces, we prove the key lemma on smeared polygons which we used above in \thmref{contract:no:isect}.

\begin{lemma}
\label{lem:contract:no:contain}
    A curve $\gamma$ of contractible or corner-type can not be contained in the interior of another contractible or corner-type curve.
\end{lemma}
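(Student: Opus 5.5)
We argue by contradiction, using the combinatorics of how boundary curves are built face by face in \secref{BoundaryCurveReconstruction}. Suppose $\gamma'$ lies in the interior (the ``inside'' region, in the sense of the vertex labels) of another contractible or corner-type curve $\gamma$. Since all curves in $\Gamma$ are disjoint and simple, the inside region $R$ of $\gamma$ is a disk on the tet boundary. For a corner-type curve, $R$ contains one vertex $i$ of the tetrahedron. For a contractible curve, $R$ contains no vertex. Then $\gamma'$ lies in $R$. Because $\gamma'$ is itself of contractible or corner type, it is non-normal, so it has at least one scoop segment along some edge. The plan is to show that the segment-connection rules of Steps (1)--(3) can never place a non-normal curve strictly inside another curve's disk $R$.

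\textbf{Step 1: $R$ is a union of strips along edges.} The curve $\gamma$ crosses each tet edge in a set of intersection points. The region $R$ meets each face $\sigma$ in polygons of $\sigma\setminus\gamma$ labeled ``inside''. We characterize $R\cap\sigma$ locally.
\begin{itemize}
   \item If $\gamma$ is contractible, $R$ avoids all vertices. It is then a neighborhood of a union of edge sub-intervals, each bounded by scoops of $\gamma$ on one side and by normal corner arcs (or further scoops) on the other.
   \item If $\gamma$ is of corner type, $R$ is the same kind of region together with a neighborhood of vertex $i$.
\end{itemize}

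\textbf{Step 2: a nested scoop violates the scoop rule.} On some edge $ij$, $\gamma'$ has a scoop joining consecutive points $p_a,p_{a+1}$. This scoop lies inside $R$. Near edge $ij$, the disk $R$ is cut out by segments of $\gamma$ whose endpoints on $ij$ flank $p_a,p_{a+1}$. We show that the outermost such enclosing segments must themselves be scoops of $\gamma$ on the same edge $ij$ in the same face.
\begin{itemize}
   \item A normal corner segment of $\gamma$ at corner $i$ would force the points between it and vertex $i$ to be corner segments as well. This is because Step (1) connects the \emph{first} $c_i$ pairs from the corner outward, and corner cuts are built before residual scoops in Step (3).
   \item So the scoop of $\gamma'$ would have to lie between two points used by one scoop of $\gamma$.
   \item Step (3) connects \emph{consecutive} residual pairs, possibly skipping the first point. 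So no scoop ever has another edge point strictly between its endpoints.
\end{itemize}
This gives the contradiction.

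\textbf{Main obstacle.} The difficulty is that $\gamma$ may reach around $\gamma'$ through several faces rather than enclose it in a single face. Here I would instead use the parity and contraction viewpoint (\defref{contraction}). Contracting $\gamma'$ inside $R$ yields, in the contractible case, nothing; in the corner case, a normal corner curve around vertex $i$. That corner curve would lie closer to $i$ than the corner segments of $\gamma$. This again contradicts the innermost-first ordering of corner cuts in Step (1), since it means the corner triangles counted in \lemref{thm:corner:triangles} were not taken to be the innermost arcs. To make this airtight, I would check the ordering argument face by face, including in the adjusted coordinates $e'_{ij}$ of Step (3).
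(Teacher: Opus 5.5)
\paragraph{Where the proposal breaks.} Your Step~2 does not produce a contradiction. Once $\gamma'$ has a scoop on edge $ij$ of face $ijk$, the triangle inequality fails at $ij$ in that face. So the only segments there are $e_{ik}$ corner arcs at $i$, $e_{jk}$ corner arcs at $j$, and pairwise non-nested scoops on $ij$ lying between them. In particular there are no corner arcs at $k$ (\lemref{scoop:opposite:corner}). You are right that no scoop of $\gamma$ can straddle the scoop of $\gamma'$. But nothing prevents $\gamma$ from flanking it by a corner arc at $i$ and a corner arc at $j$. Your remark that points between a corner arc and vertex $i$ are again corner points does not exclude this, so the step ``the scoop of $\gamma'$ would have to lie between two points used by one scoop of $\gamma$'' is a non sequitur. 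In that situation the enclosed part of face $ijk$ is the central region, which contains $k$.

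What the local analysis really gives is weaker: the vertex opposite every scoop of $\gamma'$ lies in $R$. Hence $\gamma$ must be of corner type at that vertex, which is not a contradiction. Your fallback for the ``main obstacle'' does not close this case. Contracting a contractible $\gamma'$ to nothing yields no contradiction at all. For a corner-type $\gamma'$, the contracted curve is not an output of the construction, since its edge coordinates differ. So the innermost-first ordering of Step~(1) says nothing about where it sits relative to $\gamma$, and neither does \thmref{corner:triangles}, which concerns normal collections.

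\paragraph{The missing idea, and a caution.} The paper's proof avoids tracking how $\gamma$ wraps around $\gamma'$ by making the argument global. Take a point $x$ between the scoop of $\gamma'$ and edge $ij$. By \lemref{scoop:opposite:corner}, a path from $x$ to $k$ inside face $ijk$ crosses exactly one segment, namely that scoop. If $k$ lies outside $R$, it is also outside the inside of $\gamma'$. Then $x$, being one crossing of $\gamma'$ away from $k$, lies inside both curves. The Jordan curve theorem on the boundary sphere then forces an even number of crossings, which is a contradiction. This replaces your Steps~1--2 and the multi-face worry entirely.

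However, this parity count needs $k$ outside the outer curve. When the outer curve is of corner type at $k$---exactly the configuration your local analysis reduces to---the count is odd and consistent. If $\gamma'$ is contractible, $x$ is inside $\gamma'$ while $k$ is inside $\gamma$ but outside $\gamma'$. If $\gamma'$ is of corner type at $k$, then $x$ lies outside $\gamma'$. The paper's assertion that every path from inside the inner curve to a vertex crosses the pair an even number of times does not hold for that vertex. So a complete proof must still rule out an inner curve all of whose scoops lie opposite the distinguished vertex of an enclosing corner-type curve. Your contraction sketch does not do this, and neither does the parity count by itself.
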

\textsc{Proof.}
    We prove this by contradiction. Consider two non-normal curves $\gamma_1$ and $\gamma_2$ either of contractible or corner types. Without loss of generality, suppose that $\gamma_1$ is contained inside of $\gamma_2$. Since $\gamma_1$ is non-normal, it must have a scoop (a segment running along an edge
    \setlength{\columnsep}{1em}
    \setlength{\intextsep}{0em}
    \begin{wrapfigure}{r}{40pt}
    \includegraphics{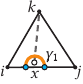}
    \end{wrapfigure}
    \noindent{}of the tetrahedron) on some edge \(\ij\) of some triangle $\ijk$. Consider a point $x$ in on the face $\ijk$ and inside the scoop (almost on the edge $\ij$). Since $\gamma_1$ is contained in $\gamma_2$, then $x$ is contained in both $\gamma_1$ and $\gamma_2$. If we ignore all the other curves then every path from a point inside $\gamma_1$ to a vertex of the tetrahedron must intersect these two curves an even number of times. But the path from $x$ to vertex $k$ only intersects $\gamma_1$ once (only at the scoop that it was a part of), due to \lemref{scoop:opposite:corner}, providing our desired contradiction.
\qed

\begin{lemma}
\label{lem:scoop:opposite:corner}
    If a scoop exists on triangle $\ijk$ along edge $\ij$, then the path connecting any point \(x\) inside the scoop to the opposite vertex \(k\) cannot cross any other segments. In particular, no 2D corner segment can exist at vertex $k$ on triangle \(\ijk\).
\end{lemma}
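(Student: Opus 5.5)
The plan is to argue directly from the boundary curve reconstruction rules of \secref{BoundaryCurveReconstruction} on the single face \(\ijk\). The key observation is that a scoop along edge \(\ij\) is only ever created in Step (3). Step (3) applies only when the triangle inequality on \(\ijk\) is violated with \(\ij\) as the unique residual edge, so that \(e_{\ij} > e_{jk} + e_{ki}\).

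First I would record what the other segments on \(\ijk\) look like in that case. Step (3) applies Step (1), possibly after the parity adjustment of Step (2), to the adjusted coordinates \(e'_{\ij} = e_{jk}+e_{ki}\). Plugging into \eqref{TriangleEdgeToCorner} with \(e'_{\ij} = e_{jk}+e_{ki}\) gives corner count \(c_k = (e_{jk}+e_{ki}-e'_{\ij})/2 = 0\). If Step (2) also fires, subtracting \((1,1,1)\) makes \(c_k\) negative. In that case I would check that the paper's procedure really yields no arc at corner \(k\); I expect the parity is already handled by skipping one residual point, so this subcase should not arise. Either way, no normal corner segment is placed at vertex \(k\), which is the ``in particular'' claim. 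The only segments on \(\ijk\) are then of three kinds: corner arcs at \(i\), joining \(\ij\) to \(ik\); corner arcs at \(j\), joining \(\ij\) to \(jk\); and scoops on \(\ij\).

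Second, I would locate these segments along \(\ij\). The construction connects the first \(c_i\) points from the \(i\) end to \(ik\), and the first \(c_j\) points from the \(j\) end to \(jk\). Hence the residual points, and the scoops joining consecutive residual pairs, all sit in the middle interval of \(\ij\), strictly between the \(i\)-arcs and the \(j\)-arcs. Corner arcs at \(i\) are nested around \(i\) and separate only a neighborhood of \(i\). A point \(x\) just inside a scoop, near \(\ij\) in the middle interval, therefore lies on the same side as \(k\) of every \(i\)-arc and every \(j\)-arc. Scoops are pairwise disjoint, consecutive, and not nested, so \(x\) is enclosed only by its own scoop. I would then exhibit an explicit path, for example a straight segment from \(x\) to \(k\) after a small perturbation, and verify that it meets no corner arc. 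Corner arcs at \(i\) cut off a region containing \(i\) bounded by points on \(\ij\) and \(ik\) closer to \(i\) than \(x\)'s foot. Since the arcs are straight segments, convexity shows the segment from \(x\) to \(k\) avoids that region, and similarly for \(j\).

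The main obstacle is making the ``cannot cross'' statement precise. The claim should be read as crossing no segment other than the scoop containing \(x\), with \(x\) regarded as lying on the scoop side. I would handle this by interpreting crossing up to the isotopy or perturbation convention used for scoops in \lemref{ManifoldCellComplex}. I would also need care with the odd-\(r\) case, where the first residual point is skipped and becomes an open endpoint. That point lies on \(\ij\) itself and not in the face interior, so it does not obstruct the path.
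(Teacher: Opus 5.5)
Your proposal is correct and takes essentially the same route as the paper. A scoop on $ij$ can only come from Step (3), which forces $c_k = 0$ and places the corner arcs at $i$ and $j$ strictly closer to their vertices than the residual points; your convexity argument just makes explicit the paper's brief remark that the corner segments at $i$ lie on one side of the scoop, and the Step (2) subcase you hedge on cannot arise, since the adjusted coordinates already have even sum $e'_{ij}+e_{jk}+e_{ki} = 2(e_{jk}+e_{ki})$.
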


 \setlength{\columnsep}{1em}
 \setlength{\intextsep}{0em}
 \begin{wrapfigure}{r}{80pt}
 \includegraphics{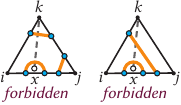}
 \end{wrapfigure}
\textsc{Proof.}
    Our construction for boundary curves (\secref{BoundaryCurveReconstruction}) is an as-normal-as-possible construction. A scoop on edge $\ij$ at triangle $\ijk$ is drawn if an only if there are residual points on edge $\ij$ and so the triangle inequality is violated: $e_{ik}+e_{jk} \leq e_{\ij}$. So no normal segment at corner $k$ is constructed. Similarly, all the corner segments of vertex \(i\) are on one side of the scoop, closer to \(i\), so the corner cut depicted on the right of the inset is also impossible.
\qed

\smallskip

The lemmas and theorems above prove that the no-intersection property holds for surfaces generated for corner triangles, contractible non-normals and corner-type non-normals.
As usual, we now discard these curves and show that the other types of surfaces that we construct do not intersect each other.

The remaining curves are (a) normal curves corresponding to sum of two diagonal cuts $(d_1,d_2)$ and (b) non-normal curves of diagonal type.
These two types of curves can exist along with all the previously mentioned curves. Also note that both of these types are identified as diagonal types; in the sense that they separate two vertices of the tetrahedron from the other two.

However, we will show in \thmref{diagonal:type:co:exist} that no two diagonal type curves can co-exist if one of them is non-normal.

As a direct result of this theorem, we can show that our surface construction for diagonal-type non-normal curves, and for single component normal curves do not intersect any of our surfaces.

\begin{theorem}
\label{thm:single:comp:no:isect}
    If $\gamma$ is a diagonal-type non-normal curve, or a normal diagonal curve with a single component ($k=1$ from \secref{PrimalSurfaceReconstruction}), then its corresponding surface does not intersect itself or other surfaces.
\end{theorem}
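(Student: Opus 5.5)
The argument rests on \thmref{diagonal:type:co:exist}, stated just before this theorem: once corner triangles, contractible non-normal curves and corner-type non-normal curves have been discarded (by \thmref{triangle:no:isect}, \thmref{contract:no:isect} and \thmref{corner:type:no:isect}), $\gamma$ is the \emph{only} remaining curve. In either case of the statement this holds. If $\gamma$ is diagonal non-normal, no other diagonal-type curve can coexist with it. If $\gamma$ is a single normal diagonal component, any other diagonal curve would either be normal, contradicting $m=1$, or non-normal, which is again excluded by \thmref{diagonal:type:co:exist}. So no disk from a different curve remains to intersect, and only self-intersection of the disk spanning $\gamma$ has to be ruled out.

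\textbf{Step 1: the disk is a cone.} In both cases the disk is the cone $C = \bigcup_{s \in \gamma} \operatorname{conv}(s \cup \{x\})$. Here $x$ is the Steiner point: the center of mass of the vertices of $\gamma$ in the single-loop normal case, and the center of mass $a$ in the diagonal non-normal case. Two cone triangles over distinct segments $s, s'$ can meet away from the common apex $x$ (and away from a shared endpoint, if any) only if some ray from $x$ passes through both $s$ and $s'$. So I would show that central projection from $x$ is injective on $\gamma$. Concretely, no ray from $x$ meets $\gamma$ twice.

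\textbf{Step 2: injectivity of projection.} First, $x$ lies in the tet interior, since it is an average of points on the tet boundary that do not all lie on one face: a diagonal curve meets all four faces. Consider the radial projection from $x$ onto a small sphere around $x$, which is a homeomorphism from the tet boundary onto that sphere. Because $\gamma$ is a simple closed curve on $\partial T$, its image on the sphere is again simple. A ray from $x$ meets $\partial T$ in exactly one point, so it cannot hit two distinct points of $\gamma$. Hence two cone triangles meet only along a common ray from $x$ through a shared vertex of $\gamma$, i.e.\ along a shared edge.

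\textbf{Step 3: degenerate and smeared pieces.} Scoop segments of a non-normal $\gamma$ lie on tet edges, but they are still pieces of $\partial T$, and the projection argument applies unchanged. The triangle over such a segment is planar and nondegenerate because $x$ is interior. It remains to check that $C$ stays inside $T$ and does not touch the disks already discarded. By the convex hull lemma, $C \subset \operatorname{conv}(\gamma) \subset T$. The discarded disks either hug corners or are smeared on the boundary, and the earlier theorems show they avoid all surfaces.

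\textbf{Main obstacle.} The delicate point is Step 2, specifically that $x$ lies strictly inside $T$ rather than on a face. If the vertices of $\gamma$ concentrated on a single face, the radial map would degenerate. I would handle this through the diagonal-type parity: $\gamma$ separates $\{i,j\}$ from $\{k,l\}$, so it has an odd number of crossings, hence at least one point, on each of the four edges $ik, il, jk, jl$. No face contains all four of these edges, so the average of the vertices of $\gamma$ has positive weight off every face. One point remains uncertain to me: the paper's centroid $a$ in the non-normal case may mean the centroid of the segments rather than of the vertices. I expect the same argument to go through with positive weights in that case as well.
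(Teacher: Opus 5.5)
Your proposal is correct and follows the same route as the paper. Like the paper, you first use Theorem~\ref{thm:diagonal:type:co:exist} (together with $m=1$) to show $\gamma$ is the only remaining diagonal curve, and then argue that the single-Steiner-point fan over $\gamma$ is embedded. Your radial-projection argument, with the apex shown to lie strictly inside the tet via the four odd-parity edges, makes rigorous the step the paper only asserts (``does not intersect itself since $\gamma$ does not intersect itself''): a simple curve alone does not guarantee an embedded cone, and the interior apex is what does.
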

\begin{proof}
    In both cases, the curve exists as a single component (see \thmref{diagonal:type:co:exist} for the non-normal case). We construct a surface by inserting a single Steiner point and connecting it up with all the segment of $\gamma$ to make a fan of triangles. So the surface does not intersect itself since $\gamma$ does not intersect itself.
\end{proof}

Next we show that diagonal-type non-normal curves cannot coexist with the other diagonal-type curves.

\begin{theorem}
\label{thm:diagonal:type:co:exist}
    If we have non-normal diagonal-type curve $\gamma$ on the tetrahedron, no other diagonal-type curve $\gamma'$, normal or non-normal, can exist on the tetrahedron.
\end{theorem}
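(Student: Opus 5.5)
We argue by contradiction, with a counting argument on one face that contains the scoop of \(\gamma\). Suppose \(\gamma\) is a non-normal diagonal-type curve and \(\gamma'\) is another diagonal-type curve on the same tetrahedron. Since \(\gamma\) is non-normal, it contains a scoop: a segment on some face \(\ijk\) that runs along edge \(\ij\) and joins two residual points of \(\ij\).

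\textbf{Step 1: the residual case on face \(\ijk\).} By the construction in \secref{BoundaryCurveReconstruction}, a scoop on \(\ij\) only occurs in Step (3). So \(e_{\ij} \geq e_{jk} + e_{ki}\) on that face. \lemref{scoop:opposite:corner} then gives two facts. First, no corner segment sits at \(k\) on \(\ijk\). Second, a point \(x\) just inside the scoop can be joined to \(k\) by a path \(\pi\) that crosses no segment at all.

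\textbf{Step 2: parity of crossings.} A closed simple curve splits the boundary sphere into two sides. Fix the side of \(\gamma\) containing \(x\). Then \(\pi\) runs from \(x\) to \(k\) without crossing \(\gamma\), and the two endpoints of \(\pi\) differ in parity of crossings with \(\gamma\) only through the scoop. Hence \(k\) lies on the same side of \(\gamma\) as \(x\), namely the side enclosed by the scoop. The scoop is a short arc hugging \(\ij\), and the region it cuts off near \(\ij\) contains neither \(i\) nor \(j\). So, looking only at the vertices of face \(\ijk\), \(\gamma\) separates \(k\) from both \(i\) and \(j\).

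Since \(\gamma\) is of diagonal type, it separates the four vertices two and two. Therefore \(\gamma\) separates \(\{k,l\}\) from \(\{i,j\}\). In parity-bit language, \(b_{\ij}(\gamma)=0\) and \(b_{ik}(\gamma)=b_{jk}(\gamma)=1\).

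\textbf{Step 3: trapping \(\gamma'\).} The curves \(\gamma\) and \(\gamma'\) are disjoint, so \(\gamma'\) lies on one side of \(\gamma\). That side is a disk containing exactly two punctures, either \(\{i,j\}\) or \(\{k,l\}\). For \(\gamma'\) to be of diagonal type, it must separate those two punctures from the other two. Inside a twice-punctured disk, this forces \(\gamma'\) to be isotopic to \(\gamma\), i.e.\ parallel to it. In particular, \(\gamma'\) must cross \(ik\) and \(jk\) an odd number of times and also pass between \(\gamma\) and vertex \(k\) along the relevant edges.

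We now use the face \(\ijk\). The scoop together with the path \(\pi\) forms a channel from the scoop to \(k\) that no segment crosses. Next, consider how a curve parallel to \(\gamma\) on the \(\{k,l\}\) side would enter face \(\ijk\). It would have to cut off corner \(k\) there, using a corner segment at \(k\) or a segment between \(ik\) and \(jk\). Step 1 rules this out. A curve parallel to \(\gamma\) on the \(\{i,j\}\) side would have to run along the same scoop region. This contradicts the canonical pairing of consecutive residual points in Step (3a/b): all residual points lie on one contiguous run of \(\ij\), so two parallel scoops would be nested, and the parity labeling forces the inner one to be contractible, not diagonal.

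\textbf{Main obstacle.} The hardest step is this last one: making precise that a parallel copy \(\gamma'\) must leave a trace on face \(\ijk\) that conflicts with the scoop or with \lemref{scoop:opposite:corner}. My plan is to apply the contraction of Definition \ref{def:contraction} to \(\gamma\) and then count edge coordinates. After contraction, \(\gamma\) becomes a normal diagonal loop, and \(\gamma'\) still contributes at least one point to each of \(ik\) and \(jk\). This should make \(e_{ik}+e_{jk}\) too large compared with the residual surplus on \(\ij\), contradicting \(e_{\ij} \geq e_{jk}+e_{ki}\) restricted to the points used by \(\gamma \cup \gamma'\).
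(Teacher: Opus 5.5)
Your Step 2 misreads \lemref{scoop:opposite:corner}, and its conclusion is false. The lemma says the path from a point $x$ inside the scoop to $k$ crosses no segment \emph{other than the scoop itself}. Since $x$ sits in the sliver between the scoop and edge $ij$, that path crosses the scoop exactly once, so $x$ and $k$ lie on \emph{opposite} sides of $\gamma$. More seriously, the sliver says nothing about which side of $\gamma$ the vertices $i$ and $j$ lie on; that is decided by the parity of $\gamma$'s corner segments at $i$ and $j$. So the claim that $\gamma$ separates $\{i,j\}$ from $\{k,l\}$ (with $b_{ij}(\gamma)=0$) does not follow, and it fails in the paper's own example $\ve=(1,3,0,3,3,0)$ from \figref{SoapBubble}. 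There the boundary construction produces a single loop of length $10$:
one corner segment at $0$ plus a scoop on edge $02$ in face $012$;
one corner segment at $1$ plus a scoop on edge $13$ in face $013$;
three corner segments at $2$ in face $023$ and three at $3$ in face $123$.
Its parity bits $b_{01}=b_{02}=1$, $b_{03}=0$ show it is diagonal and separates $\{0,3\}$ from $\{1,2\}$. For the scoop on $02$ (so $ij=02$, $k=1$), $\gamma$ separates $k$ from $i$ but not from $j$, and $e_{02}=3$ is odd. Everything in Step 3 that relies on ``$\gamma'$ crosses $ik$ and $jk$ an odd number of times'' therefore collapses.

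Step 3 also fails independently of this. Your observation that $\gamma'$ must be parallel to $\gamma$ is correct; it is the paper's puncture-free middle region $A$. The local consequences you draw from it are not. A curve can separate the vertices of face $ijk$ using corner segments at $i$ or $j$ rather than at $k$, exactly as $\gamma$ does in the example. The scoops from Step (3a/b) join consecutive residual points, so they sit side by side and are never nested. Your closing counting plan cannot become a one-face inequality either. Doubling the example to $(2,6,0,6,6,0)$ still gives $e_{02}=6>e_{01}+e_{12}=2$ on face $012$. What rules out two parallel diagonal loops is how the as-normal-as-possible rule glues segments \emph{across} faces: this input in fact yields loops of lengths $14$ and $6$, both with all-even edge coordinates, hence contractible.

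The missing ingredient is the paper's global argument:
reduce to $\gamma$ and $\gamma'$ alone;
use the correctly read scoop lemma (exactly one crossing between $k$ and $x$) to place the interior of \emph{every} scoop in $A$;
contract both curves to a normal $(2d_1,2d_2)$ pattern;
show, as in \lemref{reverse:contraction:region}, that any sequence of reverse contractions ending in an as-normal-as-possible configuration must create a scoop outside $A$.
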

\textsc{Proof.}
   Suppose for contradiction that we have a second diagonal-type curve \(\gamma'\) which does not intersect \(\gamma\).
   We make two simplifying assumptions without loss of generality.
   First, we assume that $\gamma$ and $\gamma'$ are the only curves on our tetrahedron. If another curve exists (of any type), we can remove it without affecting the rest of the construction: removing a whole curve amounts to removing corner segments and scoops, none of which change the as-normal-as-possible construction of the other curves.
   Second, $\gamma$ and $\gamma'$ must
   \setlength{\columnsep}{1em}
   \setlength{\intextsep}{0.4em}
   \begin{wrapfigure}[7]{r}{70pt}
   \includegraphics[width=\linewidth]{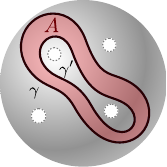}
   \end{wrapfigure}
   share the same diagonal type (\ie{} separate the same pair of vertices), since curves of different diagonal types always intersect.

   Now consider the simplified picture of $\gamma$ and $\gamma'$ in the inset, where the tet is seen as a sphere with four punctures at its vertices (also referenced in \secref{NonNormalBoundaryCurves}).
   Since $\gamma$ and $\gamma'$ are of the same diagonal type they separate the same vertices, creating three regions on the sphere. The middle region, which we call \(A\), does not contain any punctures.

   We will first show that the inside of any scoop must belong to region $A$. Then we will show that there must exist some scoop belonging to a different region, providing the desired contradiction.

   \setlength{\columnsep}{.4em}
   \setlength{\intextsep}{0em}
   \begin{wrapfigure}{r}{40pt}
   \includegraphics{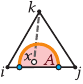}
   \end{wrapfigure}
   Consider a face $\ijk$ of the tetrahedron with a scoop on edge $\ij$, and take a point $x$ inside the scoop. By \lemref{scoop:opposite:corner}, \(x\) is separated from vertex \(k\) by a single curve. Since all the vertices lie outside region $A$, this means that $x$ must lie in region $A$ (since any path leaving a vertex ends up in region $A$ if and only if it intersects $\gamma$ and $\gamma'$ an odd number of times.) Hence all scoops on all faces must be in region $A$.

   Now we derive our contradiction by showing that there must be some scoop which is not in region $A$. Recall that one can always contract both curves (see \ref{def:contraction}) until they are both normal. We will show that the reverse of this contraction operation must produce a scoop that is not in region $A$.
   Contraction produces two diagonal-type normal curves, both of which have the same diagonal type. Note that all edges must have an even number of intersections (since region \(A\) contains no vertices), so our normal curves must form a $(2d_1,2d_2)$ pattern; the corner segments in this pattern can be seen as a collection of stripes that enclose region $A$ (see \figref{reverse:contraction:region}). Starting from this configuration, \lemref{reverse:contraction:region} shows that the reverse of the contraction operation must produce a scoop that is not in region $A$, providing our desired contradiction.
\qed

\begin{figure}
   \includegraphics{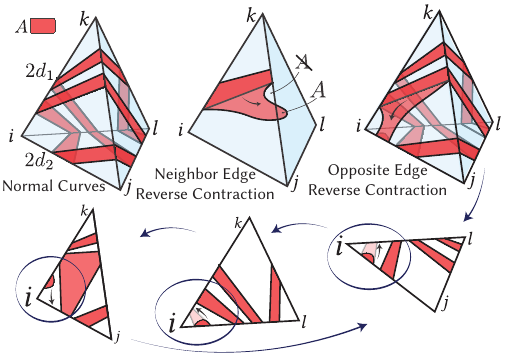}
   \caption{Illustration of reverse contraction in the proof of \thmref{diagonal:type:co:exist}. The initial configuration has no scoops.
   Reverse contraction either pulls a corner segment towards its neighboring edge (middle), which produces a double-scoop pattern and a scoop that is not in region $A$; or pulls it towards the opposite edge (right), which produces a scoop that is trapped between corner $i$ and the closest corner segment of vertex $i$ on triangle $ijl$; this pattern repeats without ever reaching an as-normal-as-possible configuration. }
   \label{fig:reverse:contraction:region}
\end{figure}

\begin{lemma}
   \label{lem:reverse:contraction:region}
   If we start from a $(2d_1, 2d_2)$ pattern and perform reversed contractions until reaching an as-normal-as-possible configuration, then we must produce a scoop that is not in region $A$.
\end{lemma}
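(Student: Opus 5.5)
We first fix what a single reverse contraction can do, and then follow the first scoop it creates. Reverse contraction is the inverse of \defref{contraction}. On some edge \(\ij\), it adds two intersection points, so \(e_{\ij}\) increases by 2. It then reroutes one segment that currently crosses a face incident to \(\ij\) so that it dips across \(\ij\), creating a new scoop on \(\ij\) in one of the two faces \(\ijk\), \(\ij l\).

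In the starting \((2d_1,2d_2)\) pattern, every segment is a corner segment, and the stripes bound region \(A\) (\figref{reverse:contraction:region}, left). So the first reverse contraction must pull some corner segment \(s\) on a face \(\ijk\) across an edge. Say \(s\) is at corner \(i\), so it has endpoints on \(\ij\) and \(ik\). There are two cases.

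\textbf{Case 1: \(s\) is pulled toward an edge adjacent to its corner,} say \(\ij\). The new scoop on \(\ij\) then forms a ``double scoop'': the rerouted curve runs along \(\ij\) and turns back. Because both curves separate the same pair of vertices, the region cut off by this scoop lies on the opposite side of \(s\) from \(A\). To justify this, take a point \(x\) inside the new scoop, next to \(\ij\). Count the curve crossings on a path from \(x\) to \(i\), and compare with \(A\), which lies an odd number of crossings from every vertex. Here the crossing count from \(x\) to \(i\) is even, so \(x \notin A\).

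The remaining step in this case is to check that later reverse contractions cannot move this scoop back into \(A\). Reverse contractions only add scoops and never remove them, and a scoop's region label is fixed by parity, which later moves preserve. So a scoop outside \(A\) persists to the as-normal-as-possible configuration.

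\textbf{Case 2: \(s\) is pulled toward the opposite edge \(jk\).} The new scoop on \(jk\) in face \(\ijk\) then lies between vertex \(i\) and the next corner segment of \(i\) on the adjacent face \(ijl\). Lemma~\ref{lem:scoop:opposite:corner} forbids any corner segment at \(i\) on \(\ijk\) in an as-normal-as-possible configuration. So this configuration is not yet as-normal-as-possible, and some further reverse contraction must remove the offending corner segment at \(i\). That further move is again a reverse contraction of a corner segment, and it falls into Case 1 or Case 2 at a neighboring face.

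We would run an induction on a potential to show that Case 2 cannot repeat forever. A natural choice is the number of corner segments that violate \lemref{scoop:opposite:corner}, or the total edge coordinate, which is bounded since the final configuration is fixed. Each Case 2 step pushes the violation to the adjacent corner cut without eliminating it, as the caption of \figref{reverse:contraction:region} suggests. Since the process must terminate in an as-normal-as-possible configuration, some step must be of Case 1, and that step produces a scoop outside \(A\).

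The main obstacle is this termination argument: showing that the cyclic ``trapped scoop'' pattern of Case 2 never resolves on its own. We would first try to make precise the invariant that after a Case 2 step there is always a scoop lying between some vertex and a corner segment of that same vertex on an adjacent face. We would then check that only a Case 1 move can destroy this invariant. If that fails, the fallback is a direct parity count: compute the number of scoops in \(A\) as a function of the final edge coordinates, and show it is strictly less than the total number of scoops.
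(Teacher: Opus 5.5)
\textbf{There is a genuine gap, and it is the step you flag yourself: Case 2 is never closed.} The termination framing does not help. Finiteness of the process cannot by itself force a Case 1 step, and your sentence ``since the process must terminate in an as-normal-as-possible configuration, some step must be of Case 1'' assumes what the lemma asserts.

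Your tentative invariant (a scoop trapped between a vertex and that vertex's corner segments) is the right one. What is missing is the argument that it persists:
\begin{itemize}
\item Reverse contractions never delete a scoop or change its region. So it suffices to examine the \emph{first} reverse contraction out of the normal $(2d_1,2d_2)$ pattern.
\item When $d_1,d_2>0$, each face has exactly one vertex with no corner segments on it. A corner segment can reach the opposite edge of its face only through the middle region exposed to that vertex. It therefore crosses that edge next to this vertex, creating a scoop in the \emph{adjacent} face that lies between the vertex and all of its corner segments there.
\item The vertex now has a corner segment on every incident face; the pulled segment itself supplies one on the face it came from. So however the scoop is later pushed from face to face, it stays between the vertex and its closest corner segment.
\item That cannot happen in an as-normal-as-possible configuration. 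There a vertex's corner segments always take the points nearest to it, and scoops come only from residual points (as in the proof of \lemref{scoop:opposite:corner}). Such sequences are therefore discarded outright, with no potential function.
\end{itemize}
The case $d_2=0$ (two parallel quads) needs a separate direct check. There the vertices beside the new scoop have no corner segments on its face, so the trapping argument does not apply verbatim. One verifies instead that pushing the scoop further across either remaining edge of that face traps it next to the corresponding vertex.

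Your Case 2 geometry also needs correcting. Pushing a segment of $ijk$ across $jk$ puts the scoop in face $jkl$, not $ijk$. It also cannot sit next to corner segments of $i$ on $ijl$, since $jk$ is not incident to $i$.

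\textbf{Your Case 1 justification also needs repair.} The claim that the crossing count from $x$ to $i$ is even is asserted, not derived, and for a given scoop it is false in general. Its parity depends on how many corner segments at $i$ lie between $s$ and $i$ on $ij$, and on which new scoop you look at. Pulling $s$ across an edge it already meets creates \emph{two} scoops on $ij$, one in $ijk$ and one in $ijl$, sharing an endpoint. Their interiors are separated by a single crossing of the curve at that shared point, so they lie in different regions, and at least one of them is outside $A$. Together with the persistence of scoops and their regions, this settles Case 1 without any parity count toward a vertex.
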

\textsc{Proof.}
   First note that a reversed contraction operation can only create new scoops or move an existing scoop to a new face, but it can never remove a scoop or change the region of an existing scoop. So if we ever produce a scoop that is not in region \(A\), we are done.

   As depicted in \figref{reverse:contraction:region}, a reverse  contraction operation will either pull a corner segment towards one of its neighboring edges of the face $ijk$, or towards the opposite edge. In the first case, we create a new scoop on each of the neighboring faces, and we can go from the interior of one scoop to the interior of the other by crossing a single curve. Thus the two scoops are in different regions, so at least one must be outside of region \(A\).

   In the second case (where a corner segment is pulled towards the opposite edge), our corner segment it must be one of the segments in the middle section of the face $ijk$ (exposed to vertex $i$). Without loss of generality, take the segment to be a corner segment of vertex $k$ and pulled towards edge $ij$ (\figref{reverse:contraction:region}, \figloc{right}). This produces a scoop on edge $ij$ on face $\ijl$ which lies between vertex \(i\) and any corner segments. So long as \(d_1\) and \(d_2\) are both greater than zero, vertex \(i\) now has at least one corner segment in each face. So even if we continue pulling the scoop into new faces, it will always stay between vertex \(i\) and the closest corner segment. Such a configuration can never be as normal as possible. Thus, these configurations can be neglected, as we only care about sequences of reversed contractions that yield an as-normal-as-possible configuration at the end.

   \setlength{\columnsep}{1em}
   \setlength{\intextsep}{0.4em}
   \begin{wrapfigure}{r}{54pt}
   \includegraphics{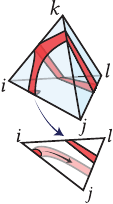}
   \end{wrapfigure}
   The only remaining case to consider is when \(d_2 = 0\). In this case we must also have \(d_1 = 2\), since we have exactly two curves on the tetrahedron. Again, consider pulling a scoop down across edge \(ij\). As before, continuing to pull the scoop across edge \(il\) leaves the scoop trapped between vertex \(i\) and a corner cut, so it is impossible to reach an as-normal-as-possible configuration. We can also pull the scoop across edge \(jl\), but the scoop is then trapped between vertex \(j\) and a corner cut, so it is still impossible to reach an as-normal-as-possible configuration.

   Thus it is impossible to start from a \((2d_1, 2d_2)\) pattern and reach an as-normal-as-possible configuration whose scoops are all in \(A\).
\qed

\smallskip

The only remaining case is the octagon case ($k>1$, $\ell = 8$, see \secref{PrimalSurfaceReconstruction} and  \figref{OctagonTriangulation}). In this case, we triangulate $k$ octagonal curves $\gamma_i$ (akin to almost normal surfaces). The Steiner points we insert for each octagon are placed in a particular order as explained in \ref{sec:PrimalSurfaceReconstruction} and  shown in \figref{OctagonTriangulation}. We show that this choice of Steiner points guarantees no-intersection between different octagons. Note that a single surface we construct for an octagon does not intersect itself with the same argument as \ref{thm:single:comp:no:isect}.

\newpage

\begin{theorem}
    Consider $k$ octagonal curves $\gamma_i$ on the boundary of the tetrahedron. Their constructed surfaces from \secref{PrimalSurfaceReconstruction} do not intersect each other.
\end{theorem}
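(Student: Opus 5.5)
Let \(e\) and \(e'\) be the opposite edges carrying \(2k\) intersections, with endpoints \(e=\overline{ij}\) and \(e'=\overline{kl}\). Every other edge carries \(k\) points. The plan is to reduce the problem to a planar, nested picture by slicing the tetrahedron with a one-parameter family of planes containing the axis \(L\) through the midpoints \(m\) of \(e\) and \(m'\) of \(e'\). Each octagon \(\gamma_a\) is a normal loop of the \((k,k)\) pattern. It visits \(e\) twice, \(e'\) twice, and each of the four remaining edges once. Its surface is the fan of eight triangles from its Steiner point \(x_a\in L\) to its eight boundary segments.

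First I will fix the combinatorics of the nesting using the flattened picture from the proof of \thmref{gcd:comps}. In the \((k,k)\) pattern the curves are concentric, so the loop through \(p_{k+a}\) on \(e\) also passes through the symmetric point \(p_{k-1-a}\). This means loop \(a\) meets \(e\) in a pair of points symmetric about \(m\), with innermost pair \(a=0\) and outermost pair \(a=k-1\). The same argument applied to \(e'\) shows that loop \(a\) meets \(e'\) in a nested pair about \(m'\). The order of this nesting (innermost on \(e\) corresponding to innermost or to outermost on \(e'\)) has to be read off the pattern. It must be consistent with the Steiner ordering \(x_0,\dots,x_{k-1}\) running from \(m\) toward \(m'\). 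I expect the loop innermost on \(e\) to be outermost on \(e'\), because the loops are stacked like parallel sheets transverse to \(L\). On each of the four side edges the \(k\) points appear in the same monotone order in \(a\).

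Next I will prove disjointness of any two fans \(F_a\) and \(F_b\) with \(a<b\). I parametrize the tetrahedron by a "height" function \(h\): the affine-like coordinate along \(L\), extended by the pencil of planes through the line parallel to the direction bisecting \(e\) and \(e'\). A cleaner choice is a sweeping argument in which I exhibit a separating surface. Concretely, I plan to show that \(F_a\) lies in the region bounded by \(F_b\) and the part of \(\partial T\) on the \(e\)-side of \(\gamma_b\). To do this I check each triangle of \(F_a\), namely \((x_a,u,v)\) with \(uv\) a boundary segment of \(\gamma_a\). By the nesting just established, segment \(uv\) lies in the boundary region cut off by \(\gamma_b\) on the side containing \(m\). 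Since \(x_a\) is strictly closer to \(m\) than \(x_b\) along \(L\), it should lie on the same side of the fan \(F_b\).

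The key sub-claim is that the piecewise linear disk \(F_b\), together with \(\partial T\), separates \(T\) into two components, and that \(x_a\) and the segments of \(\gamma_a\) lie in the same one. Because both components are star-shaped with respect to their respective points on \(L\), the triangles of \(F_a\) stay inside. I would verify the star-shapedness face by face: intersect each fan triangle of \(F_b\) with the plane through \(L\) and a vertex of \(T\), and reduce to a 2D statement about nested polygonal arcs fanned from ordered points on a common segment.

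The main obstacle will be this last convexity and star-shapedness step. Fans from distinct apexes on a common line could in principle cross near the side edges, because octagon vertices there are ordered but not evenly spaced geometrically. If the direct argument fails, I would instead argue via a linear homotopy. Moving the points on each edge and the Steiner points monotonically to a symmetric "model" configuration preserves all orderings and introduces no crossings, since a crossing would first appear as a vertex–triangle or edge–edge contact that violates the ordering. Disjointness can then be checked explicitly in the symmetric model, where the fans are level sets of a linear function.
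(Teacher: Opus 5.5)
There is a genuine gap, and it starts with where you anchor $L$. It is true that the loop through $p_{k+a}$ also passes through $p_{k-1-a}$, but that is a statement about indices only. The intersection points may sit anywhere on $e$, so the midpoint $m$ of the edge need not lie in the middle gap $(p_{k-1},p_k)$. Suppose, for example, that $k\ge 2$ and all $2k$ points of $e$ lie on one side of $m$. Then $m$ is on the same side of every octagon as the endpoints of $e$, whereas $\gamma_a$ lies on the other side of $\gamma_b$. So your claim that the segments of $\gamma_a$ lie on the $m$-side of $\gamma_b$ fails, and so does the conclusion. To see this, note that the plane spanned by $L$ and $e$ contains the fan edges $p_{k-1-b}x_b$, $p_{k+b}x_b$ and $p_{k-1-a}x_a$. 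The triangle $p_{k-1-b}\,x_b\,p_{k+b}$ meets $L$ only at $x_b$. The segment $p_{k-1-a}x_a$ starts just inside this triangle, since its foot lies strictly between the outer feet, and ends at $x_a\neq x_b$, which is outside it. It cannot leave through $e$ or through $x_b$, so it crosses $p_{k-1-b}x_b$ or $p_{k+b}x_b$, and the two fans intersect. The paper's proof avoids this by running the Steiner segment from a point of the gap $(p_{d-1},p_d)$ on $ij$ to the corresponding gap point on $kl$; the main text's word ``midpoint'' must be read this way. You need the same anchoring. You should also derive the nesting on $e'$ (innermost on $e$ is outermost on $e'$) by tracing one loop segment by segment, rather than merely expecting it.

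Even with the right anchor, the heart of the proof is missing. You assert that the components of $T\setminus F_b$ are star-shaped and then flag this as the main obstacle. Your fallback cannot work as stated: no octagon is planar, because a plane through its two points on $e$ contains $e$ and cannot also contain the skew edge $e'$. Hence there is no model in which the fans are level sets of a linear function.

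The missing idea is angular monotonicity about $L$, which completes the slicing you set up at the start. Let $\hat m\in(p_{k-1},p_k)$, let $\hat m'$ be the analogous gap point on $e'$, and let $\theta$ be the angle about $L=\hat m\hat m'$. The two halves of $e\setminus\{\hat m\}$ sit at $\theta=0$ and $\theta=\pi$. The two halves of $e'\setminus\{\hat m'\}$ sit at $\theta=\phi$ and $\theta=\phi+\pi$ for some $\phi\in(0,\pi)$. Each octagon meets $e$ on both sides of $\hat m$ and $e'$ on both sides of $\hat m'$. So its eight vertices have angles in $0,\,(0,\phi),\,\phi,\,(\phi,\pi),\,\pi,\,(\pi,\pi+\phi),\,\pi+\phi,\,(\pi+\phi,2\pi)$, in cyclic order.

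A straight segment disjoint from $L$ sweeps $\theta$ monotonically through less than $\pi$, so $\theta$ increases along every segment of the octagon. Hence each closed half-plane $H_\theta$ bounded by $L$ meets each octagon exactly once, and meets each fan in a single segment from its apex. In the convex slice $H_\theta\cap T$, the two fans are therefore chords from $x_a,x_b\in[\hat m,\hat m']$ to the boundary path $H_\theta\cap\partial T$ from $\hat m$ to $\hat m'$. Since $\gamma_a$ separates $\hat m$ from $\gamma_b$ on $\partial T$, that path meets $\gamma_a$ before $\gamma_b$. This matches the order $\hat m,x_a,x_b$ on $L$, so the chords' endpoints do not interleave and the chords are disjoint. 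This per-slice convexity is exactly the star-shapedness you were after.

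The argument is also more global than the paper's own. The paper compares only fan triangles based at the same corner of the same face, using the criterion that the segment nearer the corner must receive the Steiner point farther from that face. The slicing argument also handles pairs of triangles based on different faces.
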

\textsc{Proof.}
    In the octagons configuration, our edge coordinates have a $(d,d)$ pattern; \ie{}, edges have the coordinates $d, d, 2d$, with opposite
    \setlength{\columnsep}{1em}
    \setlength{\intextsep}{0em}
    \begin{wrapfigure}{r}{80pt}
    \includegraphics{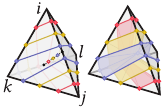}
    \end{wrapfigure}
    \noindent{}edges having the same value (see inset).
    We select a segment inside the tetrahedron to place the Steiner points. Consider edges $\ij$ and $kl$ to have $2d$ intersections. We choose this segment to connect a point between the $d$'th and $d+1$'th points (out of $2d$ ordered points) on edge $\ij$ to the similar point on edge $kl$.
    We prove that within every face $\ijk$, triangles that are made with segments in $\ijk$ do not intersect each other. We only need to prove this for one face, and the rest are proved by symmetry.
    \setlength{\columnsep}{0em}
    \setlength{\intextsep}{0em}
    \begin{wrapfigure}{r}{80pt}
    \includegraphics{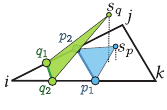}
    \end{wrapfigure}

    Take two segments on face $\ijk$, with endpoints $p_1, p_2$ and $q_1,q_2$ both at corner $i$, the latter being closer to corner $i$, and connected to their Steiner points $s_p$ and $s_q$ that is above the face $\ijk$ (see inset); also note that by construction, both Steiner points are on the same side of the segment $q_1q_2$. For the two triangles $p_1,p_2,s_p$ and $q_1,q_2,s_q$ to not intersect, it is sufficient to have $s_q$ be higher than $s_p$.
    Our placement of Steiner points with respect to every face exactly satisfies this condition: if a segment is closer to a corner, its Steiner point has a bigger distance to the face.
\qed

\section{Ablation: the Subgrid Mod 2 Algorithm}
\label{app:mod2:subgrid}

Subgrid marching tetrahedra uses two kinds of information that are unavailable in classic marching: the \emph{number} of intersections \(e_{\ij}\) along each edge, and the \emph{locations} of the intersections.  It is natural to wonder whether both are essential to the improved accuracy of our method.

Here we perform an ablation where we provide the exact input locations, but still limit output to at most one intersection per edge (as in classic marching).
In particular, the parity of the intersection count, \(\hspace{-2mm}\mod\!(e_{\ij},2)\), determines whether the two endpoints of edge \(\ij\) are on different sides of a closed surface \(\Scal\). This is enough information for detecting \textit{sign changes} across an edge, as in marching tetrahedra.
We then extract an aggregated intersection location along the edge by averaging all of the true intersection positions.  Note that if there is only one intersection, the average equals the exact location of the zero set.
In short, this modification improves the accuracy of the primal vertex positions, but produces the same mesh connectivity as classic marching tetrahedra.

We plot the performance of this modified marching tetrahedra, which we call ``subgrid mod 2,'' in \figref{distance:plots:ablation} and show some examples in \figref{mod2:ablation}. Subgrid mod 2 improves slightly over classic marching tetrahedra, but produces much more distortion than the full subgrid marching tetrahedra algorithm.

\vfill

\newpage

\begin{figure}[H]
   \includegraphics{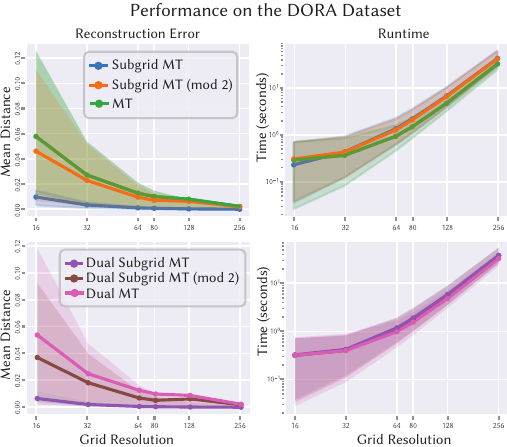}
   \caption{Here we show the accuracy and performance of the subgrid mod 2 ablation. Its error profile is much closer to classic marching tetrahedra than to subgrid marching tetrahedra, emphasizing the importance of using \emph{all} intersections along each edge.}
   \label{fig:distance:plots:ablation}
\end{figure}

\begin{figure}[H]
    \vspace{\baselineskip}
    \includegraphics{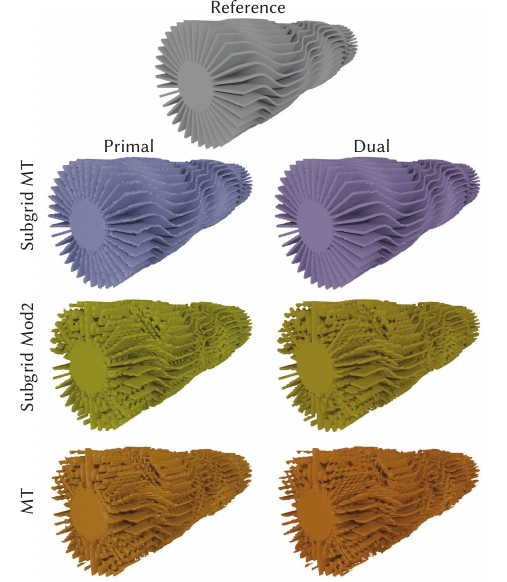}
    \caption{Comparison of our subgrid approach against its mod 2 version and marching tetrahedra, in a $100^3$ grid.  Subgrid mod 2 has slightly better geometric quality than marching tetrahedra but still cannot produce more than one disc per tetrahedron.}
    \label{fig:mod2:ablation}
\end{figure}

\end{document}